\def\bm\chi{\mbox{\boldmath$\chi$}}
\let\xker=\ker \def\ker{{\xker\,}}
\newtheorem{theorem}{Theorem}[section]
\newtheorem{proposition}[theorem]{Proposition}
\newtheorem{corollary}[theorem]{Corollary}
\newtheorem{lemma}[theorem]{Lemma}
\newtheorem{definition}[theorem]{Definition}
\newtheorem{remark}[theorem]{Remark}
\numberwithin{equation}{section}
\newcommand{\ba}{\begin{array}}
\newcommand{\ea}{\end{array}}
\newcommand{\bad}{\begin{array*}}
\newcommand{\ead}{\end{array*}}
\newcommand{\bea}{\begin{eqnarray}}
\newcommand{\eea}{\end{eqnarray}}
\newcommand{\bead}{\begin{eqnarray*}}
\newcommand{\eead}{\end{eqnarray*}}
\newcommand{\be}{\begin{equation}}
\newcommand{\ee}{\end{equation}}
\newcommand{\bed}{\begin{displaymath}}
\newcommand{\eed}{\end{displaymath}}
\newcommand{\bl}{\begin{lemma}}
\newcommand{\el}{\end{lemma}}
\newcommand{\bp}{\begin{proposition}}
\newcommand{\ep}{\end{propostion}}
\newcommand{\bt}{\begin{theorem}}
\newcommand{\et}{\end{theorem}}
\newcommand{\bc}{\begin{corollary}}
\newcommand{\ec}{\end{corollary}}
\newcommand{\br}{\begin{remark}}
\newcommand{\er}{\end{remark}}
\newcommand{\bd}{\begin{definition}}
\newcommand{\ed}{\end{definition}}
\newenvironment{proof}%
{\begin{sloppypar}\noindent{\bf Proof.}}%
{\hspace*{\fill}$\square$\end{sloppypar}}
\newenvironment{proof1}%
{\begin{sloppypar}\noindent{\bf Proof of Theorem $\ref{THM 2}$.}}%
{\hspace*{\fill}$\square$\end{sloppypar}}
\newenvironment{proof2}%
{\begin{sloppypar}\noindent{\bf Proof of Theorem $\ref{THM 3}$.}}%
{\hspace*{\fill}$\square$\end{sloppypar}}
\title{\bf On the spectrum of the lattice spin-boson Hamiltonian for any coupling: 1D case}
\author{Mukhiddin~Muminov\\
Faculty of Scinces, Universiti Teknologi Malaysia (UTM)\\
81310 Skudai, Johor Bahru, Malaysia\\
E-mail: mmuminov@mail.ru \and
Hagen~Neidhardt\\
Weierstrass Institute for Applied Analysis and  Stochastics\\
Mohrenstr. 39, D-10117 Berlin, Germany\\
E-mail: neidhard@wias-berlin.de \and
Tulkin~Rasulov\\
Faculty of Physics and Mathematics, Bukhara State University\\
M. Ikbol str. 11, 200100 Bukhara, Uzbekistan\\
E-mail: rth@mail.ru}
\begin{document}

\maketitle

\vspace{-7mm}
\noindent
{\bf Keywords:} spin-boson Hamiltonian, block operator matrix, bosonic
Fock space, annihilation and creation operators, Birman-Schwinger
principle, essential, point and discrete spectrum\\

\vspace{-5mm}
\noindent
{\bf Subject classification: } 81Q10, 35P20, 47N50

\noindent {\bf Abstract:}
A lattice model of radiative decay (so-called spin-boson model) of a two level atom and at most
two photons is considered. The location of the essential spectrum is described.
For any coupling constant the finiteness
of the number of eigenvalues below the bottom of its essential spectrum is proved.
The results are obtained by considering a more general model $H$ for which
the lower bound of its essential spectrum is estimated. Conditions which guarantee
the finiteness of the number of eigenvalues of $H,$
below the bottom of its essential spectrum are found.
It is shown that the discrete spectrum might be infinite if
the parameter functions are chosen in a special form.

\section{Introduction}

Block operator matrices are matrices where the entries are
linear operators between Banach or Hilbert spaces \cite{CT08}. One
special class of block operator matrices are Hamiltonians
associated with systems of non-conserved number of quasi-particles
on a lattice. Their number can be unbounded as in the case of spin-boson models
or bounded as in the case of "truncated" $ $ spin-boson models.
They arise, for example, in the theory of solid-state physics \cite{Mog},
quantum field theory \cite{Frid} and statistical physics
\cite{Mal-Min, MS}.

In a well-known model of radiative decay (the so-called spin-boson model)
it is assumed that an atom, which can be in two states -- ground state with energy
$-\varepsilon$ and excited state with energy $\varepsilon$ -- emits and absorbs photons,
going over from one state to the other \cite{HuebnerSpohn, MS, Spohn89, ZhM}.
The energy operator of such a system is given
by the (formal) expression \cite{HuebnerSpohn, MS, Spohn89, ZhM}
\begin{eqnarray}
{\cal A}:=\varepsilon \sigma_z+\int_{{\Bbb R}^{\rm d}} w(k)a^*(k)a(k)dk+\alpha \sigma_x
\int_{{\Bbb R}^{\rm d}} v(k)(a^*(k)+a(k))dk
\label{spin-boson Hamiltonian}
\end{eqnarray}
and acts in the Hilbert space
\begin{eqnarray}
{\cal L}:={\Bbb C}^2 \otimes {\mathcal F}_{\rm s}(L_2({\Bbb R}^{\rm d})),
\label{Eq02}
\end{eqnarray}
where ${\Bbb C}^2$ is the state of the two-level atom and ${\mathcal F}_{\rm s}(L_2({\Bbb R}^{\rm d}))$
is the symmetric Fock space for bosons. In the following we consider the lattice analog
of the standard spin-boson Hamiltonian cf.~\cite{Mog}.
In the "algebraic" sense, a lattice spin-boson Hamiltonian is similar to a
standard one with only the difference is that ${\cal A}$ does not act
in the Euclidean space ${\Bbb R}^{\rm d}$ but on a ${\rm d}$--dimensional torus ${\Bbb T}^{\rm d}.$
This means that we have to replace ${\Bbb R}^{\rm d}$ by ${\Bbb T}^{\rm d}$
in formulas (\ref{spin-boson Hamiltonian}) and (\ref{Eq02}).
We write elements $F$ of the space ${\cal L}$ in the form
$$
F=\{f_0^{(\sigma)},f_1^{(\sigma)}(k_1),f_2^{(\sigma)}(k_1,k_2),\ldots,f_n^{(\sigma)}(k_1,k_2,\ldots,k_n),\ldots\}
$$
of functions of an increasing number of variables $(k_1,\ldots,k_n),$ $k_i \in {\Bbb T}^{\rm d},$ and
a discrete variable $\sigma=\pm;$ the functions are symmetric with respect to the variables $k_i,$
$i=1,\ldots,n,$ $n \in {\Bbb N}.$ The norm in ${\mathcal L}$ is given by
\begin{eqnarray}
\|F\|^2:=\sum\limits_{\sigma=\pm}|f_0^{(\sigma)}|^2+\sum\limits_{\sigma, n} \int_{({\Bbb T}^{\rm d})^n}
|f_n^{(\sigma)}(k_1,\ldots,k_n)|^2 dk_1 \ldots dk_n.
\label{norm}
\end{eqnarray}

In the expression (\ref{spin-boson Hamiltonian}), the operators $a^*(k)$ and $a(k)$ are "creation and annihilation"
$ $ operators, $\varepsilon>0,$
$$
\sigma_z:=\left( \begin{array}{cc}
1 & 0\\
0 & -1\\
\end{array}
\right), \quad \sigma_x:=\left( \begin{array}{cc}
0 & 1\\
1 & 0\\
\end{array}
\right)
$$
are Pauli matrices, $w(k)$ is the dispersion of the free field,
$\alpha v(k)$ is the coupling between the atoms and
the field modes, $\alpha>0$ is the coupling constant.

However, the problem of complete spectral description of the operator ${\cal A}$ still seems
rather difficult. In this connection, it is natural to consider simplified ("truncated") models
\cite{HuebnerSpohn,MS,ZhM} that
differ from the model described above with respect to the number of bosons which is bounded
by $N,$ $N \in {\Bbb N}.$ The Hilbert state space of each such model is then the space
${\mathcal L}_{N}:={\Bbb C}^2 \otimes {\mathcal F}_{\rm s}^{(N)}(L_2({\Bbb T}^{\rm d})) \subset {\mathcal L},$
where
$$
{\mathcal F}_{\rm s}^{(N)}(L_2({\Bbb T})):={\Bbb C} \oplus L_2({\Bbb T}^{\rm d}) \oplus L_2^{\rm sym} (({\Bbb
T}^{\rm d})^2) \oplus \ldots \oplus L_2^{\rm sym} (({\Bbb T}^{\rm d})^N).
$$
Here $L_2^{\rm sym} (({\Bbb T}^{\rm d})^n)$ is the Hilbert space of symmetric functions of $n$
variables, and the norm in ${\mathcal L}_N$ is introduced as in (\ref{norm}).
Then the truncated Hamiltonian ${\cal A}_N$ is given in ${\mathcal L}_N$ by
${\cal A}_N:=P_{{\mathcal L}_N} {\cal A} P_{{\mathcal L}_N},$
where $P_{{\mathcal L}_N}$ is the projection of the space ${\mathcal L}$ onto the subspace ${\mathcal L}_N,$
and ${\cal A}$ is the Hamiltonian (\ref{spin-boson Hamiltonian}).

The standard spin-boson Hamiltonian with $N=1,2$ was completely studied
in \cite{MS} for small values of the parameter $\alpha.$ The case $N=3$ was considered
in \cite{ZhM}. The existence of wave operators and their asymptotic completeness were
proven there. In \cite{HuebnerSpohn}, the case of arbitrary $N$ was investigated.
In particular, using a Mourre
type estimate, a complete spectral characterization of the spin-boson Haniltonian  are given
for sufficiently small, but nonzero coupling constant.

Let us introduce the corresponding model operator for the case $N=2.$
For simplicity we denote ${\mathcal H}_0:={\Bbb C},$ ${\mathcal H}_1:=L_2({\Bbb
T}^{\rm d}),$ ${\mathcal H}_2:=L_2^{\rm sym}(({\Bbb T}^{\rm d})^2)$ and
${\mathcal H}:={\mathcal F}_{\rm s}^{(2)}(L_2({\Bbb T}^{\rm d})).$

In the Hilbert space ${\mathcal H}$ we consider the model operator $H$
that admit an $3 \times 3$ tridiagonal block operator matrix
representation
$$
H:=\left( \begin{array}{ccc}
H_{00} & H_{01} & 0\\
H_{01}^* & H_{11} & H_{12}\\
0 & H_{12}^* & H_{22}\\
\end{array}
\right)
$$
with the entries $H_{ij}: {\mathcal H}_j \to {\mathcal H}_i,$ $i
\le j,$ $i,j=0,1,2$ defined by
\begin{eqnarray*}
&&H_{00}f_0=w_0f_0, \quad H_{01}f_1=\int_{{\Bbb T}^{\rm d}}
v_0(t)f_1(t)dt,\\
&& (H_{11}f_1)(x)=w_1(x)f_1(x),\quad (H_{12}f_2)(x)= \int_{{\Bbb T}^{\rm d}} v_1(t) f_2(x,t)dt,\\
&& (H_{22}f_2)(x,y)=w_2(x,y)f_2(x,y), \,\, f_i \in {\mathcal H}_i, \,\, i=0,1,2;
\end{eqnarray*}
where $w_0$ is a real number,
$w_1(\cdot)$ $v_i(\cdot),$ $i=0,1,$ are real-valued analytic
functions on ${\Bbb T}^{\rm d}$ and $w_2(\cdot, \cdot)$ is a real-valued
symmetric analytic function on $({\Bbb T}^{\rm d})^2.$
Under these assumptions the operator $H$ is bounded and
self-adjoint.

An important problem in the spectral theory of such
model operators is to study the number of eigenvalues located outside the essential spectrum.
We remark that the operator $H$ has been considered before and the following results were obtained:
The location of the essential spectrum of $H$ has been described in \cite{LR03} for any ${\rm d} \geq 1.$
The existence of infinitely many eigenvalues below the bottom of the essential spectrum of $H$ has been
announced in \cite{LR03-1} for ${\rm d}=3.$ Its complete proof was given in \cite{ALR1} for a
special parameter functions and in \cite{ALR} for more general case. An asymptotics of the form
${\mathcal U}_0|\log|\lambda||$ ($0<{\mathcal U}_0<\infty$) for the number of eigenvalues on the left
of $\lambda,$ $\lambda<\min\sigma_{\rm ess}(H)$ was obtained in \cite{ALR}. The conditions for the finiteness
of the discrete spectrum of $H$ was found in \cite{Ras07} for the case ${\rm d}=3.$

In the present paper we consider the case ${\rm d}=1.$
We study the relation between the lower
bounds of the two-particle and three-particle branches of the
essential spectrum of $H.$ Under natural assumptions on the
parameters we prove the
finiteness of the discrete spectrum of $H$ using the Birman-Schwinger principle.
In this analysis the finiteness
of the points which gives global minima for the function $w_2(\cdot,\cdot)$
is important.
We give a counter-example, which shows the infiniteness of the discrete spectrum if
the number of such points is not finite.
For this case the exact view of the eigenvalues
and eigenvectors are found, their multiplicities are calculated. We notice that a part
of the results is typical for ${\rm d}=1,$
in fact, they do not have analogues in the case ${\rm d} \geq 2.$

Using a connection between the operators ${\mathcal A}_2$ and $H,$
and applying obtained results from above we describe the essential spectrum of ${\mathcal A}_2,$
and show that the operator ${\cal A}_2$ has finitely many eigenvalues below the bottom of its essential spectrum
for {\it any} coupling constant $\alpha.$

The paper is organized as follows. Section 1 is
an introduction. In Section 2, the main results for $H$
are formulated. In Section 3, we estimate the
lower bound of the essential spectrum of $H.$ In Section 4, we
apply the Birman-Schwinger principle to $H.$
Section 5 is devoted to the proof of the finiteness of the
discrete spectrum of $H.$ In Section 6 we discuss the case when
the discrete spectrum of $H$ is infinite. In Section 7,
the finiteness and infiniteness of the discrete spectrum of $H$
is established, when the lower bounds of the two- and three-particle
branches of the essential spectrum are coincide. An application
to lattice model of radiative decay of a two level atom and at most
two photons (truncated spin-boson model on a lattice) illustrates our results.

\section{The main results for $H$}

The spectrum, the essential spectrum, the point spectrum and the discrete spectrum of
a bounded self-adjoint operator will be denoted by
$\sigma(\cdot),$ $\sigma_{\rm ess}(\cdot),$ $\sigma_{\rm p}(\cdot)$ and $\sigma_{\rm
disc}(\cdot),$ respectively.

To study the spectral properties of $H$ we introduce a following
family of bounded self-adjoint operators (generalized Friedrichs
models) $h(x),$ $x \in {\Bbb T},$ which acts in ${\mathcal H}_0
\oplus {\mathcal H}_1$ as
$$
h(x):=\left( \begin{array}{cc}
h_{00}(x) & h_{01}\\
h_{01}^* & h_{11}(x)\\
\end{array}
\right),
$$
where
\begin{eqnarray*}
&& h_{00}(x)f_0=w_1(x)f_0,\quad h_{01}f_1=\frac{1}{\sqrt{2}}
\int_{\Bbb T} v_1(t) f_1(t)dt,\\
&& (h_{11}(x)f_1)(y)=w_2(x,y)f_1(y),\quad f_i \in {\cal H}_i,\quad i=0,1.
\end{eqnarray*}

Let the operator $h_0(x),$ $x \in {\Bbb T}$ act in ${\mathcal H}_0
\oplus {\mathcal H}_1$ as
$$
h_0(x):=\left( \begin{array}{cc}
0 & 0\\
0 & h_{11}(x)\\
\end{array}
\right).
$$
The perturbation $h(x)-h_0(x)$ of the operator $h_0(x)$ is a
self-adjoint operator of rank 2. Therefore in accordance with the
Weyl theorem about the invariance of the essential spectrum under
the finite rank perturbations, the essential spectrum of the
operator $h(x)$ coincides with the essential spectrum of $h_0(x).$
It is evident that $\sigma_{\rm ess}(h_0(x))=[m_x, M_x],$ where
the numbers $m_x$ and $M_x$ are defined by
$$
m_x:= \min_{y \in {\Bbb T}} w_2(x,y) \quad \mbox{and} \quad M_x:=
\max_{y \in {\Bbb T}} w_2(x,y).
$$
This yields $\sigma_{\rm ess}(h(x))=[m_x, M_x].$

For any $x\in {\Bbb T}$ we define an analytic function
$\Delta(x\,; \cdot)$ (the Fredholm determinant associated with the
operator $h(x)$) in ${\Bbb C} \setminus [m_x, M_x]$ by
$$
\Delta(x\,; z):=w_1(x)-z-\frac{1}{2} \int_{\Bbb T}
\frac{v_1^2(t)dt}{w_2(x,t)-z}.
$$

Note that for the discrete spectrum of $h(x)$ the equality
$$
\sigma_{\rm disc}(h(x))=\{z \in {\Bbb C} \setminus [m_x, M_x]:\,
\Delta(x\,; z)=0 \}
$$
holds (see Lemma \ref{LEM 1}).

The following theorem \cite{LR03} describes the location of
the essential spectrum of $H$ by the spectrum of the family $h(x)$
of generalized Friedrichs model.

\begin{theorem}\label{THM 1} For the essential spectrum of $H$
the following equality holds
$$
\sigma_{\rm ess}(H)=\sigma \cup [m, M], \quad
\sigma:=\bigcup\limits_{x \in {\Bbb T}} \sigma_{\rm disc}(h(x))
$$
where the numbers $m$ and $M$ are defined by
$$
m:= \min\limits_{x,y\in {\Bbb T}} w_2(x,y) \quad \mbox{and} \quad
M:= \max\limits_{x,y\in {\Bbb T}} w_2(x,y).
$$
\end{theorem}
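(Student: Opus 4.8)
The plan is to compute the essential spectrum of $H$ via a careful analysis of the Weyl sequences, exploiting the tridiagonal block structure. The key observation is that the ``two-particle channel'' of $H$ — the interaction between $\cH_0$ and $\cH_1$ through $H_{01}$, fibered over the extra variable that parametrizes $\cH_1 \cong L_2(\T)$ — is governed by the family $h(x)$, while the ``three-particle channel'' $\cH_2 \cong L_2^{\rm sym}((\T)^2)$ contributes the flat band $[m,M]$ coming from the multiplication operator $H_{22}$. So I would first decompose $\cH = \cH_0 \oplus \cH_1 \oplus \cH_2$ and note that, via the direct integral decomposition $\cH_1 \oplus L_2(\T \times \T) \cong \int_\T^\oplus (\cH_0 \oplus \cH_1)\, dx$ (writing a function $f_1(x)$ on the first copy as $f_0$ in the fiber over $x$, and $f_2(x,y)$ as $f_1(y)$ in the fiber over $x$), the operator built from $H_{11}, H_{12}, H_{22}$ acts fiberwise as exactly $h(x)$ — up to the normalization $\frac{1}{\sqrt 2}$ already built into $h_{01}$. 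The remaining piece is the genuinely ``$0$-dimensional'' coupling $H_{00}, H_{01}$ acting on $\cH_0$, which is a finite-rank (rank $\le 2$) perturbation relative to everything else.

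Concretely, I would split the proof into the two inclusions. For $\sigma \cup [m,M] \subseteq \sigma_{\rm ess}(H)$: the inclusion $[m,M] \subseteq \sigma_{\rm ess}(H)$ follows by constructing singular Weyl sequences supported purely in $\cH_2$, using functions $f_2$ that concentrate in momentum near a point $(x_0,y_0)$ where $w_2(x_0,y_0) = z \in [m,M]$ and are oscillatory enough that $H_{12}^* f_2 \to 0$; since $H_{22}$ is multiplication by $w_2$, these are Weyl sequences for $H$ at $z$. For $z \in \sigma = \bigcup_x \sigma_{\rm disc}(h(x))$, pick $x_0$ with $\Delta(x_0; z) = 0$, take the corresponding eigenvector $(c_0, \psi(y))$ of $h(x_0)$, and ``smear'' it over a shrinking neighbourhood of $x_0$: place $c_0 \chi_n(x)$-type mass appropriately and $\psi(y)\chi_n(x)$ in the $\cH_2$-slot (with $\chi_n$ an $L_2$-normalized bump at $x_0$ of width $\to 0$); the off-diagonal mismatch terms vanish in the limit because $w_1, w_2, v_1$ are continuous, yielding a singular Weyl sequence. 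For the reverse inclusion $\sigma_{\rm ess}(H) \subseteq \sigma \cup [m,M]$: argue that $\C \setminus (\sigma \cup [m,M])$ is in the resolvent set modulo compacts — i.e., for such $z$, the fiber operators $h(x) - z$ are uniformly invertible in $x$ (using compactness of $\T$ and continuity/analyticity of the parameter functions, so $\Delta(x;z)$ is bounded away from $0$), hence the direct-integral part $H_{1,2\text{-block}} - z$ is boundedly invertible, and then adding back the rank-$\le 2$ coupling $H_{00},H_{01}$ changes the spectrum only by eigenvalues, not essential spectrum (Weyl's theorem on finite-rank perturbations).

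The main obstacle, and the step deserving the most care, is the reverse inclusion — specifically verifying that $h(x) - z$ is \emph{uniformly} boundedly invertible for $z$ outside $\sigma \cup [m,M]$. One must check both that $z \notin \sigma_{\rm ess}(h(x)) = [m_x, M_x]$ for every $x$ (immediate since $[m_x,M_x] \subseteq [m,M]$) and that $\Delta(x;z) \ne 0$ for every $x$, and then upgrade pointwise invertibility to a uniform bound $\sup_x \|(h(x)-z)^{-1}\| < \infty$. The latter needs $\T$ compact together with joint continuity of $(x,z) \mapsto \Delta(x;z)$ on $\T \times K$ for compact $K \subset \C \setminus (\sigma \cup [m,M])$, which in turn requires care near the band edges where $z$ approaches $[m_x,M_x]$ for some $x$ but not others; the analyticity hypotheses on $w_1, w_2, v_1$ are what make this work. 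A secondary subtlety is handling the ``mul'' part — whether $h(x)$ can have a non-trivial multivalued part or whether $\Delta(x;\cdot)$ extends across the band — but since $h(x)$ is a bounded rank-two perturbation of a multiplication operator this is controlled exactly by Lemma~\ref{LEM 1}, which I would invoke. I expect the final assembly (combining the two inclusions and the finite-rank perturbation argument for the $H_{00}$ coupling) to be routine once these uniform-invertibility estimates are in place.
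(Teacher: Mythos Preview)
The paper does not prove Theorem~\ref{THM 1}; it quotes it from \cite{LR03} and uses it as input. So there is no in-paper proof to compare against, and your proposal is really a sketch of how the cited result is established.

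Your strategy is the standard one and is essentially how \cite{LR03} proceeds: Weyl sequences supported in $\cH_2$ give $[m,M]\subseteq\sigma_{\rm ess}(H)$, smeared eigenvectors of $h(x_0)$ give $\sigma\subseteq\sigma_{\rm ess}(H)$, and the reverse inclusion comes from showing that $H-z$ is Fredholm for $z\notin\sigma\cup[m,M]$ via a fibered analysis plus a finite-rank correction for the $\cH_0$-coupling. One point you gloss over deserves more care: your direct-integral identification $\cH_1\oplus L_2(\T\times\T)\cong\int_\T^\oplus(\cH_0\oplus\cH_1)\,dx$ uses the \emph{full} $L_2(\T^2)$, whereas $\cH_2=L_2^{\rm sym}(\T^2)$. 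The operator $H$ does not literally fiber over $x$ on the symmetric subspace; the adjoint $H_{12}^*$ produces the symmetrization $\tfrac{1}{2}\bigl(v_1(y)f_1(x)+v_1(x)f_1(y)\bigr)$, and this is exactly where the $\tfrac{1}{\sqrt2}$ in $h_{01}$ arises. The clean way to handle this is either to extend $H$ to $\cH_0\oplus\cH_1\oplus L_2(\T^2)$ (where it does fiber, with fiber operator $h(x)$ but without the $\tfrac{1}{\sqrt2}$) and then check that restriction to the symmetric sector does not change $\sigma_{\rm ess}$, or to work with Faddeev-type resolvent equations that respect the symmetry from the start. Either route closes the gap, but as written your decomposition does not quite match $H$.
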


The sets $\sigma$ and $[m, M]$ are called two- and three-particle
branches of the essential spectrum of $H,$ respectively.

Throughout this paper we assume that the function
$w_2(\cdot,\cdot)$ has a unique non-degenerate global minimum at the
point $(0,0) \in {\Bbb T}^2.$

For $\delta>0$ and $a \in {\Bbb T}$ we set
$$
U_\delta(a):=\{x \in {\Bbb T}: |x-a|<\delta\}.
$$

We remark that if $v_1(0)=0,$ then from analyticity of
$v_1(\cdot)$ on ${\Bbb T}$ it follows that there exist positive
numbers $C_1, C_2$ and $\delta$ such that the inequalities
\begin{eqnarray}
C_1 |x|^\alpha \le |v_1(x)| \le C_2 |x|^\alpha, \quad x \in
U_\delta(0)
\label{Eq1}
\end{eqnarray}
hold for some $\alpha \in {\Bbb N}.$ Since the function
$w_2(0,\cdot)$ has a unique non-degenerate global minimum at $y=0$ (see
proof of Lemma \ref{LEM 2}), one
can easily seen from the estimate \eqref{Eq1} that for any $x \in {\Bbb T}$ the integral
$$
\int_{\Bbb T} \frac{v_1^2(t)dt}{w_2(x,t)-m}
$$
is positive and finite. The Lebesgue dominated convergence theorem
yields
$$
\Delta(0\,; m)=\lim\limits_{x\to 0} \Delta(x\,; m),
$$
and hence if $v_1(0)=0,$ then the function $\Delta(\cdot\,; m)$ is
continuous on ${\Bbb T}.$

Let us denote by $E_{\rm min}$ the lower bound of the essential
spectrum of $H.$

The main results of the present paper as follows.

\begin{theorem}\label{THM 2}
For the lower bound $E_{\rm min}$ the following assertions hold:\\
{\rm (i)} If $v_1(0) \neq 0,$ then $E_{\rm min}<m;$\\
{\rm (ii)} If $v_1(0)=0$ and $\min\limits_{x \in
{\Bbb T}} \Delta(x\,; m) < 0,$ then $E_{\rm min}<m;$\\
{\rm (iii)} If $v_1(0)=0$ and $\min\limits_{x \in {\Bbb T}}
\Delta(x\,; m) \ge 0,$ then $E_{\rm min}=m.$
\end{theorem}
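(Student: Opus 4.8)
The plan is to analyze the lower bound $E_{\rm min}$ via the relationship between $\sigma_{\rm ess}(H)$ and the fibered operators $h(x)$, exactly as set up before Theorem \ref{THM 1}. By that theorem, $\sigma_{\rm ess}(H)=\sigma\cup[m,M]$ with $\sigma=\bigcup_{x\in{\Bbb T}}\sigma_{\rm disc}(h(x))$. Since $m=\min w_2$, we always have $m=\min[m,M]\le\min\sigma_{\rm ess}(h_0(x))$ only at $x=0$, and more importantly $E_{\rm min}\le m$ because $m\in[m,M]\subset\sigma_{\rm ess}(H)$. So the content of (i)--(iii) is a dichotomy: either some eigenvalue $z=z(x)$ of $h(x)$ dips strictly below $m$ (forcing $E_{\rm min}<m$), or no such eigenvalue exists and $E_{\rm min}=m$ exactly. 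The first step is therefore to reduce everything to the existence of a point $x\in{\Bbb T}$ and a real $z<m_x$ with $\Delta(x;z)=0$ — equivalently, by Lemma \ref{LEM 1} (the stated characterization of $\sigma_{\rm disc}(h(x))$), to the existence of $x$ with $\sigma_{\rm disc}(h(x))\cap(-\infty,m)\ne\varnothing$.

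Second, I would study the scalar function $z\mapsto\Delta(x;z)$ on $(-\infty,m_x)$ for fixed $x$. It is real-analytic there; its derivative is $\partial_z\Delta(x;z)=-1-\tfrac12\int_{\Bbb T}\frac{v_1^2(t)\,dt}{(w_2(x,t)-z)^2}<0$, so $\Delta(x;\cdot)$ is strictly decreasing on $(-\infty,m_x)$, and $\Delta(x;z)\to+\infty$ as $z\to-\infty$. Hence $\Delta(x;\cdot)$ has a (unique) zero in $(-\infty,m_x)$ if and only if $\lim_{z\uparrow m_x}\Delta(x;z)<0$, where the limit is $\Delta(x;m_x)\in[-\infty,\infty)$ (the integral may diverge to $+\infty$, in which case the limit is $-\infty$ and a zero certainly exists; otherwise the limit equals the stated integral expression). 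For (i): when $v_1(0)\ne0$, take $x=0$; then $w_2(0,\cdot)$ has its non-degenerate minimum at $y=0$, so $w_2(0,t)-m\asymp t^2$ near $0$ while $v_1^2(t)\to v_1^2(0)>0$, whence $\int_{\Bbb T}\frac{v_1^2(t)\,dt}{w_2(0,t)-m}=+\infty$; thus $\Delta(0;m)=-\infty$ and $h(0)$ has an eigenvalue below $m_0=m$, giving $E_{\rm min}<m$. For (ii): when $v_1(0)=0$, the paragraph before the theorem guarantees $\Delta(\cdot;m)$ is continuous on ${\Bbb T}$ and the integral is finite for every $x$; if $\min_{x}\Delta(x;m)<0$, pick $x_0$ attaining a negative value — then $\Delta(x_0;m)<0\le\Delta(x_0;m_{x_0})$ forces (using monotonicity and $m\le m_{x_0}$) a zero of $\Delta(x_0;\cdot)$ in $(-\infty,m)$, i.e.\ $h(x_0)$ has an eigenvalue $<m$, so $E_{\rm min}<m$.

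For (iii) the direction $E_{\rm min}\le m$ is free as noted; the real work is $E_{\rm min}\ge m$, i.e.\ showing $(-\infty,m)\cap\sigma(H)=\varnothing$ — no essential spectrum and no eigenvalues below $m$. The absence of essential spectrum below $m$ follows from Theorem \ref{THM 1} together with the hypothesis: $[m,M]$ contributes nothing below $m$, and for the two-particle branch, $\min_x\Delta(x;m)\ge0$ plus monotonicity of $\Delta(x;\cdot)$ shows $\Delta(x;z)>0$ for all $x$ and all $z<m\le m_x$, so $\sigma_{\rm disc}(h(x))\subset[m,\infty)$ for every $x$ and hence $\sigma\subset[m,\infty)$. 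The remaining point — no eigenvalue of $H$ in $(-\infty,m)$ — is the main obstacle: I would prove $H\ge m$ on the relevant region by a quadratic-form / Schur-complement argument, writing $\langle(H-z)F,F\rangle$ for $z<m$ and eliminating the $f_0$ and $f_1$ components against $H_{22}-z\ge (m-z)I>0$; after the two successive Schur reductions one is left with the condition that the ``effective'' operator in $\cH_1$ is positive, and its fiberwise symbol is essentially $\Delta(x;z)$ (up to the $\tfrac1{\sqrt2}$ normalization and a positive factor), which is $>0$ by what was just shown. Concretely, I expect to decompose $H-z$ using the block structure, use that $H_{22}-z$ and $H_{11}-z$ restricted appropriately are boundedly invertible and positive for $z<m$, and verify that the final scalar Schur complement is exactly a positive multiple of $\Delta(x;z)$ on each fiber $x$; the care needed is in justifying the fiber decomposition of $H_{11}$ and $H_{12}$ over ${\Bbb T}$ (the operators $H_{11}$, $H_{22}$ act as multiplication, so this is a direct-integral argument) and in handling the boundary case $z\uparrow m$ by continuity. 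This shows $H-z>0$ for all $z<m$, hence $E_{\rm min}\ge m$, and combined with $E_{\rm min}\le m$ gives $E_{\rm min}=m$.
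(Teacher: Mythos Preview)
Your treatment of parts (i) and (ii) is correct and matches the paper's approach: monotonicity of $z\mapsto\Delta(x;z)$, the divergence $\Delta(0;z)\to-\infty$ as $z\uparrow m$ when $v_1(0)\neq0$, and the intermediate-value argument when $\min_x\Delta(x;m)<0$ are exactly what the paper does (packaged there as Lemmas~\ref{LEM 2} and~\ref{LEM 6}).

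There is, however, a genuine error in part (iii). You have misread the definition of $E_{\rm min}$: in the paper it is the bottom of the \emph{essential} spectrum of $H$, not of the full spectrum. Consequently, once you have shown (as you correctly do) that $\Delta(x;z)>0$ for every $x\in{\Bbb T}$ and $z<m$, so that $\sigma=\bigcup_x\sigma_{\rm disc}(h(x))\subset[m,\infty)$, Theorem~\ref{THM 1} immediately gives $\sigma_{\rm ess}(H)=\sigma\cup[m,M]\subset[m,M]$ and hence $E_{\rm min}=m$. That is the paper's entire proof of (iii); there is nothing more to do.

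The ``remaining point'' you identify --- showing $(-\infty,m)\cap\sigma(H)=\varnothing$ via a Schur-complement argument --- is not only unnecessary, it is \emph{false} in general and the proposed argument cannot work. For instance, take $v_0\equiv0$ and $w_0<m$; then $(f_0,0,0)$ is an eigenvector of $H$ with eigenvalue $w_0<m$, regardless of whether we are in case (iii). More substantively, your claim that after two Schur reductions the resulting operator on $\cH_1$ is fiberwise a positive multiple of $\Delta(x;z)$ is incorrect: eliminating $f_2$ via $(H_{22}-z)^{-1}$ produces on $\cH_1$ the operator
\[
f_1(x)\longmapsto \Delta(x;z)\,f_1(x)\;-\;\tfrac12\,v_1(x)\!\int_{\Bbb T}\frac{v_1(t)f_1(t)}{w_2(x,t)-z}\,dt,
\]
i.e.\ multiplication by $\Delta(x;z)$ \emph{minus} a nontrivial integral operator (and the $f_0$ block contributes a further rank-one term involving $v_0$ and $w_0$). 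That extra piece is precisely what gives rise to the Birman--Schwinger operator $T(z)$ of Section~4, and the possible existence of eigenvalues below $E_{\rm min}$ in case (iii) is exactly why Theorem~\ref{THM 3}(iii) is nontrivial.
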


\begin{theorem}\label{THM 3}
If one of the assertions\\
{\rm (i)} $v_1(0) \neq 0;$\\
{\rm (ii)} $v_1(0)=0$ and $\min\limits_{x \in
{\Bbb T}} \Delta(x\,; m) < 0;$\\
{\rm (iii)} $v_1(0)=0$ and $\min\limits_{x \in {\Bbb T}}
\Delta(x\,; m) > 0,$\\
is satisfied, then the operator $H$ has a finite number of eigenvalues lying
below $E_{\rm min}.$
\end{theorem}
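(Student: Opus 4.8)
The plan is to apply the Birman--Schwinger principle to the operator $H$ at the energy threshold $z = E_{\rm min}$ and count eigenvalues below $E_{\rm min}$ by studying the corresponding Birman--Schwinger operator. Since $H$ is a $3\times 3$ tridiagonal block operator matrix, for $z < E_{\rm min} \le m \le \min\sigma(H_{22})$ the block $H_{22}-z$ is boundedly invertible on $\cH_2$; performing a Schur--Frobenius (Feshbach) reduction, the number of eigenvalues of $H$ below $z$ equals the number of eigenvalues of the $2\times 2$ effective operator
$$
\widehat H(z) := \begin{pmatrix} H_{00}-z & H_{01} \\ H_{01}^* & H_{11}-z - H_{12}(H_{22}-z)^{-1}H_{12}^* \end{pmatrix}
$$
below zero (counting multiplicities). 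Iterating once more, for $z$ below the two-particle branch the scalar-valued/rank-one structure of $H_{01}$ reduces the count to that of a Birman--Schwinger-type integral operator built out of the fiber Fredholm determinants $\Delta(x\,;z)$.

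The key steps, in order, are as follows. First, I would make precise the reduction above: for $z<E_{\rm min}$ the operator $h(x)-z$ is invertible for every $x\in\T$ (its spectrum lies in $\sigma_{\rm disc}(h(x))\cup[m_x,M_x]$, all of which is $\ge E_{\rm min}$ in the relevant sense after taking the union over $x$), and one shows that $\dim E_H(-\infty,z) < \infty$ is equivalent to the finiteness of the number of characteristic values $\ge 1$ of a compact, positive Birman--Schwinger operator $G(z)$ acting on $\cH_1$ (or a space built from it), whose kernel involves $v_1(x)v_1(y)$ divided by $\Delta(x\,;z)$ and by the fiber resolvents of $h_{11}(x)$. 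Second, I would let $z\uparrow E_{\rm min}$ and show that $G(z)$ converges in norm (or at least that $\|G(z)\|$ and the number of its eigenvalues above $1$ stay controlled) to a limiting operator $G(E_{\rm min})$ — here the hypotheses of Theorem~\ref{THM 3} enter decisively: in cases (i) and (ii) one has $E_{\rm min}<m$ by Theorem~\ref{THM 2}, so the threshold is in the resolvent set of every $h_{11}(x)$ and all kernels are bounded and the limiting operator is compact; in case (iii), $E_{\rm min}=m$ but $\min_x\Delta(x\,;m)>0$ keeps the denominator $\Delta(x\,;m)$ bounded away from zero uniformly, while the only singularity of the remaining kernel comes from $w_2(x,t)-m$ near the unique non-degenerate minimum $(0,0)$, and $v_1(0)=0$ with the bound \eqref{Eq1} makes the singular integral convergent in dimension ${\rm d}=1$. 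Third, having established that $G(E_{\rm min})$ is compact, I conclude that it has only finitely many eigenvalues in $[1,\infty)$, and that for $z$ slightly below $E_{\rm min}$ the operator $G(z)$ has the same (finite) count, whence $H$ has only finitely many eigenvalues below $E_{\rm min}$.

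The main obstacle is the threshold case (iii), where $E_{\rm min}=m$ coincides with the bottom of the three-particle branch: here one must show that the Birman--Schwinger kernel does not develop a non-integrable singularity at the coalescence of fibers. This requires a careful local analysis near $(x,t)=(0,0)$: using non-degeneracy of the minimum one has $w_2(x,t)-m \asymp |x|^2+|t|^2$, so the fiber resolvent $(h_{11}(x)-m)^{-1}$ has a kernel singular like $(|x|^2+|t|^2)^{-1}$, and one must exploit both factors of $v_1$ (each $\asymp |x|^\alpha$, $|t|^\alpha$ with $\alpha\ge 1$) together with the strict positivity of $\Delta(\cdot\,;m)$ to see that the resulting operator on $\cH_1$ is Hilbert--Schmidt, or at least compact, precisely because ${\rm d}=1$; in higher dimension the analogous estimate fails, which is consistent with the known infiniteness results quoted in the introduction. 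A secondary technical point is verifying the norm-continuity (or strong convergence plus collectively compact convergence) of $z\mapsto G(z)$ up to the threshold, so that eigenvalue counts pass to the limit; this is handled by dominated convergence once the uniform integrability from the previous step is in hand.
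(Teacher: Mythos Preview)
Your treatment of case (iii) is essentially the paper's argument: $E_{\rm min}=m$, $\Delta(\cdot\,;m)$ is uniformly bounded below, and the only singularity of the Birman--Schwinger kernel comes from $(w_2(x,y)-m)^{-1}$ near the non-degenerate minimum $(0,0)$, which is tamed by $v_1(x)v_1(y)\asymp |x|^\alpha|y|^\alpha$ in one dimension. That part is fine.

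The gap is in cases (i) and (ii). You assert that since $E_{\rm min}<m$ the threshold lies in the resolvent set of every $h_{11}(x)$ and ``all kernels are bounded and the limiting operator is compact.'' But the Birman--Schwinger kernel the paper constructs (and the one your Schur reduction would yield) carries a factor $\big(\Delta(x\,;z)\Delta(y\,;z)\big)^{-1/2}$, and $E_{\rm min}$ is by definition the bottom of the two-particle branch $\sigma=\bigcup_x\sigma_{\rm disc}(h(x))$: there exist points $x_1,\dots,x_n\in\T$ at which $\Delta(x_j\,;E_{\rm min})=0$. Hence the kernel of $T_{11}(z)$ is \emph{not} bounded as $z\uparrow E_{\rm min}$, and $T(z)$ does \emph{not} converge in norm to a compact limit. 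The paper's remedy is the essential missing idea: use analyticity of $\Delta(\cdot\,;E_{\rm min})$ (valid because $E_{\rm min}<m$) to see that the zero set $\{x_1,\dots,x_n\}$ is finite with finite multiplicities $k_j$, expand the smooth factor $v_1(x)v_1(y)/(2(w_2(x,y)-z))$ in Taylor polynomials of order $[k_j/2]$ around each $x_j$, and split $T(z)=T_0(z)+T_1(z)$ where $T_1(z)$ is finite-rank with rank bounded \emph{independently of $z$}, while $T_0(z)$ has kernel controlled by $|x-x_j|^{[k_j/2]+1}/\sqrt{\Delta(x\,;z)}\le C$ and hence converges in Hilbert--Schmidt norm as $z\to E_{\rm min}$. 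One then concludes via the Weyl inequality $n(1,T(z))\le n(1/3,T_0(z)-T_0(E_{\rm min}))+n(1/3,T_0(E_{\rm min}))+n(1/3,T_1(z))$, each term of which is finite. Without this finite-rank subtraction your norm-convergence argument in (i) and (ii) fails.
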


\begin{remark}
Since the function $w_2(\cdot,\cdot)$ is continuous on the
compact set ${\Bbb T}^2$ there exist at least one point $(x_0,y_0)
\in {\Bbb T}^2$ such that the function $w_2(\cdot,\cdot)$ attains
its global maximum at this point. If $v_1(y_0) \neq 0,$ then
similar arguments show that for the upper bound $E_{\rm max}$ of
the essential spectrum of $H$ we have $E_{\rm max}>M$ and the
operator $H$ has a finite number of eigenvalues greater than
$E_{\rm max}.$
\end{remark}

\begin{remark}
The results can be easily generalized to the case when the function $w_2(\cdot,\cdot)$
has a finite number of non-degenerate global minima at several points of ${\Bbb T}^2.$
Here finiteness of the number of such points is important.
If the number of such points is infinite, then the discrete spectrum of $H$
can be infinite, for a corresponding example see Section $6.$
\end{remark}

\begin{remark}
The case $v_1(0)=0$ and $\min\limits_{x \in {\Bbb T}}
\Delta(x\,; m)=0$ is considered in Section~$7,$ where it is shown that the discrete spectrum
of $H$ can be finite or infinite depending on the parameter functions.
\end{remark}

\section{Lower bound of the essential spectrum of $H$}

In this section first we study the discrete spectrum of $h(x)$ and
then Theorem~\ref{THM 2} will be proven.

\begin{proposition}\label{Perturbation determinant}
The perturbation determinant $\Delta_{h(x)/h_0(x)}(z)$ of the operator $h_0(x)$
by the operator $h(x)-h_0(x)$ has form
$$
\Delta_{h(x)/h_0(x)}(z)=-\frac{1}{z} \Delta(x\,; z), \quad z \in {\Bbb C} \setminus \sigma(h_0(x)).
$$
\end{proposition}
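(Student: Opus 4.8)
The plan is to compute the perturbation determinant directly from its definition $\Delta_{h(x)/h_0(x)}(z) = \det\bigl(I + (h(x)-h_0(x))(h_0(x)-z)^{-1}\bigr)$, exploiting the fact that $h(x)-h_0(x)$ has rank $2$, so that the (infinite-dimensional Fredholm) determinant reduces to an honest $2\times 2$ determinant. First I would write $h(x)-h_0(x)$ explicitly: in the decomposition ${\mathcal H}_0 \oplus {\mathcal H}_1$ it is the block operator matrix $\left(\begin{smallmatrix} w_1(x) & h_{01}\\ h_{01}^* & 0\end{smallmatrix}\right)$, which factors as $BC^*$ where (say) $C$ and $B$ map $\mathbb{C}^2$ into ${\mathcal H}_0\oplus{\mathcal H}_1$ and encode the vectors $(1,0)^\top$, $(0, \tfrac{1}{\sqrt 2}v_1)^\top$ (up to an appropriate bookkeeping of the $w_1(x)$ term as a rank-one piece). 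Then by the standard identity $\det(I + BC^*(h_0(x)-z)^{-1}) = \det(I_{\mathbb C^2} + C^*(h_0(x)-z)^{-1}B)$ the problem collapses to evaluating a $2\times 2$ scalar determinant.

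Next I would need the resolvent $(h_0(x)-z)^{-1}$. Since $h_0(x) = \left(\begin{smallmatrix} 0 & 0\\ 0 & h_{11}(x)\end{smallmatrix}\right)$, its resolvent is block-diagonal: $\left(\begin{smallmatrix} -z^{-1} & 0\\ 0 & (h_{11}(x)-z)^{-1}\end{smallmatrix}\right)$, and $(h_{11}(x)-z)^{-1}$ is multiplication by $(w_2(x,y)-z)^{-1}$ on ${\mathcal H}_1$. Substituting into the reduced determinant, the entries involving the $\mathbb{C}$-component produce the factor $-z^{-1}$ and the term $w_1(x)$, while the entry pairing $h_{01}^*$ with $(h_{11}(x)-z)^{-1} h_{01}$ produces $\tfrac{1}{2}\int_{\mathbb T} v_1^2(t)(w_2(x,t)-z)^{-1}\,dt$. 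Carrying out the $2\times 2$ determinant expansion and collecting terms, I expect to land on $-z^{-1}\bigl(w_1(x) - z - \tfrac12\int_{\mathbb T} v_1^2(t)(w_2(x,t)-z)^{-1}dt\bigr) = -z^{-1}\Delta(x\,;z)$, as claimed. One should also note $\sigma(h_0(x)) = \{0\}\cup[m_x,M_x]$, so for $z\in\mathbb C\setminus\sigma(h_0(x))$ both $-z^{-1}$ and the integral are well-defined.

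The main obstacle — really the only delicate point — is the bookkeeping of the rank-$2$ factorization: one must choose the factorization $h(x)-h_0(x) = BC^*$ consistently (several natural choices differ by how the $w_1(x)$ diagonal entry and the two off-diagonal pieces are distributed among the two rank-one summands), and then track signs and the $\tfrac{1}{\sqrt2}$ normalization carefully through the $\det(I+C^*RB)$ computation so that the cross term comes out with coefficient exactly $\tfrac12$ and the overall prefactor is exactly $-z^{-1}$ rather than, say, $z^{-1}$ or $-z^{-2}$. A clean way to avoid sign ambiguity is to verify the leading asymptotics as $z\to\infty$: both sides behave like $1 + O(1/z)$ for the perturbation determinant (indeed $\Delta_{h(x)/h_0(x)}(z)\to 1$), and $-z^{-1}\Delta(x\,;z) = -z^{-1}(w_1(x)-z-O(1/z)) = 1 - w_1(x)/z + O(1/z^2) \to 1$, which pins down the normalization and confirms the formula. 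Alternatively, one can invoke the well-known relation between the perturbation determinant and the Fredholm determinant of the associated Friedrichs model together with Lemma~\ref{LEM 1}; but the direct rank-$2$ computation is self-contained and I would present that.
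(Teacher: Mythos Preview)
Your proposal is correct and follows essentially the same idea as the paper: exploit the rank-$2$ structure of $h(x)-h_0(x)$ to reduce the Fredholm determinant to a $2\times2$ scalar determinant, then read off the entries from the explicit block resolvent $(h_0(x)-z)^{-1}=\mathrm{diag}\bigl(-z^{-1},(w_2(x,\cdot)-z)^{-1}\bigr)$.

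The packaging differs slightly. The paper normalizes $\|v_1\|=1$, extends $\varphi_1:=v_1$ to an orthonormal basis $\{\varphi_n\}$ of ${\mathcal H}_1$, and then passes to the rotated basis $\psi_1=\tfrac{1}{\sqrt2}(1,\varphi_1)$, $\psi_2=\tfrac{1}{\sqrt2}(1,-\varphi_1)$, $\psi_j=(0,\varphi_{j-1})$ for $j\ge 3$; in this basis the matrix of $(h(x)-h_0(x))(h_0(x)-z)^{-1}$ is the identity outside the first $2\times2$ block, and the determinant is computed by hand from the four entries $a_{ij}(x;z)$. Your route via the Sylvester/Weinstein--Aronszajn identity $\det(I+BC^*R)=\det(I_{\mathbb C^2}+C^*RB)$ avoids the explicit basis choice and is arguably cleaner, at the cost of having to fix a factorization $h(x)-h_0(x)=BC^*$ carefully (as you note). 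Your asymptotic check $\Delta_{h(x)/h_0(x)}(z)\to 1$ and $-z^{-1}\Delta(x;z)\to 1$ as $z\to\infty$ is a nice safeguard that the paper does not include; it is a good way to confirm the sign and the $-z^{-1}$ prefactor once the algebra is done.
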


\begin{proof}
Since the operator $h(x)-h_0(x)$ is trace class, even of rank 2, the perturbation determinant $\Delta_{h(x)/h_0(x)}(z)$
is well-defined by
$$
\Delta_{h(x)/h_0(x)}(z):={\rm det} \left( I+(h(x)-h_0(x))(h_0(x)-z)^{-1} \right).
$$
Without loss of generality we can assume that $\|v_1\|=1.$ We choose the orthonormal basis $\{\varphi_n\}_n \subset {\cal H}_1$
by the following way: $\varphi_1:=v_1$ and $\varphi_j \bot v_1$ for all $j \geq 2.$
Introduce
$$
\psi_1:=\frac{1}{\sqrt{2}} \left( \begin{array}{cc}
1\\
\varphi_1\\
\end{array}
\right), \quad \psi_2:=\frac{1}{\sqrt{2}} \left( \begin{array}{cc}
1\\
-\varphi_1\\
\end{array}
\right), \quad \psi_j:=\left( \begin{array}{cc}
0\\
\varphi_{j-1}\\
\end{array}
\right), \quad j \geq 3.
$$
By the construction $\{\psi_n\}_n \subset {\cal H}_0 \oplus {\cal H}_1$ is an orthonormal.
Set
$$
a_{ij}(x\,; z):=((h(x)-h_0(x))(h_0(x)-z)^{-1} \psi_i, \psi_j),\quad i,j \in {\Bbb N}.
$$
Simple calculation show that
\begin{eqnarray*}
&& a_{11}(x\,; z)=-\frac{1}{z}w_1(x)+\frac{1}{\sqrt{2}} \int_{\Bbb T} \frac{v_1^2(t)dt}{w_2(x,t)-z}-\frac{1}{z}\frac{1}{\sqrt{2}};\\
&& a_{12}(x\,; z)=-\frac{1}{z}w_1(x)+\frac{1}{\sqrt{2}} \int_{\Bbb T} \frac{v_1^2(t)dt}{w_2(x,t)-z}+\frac{1}{z}\frac{1}{\sqrt{2}};\\
&& a_{21}(x\,; z)=-\frac{1}{z}w_1(x)-\frac{1}{\sqrt{2}} \int_{\Bbb T} \frac{v_1^2(t)dt}{w_2(x,t)-z}-\frac{1}{z}\frac{1}{\sqrt{2}};\\
&& a_{22}(x\,; z)=-\frac{1}{z}w_1(x)-\frac{1}{\sqrt{2}} \int_{\Bbb T} \frac{v_1^2(t)dt}{w_2(x,t)-z}+\frac{1}{z}\frac{1}{\sqrt{2}};\\
&& a_{ij}(x\,; z)=\delta_{ij}, \quad {\rm otherwise}.
\end{eqnarray*}
Here $\delta_{ij}$ is Kronecker delta. Therefore,
$$
\Delta_{h(x)/h_0(x)}(z)=\frac{1}{4} {\rm det} \left( \begin{array}{cc}
2-a_{11}(x\,; z) & a_{12}(x\,; z)\\
a_{21}(x\,; z) & 2-a_{22}(x\,; z)\\
\end{array}
\right)=-\frac{1}{z} \Delta(x\,; z).
$$
Proposition is proved.
\end{proof}

The following lemma is a simple consequence of the
Proposition~\ref{Perturbation determinant} and of \cite[chapter~IV]{Gohberg-Krein}.

\begin{lemma}\label{LEM 1} For any fixed $x \in {\Bbb T}$ the operator $h(x)$ has an eigenvalue
$z(x) \in {\Bbb C} \setminus [m_x, M_x]$ if and only if
$\Delta(x\,; z(x))=0.$
\end{lemma}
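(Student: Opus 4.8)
The plan is to deduce Lemma~\ref{LEM 1} from Proposition~\ref{Perturbation determinant} together with the standard theory of perturbation determinants for trace class perturbations \cite[Chapter~IV]{Gohberg-Krein}. Recall the basic fact: if $B$ is a trace class (in fact finite-rank) self-adjoint perturbation of the bounded self-adjoint operator $h_0(x)$, and $h(x)=h_0(x)+B$, then for $z\notin\sigma(h_0(x))$ the perturbation determinant $\Delta_{h(x)/h_0(x)}(z)$ is an analytic function whose zeros in $\mathbb{C}\setminus\sigma(h_0(x))$ coincide, counting multiplicities, with the eigenvalues of $h(x)$ lying in $\mathbb{C}\setminus\sigma(h_0(x))$. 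This is precisely because $z\in\rho(h_0(x))$ is an eigenvalue of $h(x)$ if and only if $-1\in\sigma\big(B(h_0(x)-z)^{-1}\big)$, i.e.\ if and only if $\det\big(I+B(h_0(x)-z)^{-1}\big)=0$, the Fredholm determinant being well-defined since $B(h_0(x)-z)^{-1}$ is trace class.

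The concrete steps are then as follows. First I would fix $x\in\mathbb{T}$ and recall from the discussion preceding Theorem~\ref{THM 1} that $\sigma(h_0(x))=\{0\}\cup[m_x,M_x]$ (the isolated eigenvalue $0$ coming from the upper-left zero block of $h_0(x)$). Next I would invoke Proposition~\ref{Perturbation determinant}, which gives
$$
\Delta_{h(x)/h_0(x)}(z)=-\frac{1}{z}\,\Delta(x\,;z),\qquad z\in\mathbb{C}\setminus\sigma(h_0(x)).
$$
By the general principle quoted above, for $z\in\mathbb{C}\setminus\sigma(h_0(x))$ we have $z\in\sigma_{\rm disc}(h(x))$ if and only if $\Delta_{h(x)/h_0(x)}(z)=0$, which in view of the displayed formula (and $z\neq 0$ there) is equivalent to $\Delta(x\,;z)=0$. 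Since $\Delta(x\,;\cdot)$ is analytic precisely on $\mathbb{C}\setminus[m_x,M_x]$, and the function $-1/z$ is non-vanishing and analytic off $0$, the only point that requires a separate look is $z=0$: one must check that $0$ is not an eigenvalue of $h(x)$ outside $[m_x,M_x]$ in a way that would be missed, i.e.\ either $0\in[m_x,M_x]$ (in which case it is not in the region we care about) or, if $0\notin[m_x,M_x]$, that $\Delta(x\,;0)\neq 0$ corresponds correctly to $0\notin\sigma_{\rm disc}(h(x))$. In fact the cleanest route is to note that the statement of the lemma concerns eigenvalues $z(x)\in\mathbb{C}\setminus[m_x,M_x]$, i.e.\ outside $\sigma_{\rm ess}(h(x))$, and the equivalence $\Delta_{h(x)/h_0(x)}(z)=0\Leftrightarrow\Delta(x\,;z)=0$ holds on the slightly smaller set $\mathbb{C}\setminus(\{0\}\cup[m_x,M_x])$; the point $z=0$ is then handled by observing that either it lies in $[m_x,M_x]$ or, when it does not, a direct computation from the block form of $h(x)$ (using that $v_1\neq 0$ so that $h_{01}\neq 0$) shows $0\notin\sigma_{\rm p}(h(x))$, matching $\Delta(x\,;0)=w_1(x)-\tfrac12\int_{\mathbb T}v_1^2(t)/w_2(x,t)\,dt\neq 0$ generically, or more robustly that the eigenvalue equation $h(x)\psi=0$ forces $\psi=0$ by the same Schur-complement reduction that produces $\Delta(x\,;\cdot)$.

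The main obstacle, such as it is, is precisely this bookkeeping at the spurious point $z=0$ introduced by the factor $-1/z$ in Proposition~\ref{Perturbation determinant}: one has to make sure the zero of $\Delta_{h(x)/h_0(x)}$ and the zeros of $\Delta(x\,;\cdot)$ are matched on the correct punctured region and that nothing is lost or gained at $z=0$. Everything else is a direct citation of the Gohberg--Krein correspondence between eigenvalues and zeros of the perturbation determinant, combined with the explicit formula already established. I would therefore write the proof in two short sentences: cite \cite[Chapter~IV]{Gohberg-Krein} for the eigenvalue/determinant equivalence, then substitute the formula from Proposition~\ref{Perturbation determinant} and cancel the nonzero factor $-1/z$ on $\mathbb{C}\setminus[m_x,M_x]$ (noting $0\in[m_x,M_x]$ or else treating it directly), concluding that $z(x)\in\mathbb{C}\setminus[m_x,M_x]$ is an eigenvalue of $h(x)$ iff $\Delta(x\,;z(x))=0$.
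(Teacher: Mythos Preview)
Your proposal is correct and follows exactly the route the paper indicates: the paper states that Lemma~\ref{LEM 1} is ``a simple consequence of the Proposition~\ref{Perturbation determinant} and of \cite[chapter~IV]{Gohberg-Krein}'' and gives no further argument. Your write-up is in fact more careful than the paper's, since you explicitly address the bookkeeping at $z=0$ (the isolated point of $\sigma(h_0(x))$ coming from the zero block), which the paper glosses over; your Schur-complement check there is the right way to close that gap.
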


In the next two lemmas we describe the number and location of the
eigenvalues of $h(x).$

\begin{lemma}\label{LEM 2}
If $v_1(0) \neq 0,$ then there exists $\delta>0$ such that for any
$x \in U_\delta(0)$ the operator $h(x)$ has a unique eigenvalue
$z(x),$ lying on the left of $m_x.$
\end{lemma}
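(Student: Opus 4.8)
The plan is to exploit the characterization of eigenvalues of $h(x)$ given by Lemma~\ref{LEM 1}, namely that $z(x)<m_x$ is an eigenvalue of $h(x)$ if and only if $\Delta(x\,;z(x))=0$. Fix $x$ near $0$. On the interval $(-\infty,m_x)$ the function $z\mapsto\Delta(x\,;z)=w_1(x)-z-\tfrac12\int_{\Bbb T}\frac{v_1^2(t)\,dt}{w_2(x,t)-z}$ is real-analytic and strictly increasing, since $\partial_z\Delta(x\,;z)=-1-\tfrac12\int_{\Bbb T}\frac{v_1^2(t)\,dt}{(w_2(x,t)-z)^2}<0$ — wait, this is $<0$, so $\Delta(x\,;\cdot)$ is strictly \emph{decreasing} in $z$ on $(-\infty,m_x)$. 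Hence $h(x)$ has exactly one eigenvalue below $m_x$ precisely when $\lim_{z\to m_x^-}\Delta(x\,;z)<0$, because as $z\to-\infty$ we have $\Delta(x\,;z)\to+\infty$. So the whole lemma reduces to showing: if $v_1(0)\neq0$, then there is $\delta>0$ such that $\Delta(x\,;m_x)<0$ for all $x\in U_\delta(0)$; and separately that $\lim_{z\to m_x^-}\Delta(x\,;z)=\Delta(x\,;m_x)$ is finite, i.e. the integral $\int_{\Bbb T}\frac{v_1^2(t)\,dt}{w_2(x,t)-m_x}$ converges.

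The key sub-step is the behavior of that integral as $z\uparrow m_x$. Here the hypothesis stated just before Lemma~\ref{LEM 2} enters: $w_2$ has a unique non-degenerate global minimum at $(0,0)$. First I would establish (this is ``see proof of Lemma~\ref{LEM 2}'' referenced in the excerpt) that for $x$ in a neighborhood of $0$ the function $w_2(x,\cdot)$ still has a unique non-degenerate minimum, attained at some point $y(x)$ depending analytically on $x$ with $y(0)=0$, and $m_x=w_2(x,y(x))$; non-degeneracy gives $w_2(x,t)-m_x\asymp |t-y(x)|^2$ near $y(x)$, uniformly for small $x$. Since in dimension ${\rm d}=1$ we have $\int \frac{dt}{|t-y(x)|^2}=\infty$, the convergence of $\int_{\Bbb T}\frac{v_1^2(t)\,dt}{w_2(x,t)-m_x}$ requires $v_1^2$ to vanish at $y(x)$; but when $v_1(0)\neq0$, by continuity $v_1(y(x))\neq0$ for $x$ near $0$, so the integral \emph{diverges to $+\infty$}. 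Therefore $\lim_{z\to m_x^-}\Delta(x\,;z)=-\infty$ for all $x\in U_\delta(0)$, which in particular is $<0$; combined with the strict monotonicity and $\Delta(x\,;-\infty)=+\infty$, the intermediate value theorem yields a unique zero $z(x)\in(-\infty,m_x)$, and Lemma~\ref{LEM 1} identifies it as the unique eigenvalue of $h(x)$ below $m_x$.

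I would organize the write-up as: (1) recall $\Delta(x\,;\cdot)$ is continuous and strictly decreasing on $(-\infty,m_x)$ with limit $+\infty$ at $-\infty$; (2) prove the Morse-type lemma for $w_2(x,\cdot)$ near $0$ giving the quadratic lower bound $w_2(x,t)-m_x\geq c|t-y(x)|^2$ uniformly in $x\in U_\delta(0)$; (3) deduce from $v_1(y(x))\neq0$ and ${\rm d}=1$ that the integral diverges, hence $\Delta(x\,;m_x-0)=-\infty<0$; (4) conclude existence and uniqueness of the eigenvalue via Lemma~\ref{LEM 1} and monotonicity. The main obstacle is step (2): making the non-degeneracy uniform in $x$ and tracking the minimizer $y(x)$ — this needs the implicit function theorem applied to $\partial_y w_2(x,y)=0$ at $(0,0)$, using that $\partial_y^2 w_2(0,0)>0$ by non-degeneracy, together with a uniform two-sided estimate on $w_2(x,t)-m_x$ away from $y(x)$ (where it stays bounded below by compactness of $\Bbb T$). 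Once that is in hand, the divergence of the one-dimensional integral and the rest of the argument are routine.
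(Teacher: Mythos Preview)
Your proposal is correct and follows essentially the same route as the paper: use the implicit function theorem to track the unique non-degenerate minimizer $y_0(x)$ of $w_2(x,\cdot)$ with $y_0(0)=0$, deduce from $v_1(y_0(x))\neq 0$ and the one-dimensional quadratic vanishing of $w_2(x,\cdot)-m_x$ that the integral diverges so $\lim_{z\to m_x-0}\Delta(x\,;z)=-\infty$, and then combine this with the strict monotonicity of $\Delta(x\,;\cdot)$ and $\lim_{z\to-\infty}\Delta(x\,;z)=+\infty$ to get a unique zero via Lemma~\ref{LEM 1}. The paper's write-up is slightly more streamlined (it introduces the shifted function $\widetilde w_2(x,y)=w_2(x,y+y_0(x))-m_x$ to make the change of variables explicit), and you should drop the momentary detour where you ask whether $\Delta(x\,;m_x)$ is finite---as you yourself conclude, it is not.
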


\begin{proof}
Since the function $w_2(\cdot,\cdot)$ has a unique non-degenerate global
minimum at the point $(0,0) \in {\Bbb T}^2,$ by the implicit
function theorem there exist $\delta>0$ and an analytic function
$y_0(\cdot)$ on $U_\delta(0)$ such that for any $x \in
U_\delta(0)$ the point $y_0(x)$ is the unique non-degenerate
minimum of the function $w_2(x,\cdot)$ and $y_0(0)=0.$ Therefore,
we have $w_2(x,y_0(x))=m_x$ for any $x \in U_\delta(0).$

Let $\widetilde{w}_2(\cdot,\cdot)$ be the function defined on
$U_\delta(0) \times {\Bbb T}$ as
$\widetilde{w}_2(x,y):=w_2(x,y+y_0(x))-m_x.$
Then for any $x \in U_\delta(0)$ the function
$\widetilde{w}_2(x,\cdot)$ has a unique non-degenerate zero
minimum at the point $0 \in {\Bbb T}.$ Now using the equality
$$
\int_{\Bbb T} \frac{v_1^2(t)dt}{w_2(x,t)-m_x}=
\int_{\Bbb T} \frac{v_1^2(t+y_0(x))dt}
{\widetilde{w}_2(x,t)}, \quad x \in U_\delta(0),
$$
the continuity of the function $v_1(\cdot),$ the conditions
$v_1(0) \neq 0$ and $y_0(0)=0$ it is easy to see that
$
\lim\limits_{z \to m_x-0} \Delta(x\,; z)=-\infty
$
for all $x \in U_\delta(0).$

Since for any $x \in {\Bbb T}$ the function $\Delta(x\,; \cdot)$
is continuous and monotonically decreasing on $(-\infty, m_x)$ the
equality
\begin{eqnarray}
\lim\limits_{z \to -\infty} \Delta(x\,; z)=\infty
\label{Eq2}
\end{eqnarray}
implies that for any $x \in U_\delta(0)$ the function $\Delta(x\,;
\cdot)$ has a unique zero $z=z(x),$ lying in $(-\infty, m_x).$ By
Lemma \ref{LEM 1} the number $z(x)$ is the eigenvalue of $h(x).$
\end{proof}

\begin{lemma}\label{LEM 6}
Let $v_1(0) = 0.$\\
{\rm (i)} If $\min\limits_{x \in {\Bbb T}} \Delta(x\,; m) \ge 0,$
then for any $x \in {\Bbb T}$ the operator $h(x)$ has no
eigenvalues, lying on the left of $m.$\\
{\rm (ii)} If $\min\limits_{x \in {\Bbb T}} \Delta(x\,; m) < 0,$
then there exists a non-empty set $G \subset {\Bbb T}$ such that
for any $x \in G$ the operator $h(x)$ has a unique eigenvalue
$z(x),$ lying on the left of $m.$
\end{lemma}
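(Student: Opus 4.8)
The strategy is to exploit the monotonicity and boundary behaviour of the analytic function $\Delta(x\,;\cdot)$ on the half-line $(-\infty,m_x)$, exactly as in the proof of Lemma~\ref{LEM 2}, but now keeping careful track of the fact that $v_1(0)=0$ changes the behaviour of the Birman--Schwinger integral near the band edge. First I would record the two structural facts that hold for every $x\in{\Bbb T}$: the function $\Delta(x\,;\cdot)$ is real-analytic and strictly decreasing on $(-\infty,m_x)$ (its derivative is $-1-\tfrac12\int_{\Bbb T} v_1^2(t)(w_2(x,t)-z)^{-2}dt<0$), and $\lim_{z\to-\infty}\Delta(x\,;z)=+\infty$ by \eqref{Eq2}. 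Hence $h(x)$ has an eigenvalue to the left of $m_x$ if and only if $\Delta(x\,;\cdot)$ eventually becomes negative on $(-\infty,m_x)$, and since $m_x\ge m$ with equality only possible at special $x$, I must compare the band edge $m_x$ of $h(x)$ with the global edge $m$. The key point is that, because $v_1(0)=0$, the estimate \eqref{Eq1} gives $C_1|t|^\alpha\le|v_1(t)|\le C_2|t|^\alpha$ near $0$, so for $x$ near $0$ the integrand $v_1^2(t+y_0(x))/\widetilde w_2(x,t)$ no longer blows up at the minimizing point, and therefore $\lim_{z\to m_x-0}\Delta(x\,;z)=\Delta(x\,;m_x)$ is finite; moreover, by the continuity argument already given in the excerpt (Lebesgue dominated convergence plus the discussion preceding Theorem~\ref{THM 2}), the function $x\mapsto\Delta(x\,;m)$ is continuous on all of ${\Bbb T}$.

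For part~(i), assume $\min_{x\in{\Bbb T}}\Delta(x\,;m)\ge 0$. Fix $x\in{\Bbb T}$ and note that for $z<m\le m_x$ we have $w_2(x,t)-z>w_2(x,t)-m\ge 0$, hence
$$
\Delta(x\,;z)=w_1(x)-z-\frac12\int_{\Bbb T}\frac{v_1^2(t)\,dt}{w_2(x,t)-z}
> w_1(x)-m-\frac12\int_{\Bbb T}\frac{v_1^2(t)\,dt}{w_2(x,t)-m}=\Delta(x\,;m)\ge 0,
$$
where the strict inequality uses $-z>-m$ together with the monotone decrease of each integral term in $z$. Thus $\Delta(x\,;z)>0$ for all $z<m$, and since $\Delta(x\,;\cdot)$ has no zero on $(-\infty,m)$ and (being decreasing with $\Delta(x\,;m)\ge 0$) cannot have a zero on $[m,m_x)$ either — here one checks, using $v_1(0)=0$ so the integral stays finite up to $z=m_x$, that $\Delta(x\,;z)\ge\Delta(x\,;m_x)\ge\Delta(x\,;m)\ge 0$ on $(m,m_x)$ — Lemma~\ref{LEM 1} shows $h(x)$ has no eigenvalue to the left of $m$. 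The only subtlety is justifying $\Delta(x\,;m_x)\ge\Delta(x\,;m)$ when $m_x>m$; this follows from monotonicity once one verifies $\Delta(x\,;\cdot)$ is continuous up to $m_x$ for such $x$, which is again a consequence of \eqref{Eq1}.

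For part~(ii), assume $\min_{x\in{\Bbb T}}\Delta(x\,;m)<0$. By continuity of $x\mapsto\Delta(x\,;m)$ on the compact set ${\Bbb T}$, the set $G:=\{x\in{\Bbb T}:\Delta(x\,;m)<0\}$ is non-empty and open. Fix $x\in G$. Since $\Delta(x\,;m)<0$ while $\lim_{z\to-\infty}\Delta(x\,;z)=+\infty$ by \eqref{Eq2}, and $\Delta(x\,;\cdot)$ is continuous and strictly decreasing on $(-\infty,m]\subset(-\infty,m_x)$, the intermediate value theorem produces a unique zero $z(x)\in(-\infty,m)$; strict monotonicity forbids a second zero anywhere in $(-\infty,m_x)$. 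By Lemma~\ref{LEM 1}, $z(x)$ is the unique eigenvalue of $h(x)$ lying to the left of $m$, and in fact to the left of $m_x$. I expect the main obstacle to be the careful book-keeping around the band edges when $m_x\neq m$: one must be sure that $\Delta(x\,;\cdot)$ is well-defined, finite, continuous and still monotone all the way up to $m_x$ (not merely up to $m$), which is precisely where the hypothesis $v_1(0)=0$ and the bound \eqref{Eq1} are used — for $x$ away from $0$ this is automatic since the minimum of $w_2(x,\cdot)$ exceeds $m$, while for $x$ near $0$ it is the vanishing of $v_1$ at the minimizing point $y_0(x)$ (with $y_0(0)=0$) that keeps the Birman--Schwinger integral convergent.
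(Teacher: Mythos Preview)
Your core argument is correct and follows the paper's approach: monotonicity of $\Delta(x;\cdot)$ gives $\Delta(x;z)>\Delta(x;m)\ge 0$ for all $z<m$ (part (i)), and for part (ii) the intermediate value theorem on $(-\infty,m)$ applied to $x\in G:=\{x:\Delta(x;m)<0\}$, with $G\neq\varnothing$ by continuity of $\Delta(\cdot\,;m)$, yields the unique eigenvalue $z(x)<m$.

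However, your digression in part (i) concerning the interval $[m,m_x)$ is both unnecessary and erroneous. The lemma only asks about eigenvalues to the left of $m$, so once you have $\Delta(x;z)>0$ for $z<m$ you are finished via Lemma~\ref{LEM 1}; the paper stops there. Your inequality $\Delta(x;m_x)\ge\Delta(x;m)$ is in fact backwards: since $\Delta(x;\cdot)$ is strictly decreasing and $m_x\ge m$, one has $\Delta(x;m_x)\le\Delta(x;m)$ whenever the left-hand side is defined. Furthermore, the claim that $v_1(0)=0$ keeps the Birman--Schwinger integral finite up to $z=m_x$ is justified only for $x$ near $0$, where the minimizer $y_0(x)$ of $w_2(x,\cdot)$ lies near $0$; for $x$ away from $0$ there is no reason for $v_1$ to vanish at the minimizer of $w_2(x,\cdot)$, and $\Delta(x;z)$ may well diverge as $z\to m_x-0$. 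Simply delete the material on $[m,m_x)$ and the proof is complete and essentially identical to the paper's.
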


\begin{proof}
First we recall that if $v_1(0) = 0,$ then the function
$\Delta(\cdot\,;m)$ is a continuous on ${\Bbb T}.$ Let
$\min\limits_{x \in {\Bbb T}} \Delta(x\,; m) \ge 0.$ Since for any
$x \in {\Bbb T}$ the function $\Delta(x\,; \cdot)$ is
monotonically decreasing on $(-\infty, m)$ we have
$
\Delta(x\,; z) > \Delta(x\,; m) \geq \min\limits_{x \in {\Bbb T}}
\Delta(x\,; m) \ge 0,
$
that is, $\Delta(x\,; z)>0$ for all $x \in {\Bbb T}$ and $z<m.$
Therefore, by Lemma~\ref{LEM 1} for any $x \in {\Bbb T}$ the
operator $h(x)$ has no eigenvalues in $(-\infty, m).$

Now we suppose that $\min\limits_{x \in {\Bbb T}} \Delta(x\,; m) <
0$ and introduce the following subset of ${\Bbb T}:$
$$
G:= \{ x \in {\Bbb T}: \Delta(x\,; m)<0\}.
$$

Since $\Delta(\cdot\,; m)$ is a continuous on the compact set
${\Bbb T},$ there exists a point $x^0 \in{\Bbb T}$ such that
$
\min\limits_{x \in {\Bbb T}} \Delta(x\,; m)=\Delta(x^0\,; m),
$
that is, $x^0 \in G.$ So, the set $G$ is a non-empty. Note that if
$\max\limits_{x \in {\Bbb T}} \Delta(x\,; m)<0,$ then
$\Delta(x\,; m)<0$ for all $x \in {\Bbb T}$ and hence $G={\Bbb T}.$

Since for any $x\in {\Bbb T}$ the function $\Delta(x\,; \cdot)$ is
a continuous and monotonically decreasing on $(-\infty, m]$ by the
equality \eqref{Eq2} for any $x \in G$ there exists a unique point
$z(x) \in (-\infty, m)$ such that $\Delta(x\,; z(x))=0.$ By Lemma~\ref{LEM 1}
for any $x \in G$ the point $z(x)$ is the unique
eigenvalue of $h(x).$

By the construction of $G$ the inequality $\Delta(x\,; m) \ge 0$
holds for all $x \in {\Bbb T} \setminus G.$ In this case one can
see that for any $x \in {\Bbb T} \setminus G$ the operator $h(x)$
has no eigenvalues in $(-\infty, m).$
\end{proof}

\begin{proof1} Let $v_1(0) \neq 0.$
Then by Lemma~\ref{LEM 2} there exists $\delta>0$ such that for
any $x \in U_\delta(0)$ the operator $h(x)$ has a unique
eigenvalue $z(x),$ lying on the left of $m_x.$ In particular,
$z(0)<m_0.$ Since $m=\min\limits_{x \in {\Bbb T}} m_x=m_0$ it
follows that $\min \sigma \le z(0) < m,$ that is, $E_{\rm min}<m.$

Let $v_1(0)=0.$ Then two cases are possible: $\min\limits_{x \in
{\Bbb T}} \Delta(x\,; m) \ge 0$ or $\min\limits_{x \in {\Bbb T}}
\Delta(x\,; m) < 0.$ In the case $\min\limits_{x \in {\Bbb T}}
\Delta(x\,; m) \ge 0,$ by the part (i) of Lemma~\ref{LEM 6} for any
$x \in {\Bbb T}$ the operator $h(x)$ has no eigenvalues in
$(-\infty, m),$ that is, $\min\sigma \ge m.$ By Theorem~\ref{THM
1} it means that $E_{\rm min}=m.$

For the case $\min\limits_{x \in {\Bbb T}} \Delta(x\,; m) < 0,$
using the part (ii) of Lemma~\ref{LEM 6} we obtain $\min \sigma
\le z(x') < m$ for all $x' \in G,$ that is, $E_{\rm min}<m.$
\end{proof1}

\section{The Birman-Schwinger principle.}

For a bounded self-adjoint operator $A$ acting in the Hilbert
space ${\mathcal R}$ and for a real number $\lambda,$ we define \cite{Glazman}
the number $n(\lambda, A)$ by the rule
$$
n(\lambda, A):=\sup \{{\rm dim}(F): (Au,u)>\lambda,\, u\in F \subset
{\mathcal R},\,||u||=1\}.
$$

The number $n(\lambda, A)$ is equal to infinity if
$\lambda<\max\sigma_{\rm ess}(A);$ if $n(\lambda, A)$ is finite,
then it is equal to the number of the eigenvalues of $A$ bigger
than $\lambda.$

Let us denote by $N(z)$ the number of eigenvalues of $H$ on the
left of $z,$ $z \leq E_{\rm min}.$ Then we have
$
N(z)=n(-z, -H),\quad -z>-E_{\rm min}.
$

Since the function $\Delta(\cdot\,; \cdot)$ is positive on
$(x,z) \in {\Bbb T} \times (-\infty, E_{\rm min}),$ there exists
a positive square root of $\Delta(x\,; z)$ for all $x \in {\Bbb T}$
and $z < E_{\rm min}.$

In our analysis of the discrete spectrum of $H$ the crucial role
is played by the self-adjoint compact $2 \times 2$ block operator
matrix $T(z),$ $z < E_{\rm min}$ acting on ${\mathcal H}_0 \oplus
{\mathcal H}_1$ as
$$
T(z):=\left( \begin{array}{cc}
T_{00}(z) & T_{01}(z)\\
T_{01}^*(z) & T_{11}(z)\\
\end{array}
\right)
$$
with the entries $T_{ij}(z): {\mathcal H}_j \to {\mathcal H}_i,$
$i \le j,$ $i,j=0,1$ defined by
\begin{eqnarray*}
&&T_{00}(z)g_0=(1+z-w_0)g_0,\quad T_{01}(z)g_1=
-\int_{\Bbb T} \frac{v_0(t)g_1(t)dt}{\sqrt{\Delta(t\,; z)}};\\
&&(T_{11}(z)g_1)(x)=\frac{v_1(x)}{2\sqrt{\Delta(x\,; z)}}
\int_{\Bbb T} \frac{v_1(t)g_1(t)dt} {\sqrt{\Delta(t\,; z)}(w_2(x,t)-z)}.
\end{eqnarray*}
Here $g_i \in {\mathcal H}_i,$ $i=0,1.$

The following lemma is a modification of the well-known
Birman-Schwinger principle for the operator $H$ (see \cite{ALR}).

\begin{lemma}\label{LEM 3}
The operator $T(z)$ is compact and continuous in $z < E_{\rm min}$
and
$$
N(z) = n(1, T(z)).
$$
\end{lemma}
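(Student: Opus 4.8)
The plan is to realize the standard Birman--Schwinger reduction in two stages, first eliminating the bosonic two-particle channel $\cH_2$ from the eigenvalue equation $Hf=zf$, and then recognizing the resulting equation on $\cH_0\oplus\cH_1$ as a fixed-point problem for an operator that is unitarily equivalent to $I - T(z)$. Concretely, I would write $f=(f_0,f_1,f_2)$ and expand the system $Hf=zf$ componentwise:
\begin{eqnarray*}
&& w_0 f_0 + H_{01} f_1 = z f_0,\\
&& H_{01}^* f_0 + H_{11} f_1 + H_{12} f_2 = z f_1,\\
&& H_{12}^* f_1 + H_{22} f_2 = z f_2.
\end{eqnarray*}
For $z<E_{\rm min}\le m \le w_2(x,y)$ the operator $H_{22}-z$ is boundedly invertible (it is multiplication by $w_2(x,y)-z>0$), so from the third equation $f_2 = (H_{22}-z)^{-1} H_{12}^* f_1$. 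Substituting into the second equation gives $\big(H_{11}-z - H_{12}(H_{22}-z)^{-1}H_{12}^*\big) f_1 = -H_{01}^* f_0$. The operator in parentheses acts as multiplication in $x$ by the Friedrichs determinant: a direct computation shows $\big((H_{11}-z-H_{12}(H_{22}-z)^{-1}H_{12}^*) f_1\big)(x) = \Delta(x\,;z) f_1(x) - \tfrac12 v_1(x)\int_{\Bbb T}\frac{v_1(t)(\cdots)}{\cdots}$, i.e. it is exactly the Schur complement whose scalar symbol is $\Delta(x\,;z)$ up to the rank-one correction encoded in $T_{11}$. Since $\Delta(\cdot\,;z)>0$ on ${\Bbb T}$ for $z<E_{\rm min}$ (this is where positivity of the square root is used), I can set $g_1:=\sqrt{\Delta(\cdot\,;z)}\,f_1$ and $g_0:=f_0$, and after dividing through by $\sqrt{\Delta(\cdot\,;z)}$ the coupled system for $(g_0,g_1)$ becomes precisely $(I-T(z))\binom{g_0}{g_1}=0$, with $T(z)$ the block matrix defined just above the lemma (the entry $T_{00}(z)=1+z-w_0$ absorbs the first equation written as $g_0 = (1+z-w_0)g_0 + T_{01}(z)g_1$).

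Having established the equivalence of eigenvalue equations, the next step is to upgrade it to the counting identity $N(z)=n(1,T(z))$. The map $f\mapsto(g_0,g_1)$ above is invertible (with inverse $f_1=g_1/\sqrt{\Delta(\cdot\,;z)}$, $f_2$ recovered from $g_1$ via the resolvent of $H_{22}$) and, crucially, it sets up a bijection between $\ker(H-z)$ and $\ker(I-T(z))$; this already shows that $z<E_{\rm min}$ is an eigenvalue of $H$ iff $1$ is an eigenvalue of $T(z)$. To get equality of the \emph{numbers} of eigenvalues of $H$ below a given level, I would invoke the monotonicity/continuity machinery that underlies the Birman--Schwinger principle: $T(z)$ is a norm-continuous, compact-operator-valued, and \emph{monotone} function of $z$ on $(-\infty,E_{\rm min})$ — monotonicity follows because each ingredient ($z\mapsto 1+z-w_0$, $z\mapsto 1/\sqrt{\Delta(x\,;z)}$, $z\mapsto 1/(w_2(x,t)-z)$) is increasing in $z$, so $T(z)$ is increasing in the operator order — and $T(z)\to 0$ as $z\to-\infty$. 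Compactness of $T(z)$: $T_{00}(z)$ is a scalar (rank one), $T_{01}(z)$ is Hilbert--Schmidt since its kernel $v_0(t)/\sqrt{\Delta(t\,;z)}$ is in $L_2({\Bbb T})$, and $T_{11}(z)$ has kernel $\frac{v_1(x)v_1(t)}{2\sqrt{\Delta(x\,;z)\Delta(t\,;z)}(w_2(x,t)-z)}$ which is in $L_2({\Bbb T}^2)$ — here one uses analyticity of $v_1,w_2$, the non-degenerate minimum of $w_2$, and $z<E_{\rm min}\le m$ to control the singularity of $1/(w_2(x,t)-z)$ along the diagonal and near $(0,0)$. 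Norm-continuity in $z$ follows from dominated convergence applied to the Hilbert--Schmidt norms. Then a standard argument (e.g. \cite{ALR}, \cite{Glazman}): as $\lambda$ runs over $(1,\infty)$ the eigenvalues of $T(z)$ crossing level $1$ as $z$ increases count, with multiplicity, the eigenvalues of $H$ that emerge below $z$; letting $\lambda\downarrow 1$ and using that $T(z)$ has no eigenvalue equal to $1$ generically gives $N(z)=n(1,T(z))$.

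The main obstacle, and the step deserving genuine care rather than a wave of the hand, is two-fold. First, the algebraic reduction must be done honestly: one has to verify that the Schur complement $H_{11}-z-H_{12}(H_{22}-z)^{-1}H_{12}^*$ really does reduce, after the substitution $g_1=\sqrt{\Delta(\cdot\,;z)}f_1$ and division by $\sqrt{\Delta(\cdot\,;z)}$, to $I-T_{11}(z)$ (minus the coupling to $g_0$) with \emph{exactly} the kernel written in the statement — the bookkeeping of the factor $\tfrac12$ and the $v_1(x)/\sqrt{\Delta(x\,;z)}$ prefactor is where errors hide, and the appearance of the ``$+1$'' in $T_{00}$ means the identification is with $I-T(z)$, not $T(z)$, so one is counting eigenvalue $1$ of $T(z)$. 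Second, the passage from the bijection of kernels to the equality of counting functions $N(z)=n(1,T(z))$ is not purely formal: it requires the monotonicity of $z\mapsto T(z)$ in the operator sense together with the limiting behaviour $T(z)\to0$, $z\to-\infty$, and an argument that no eigenvalue of $T(z)$ ``sticks'' at $1$ over a $z$-interval — all of which hold here but must be cited or sketched (this is precisely the content of the Birman--Schwinger principle as used in \cite{ALR}, and I would model the proof on that reference). The compactness and continuity claims themselves are routine Hilbert--Schmidt estimates given the standing analyticity hypotheses and $z<E_{\rm min}$, so I would dispatch them quickly.
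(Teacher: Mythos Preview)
The paper gives no proof of its own here: it simply cites Lemma~5.1 of \cite{ALR}. Your sketch is exactly the standard Birman--Schwinger reduction that lies behind that citation, and the algebra you outline --- Schur-complement elimination of $f_2$, then the substitution $g_1=\sqrt{\Delta(\cdot\,;z)}\,f_1$, $g_0=f_0$ reducing the system to $(I-T(z))g=0$ --- checks out (the factor $\tfrac12$, the entry $T_{01}$, and the curious $1+z-w_0$ in $T_{00}$ all fall out correctly). The compactness and norm-continuity claims are routine for the reasons you give.

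One genuine caution. Your argument for operator-order monotonicity of $T(z)$ --- ``each ingredient is increasing in $z$, so $T(z)$ is increasing in the operator order'' --- does not stand on its own: the kernel of $T_{11}(z)$ carries the sign-indefinite factor $v_1(x)v_1(y)$, and pointwise growth of a sign-indefinite kernel does not imply growth in the form sense. Fortunately the counting identity $N(z)=n(1,T(z))$ does not need $T(z)$-monotonicity at all. Your own Schur-complement step, read as a quadratic-form computation, gives for every $f=(f_0,f_1,f_2)\in\cH$ and $g=(f_0,\sqrt{\Delta(\cdot\,;z)}\,f_1)$
\[
\langle (H-z)f,f\rangle \;\ge\; \langle (I-T(z))g,g\rangle,
\]
with equality when $f_2=-(H_{22}-z)^{-1}H_{12}^*f_1$. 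Since $(f_0,f_1)\mapsto g$ is a bounded bijection of $\cH_0\oplus\cH_1$ onto itself (because $\Delta(\cdot\,;z)$ is continuous and strictly positive on ${\Bbb T}$ for $z<E_{\rm min}$), the maximal dimension of a negative subspace for $H-z$ on $\cH$ equals that for $I-T(z)$ on $\cH_0\oplus\cH_1$, i.e.\ $N(z)=n(1,T(z))$ by Glazman's lemma. This inertia argument is almost certainly what \cite{ALR} does, so your plan to model the proof on that reference will work --- just do not rely on the ``monotone ingredients'' shortcut when you write it up.
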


For the proof of this lemma see Lemma~5.1 of \cite{ALR}.

\section{Finiteness of the number of eigenvalues of $H$}

In this section we prove the finiteness of the number of
eigenvalues of $H,$ that is, Theorem~\ref{THM 3}. We have divided
the proof into a sequence of lemmas.

\begin{lemma}\label{LEM 7}
There exist positive numbers $C_1, C_2, C_3$ and $\delta$ such
that the following inequalities hold\\
{\rm (i)} $C_1(x^2+y^2) \leq w_2(x,y)-m \leq
C_2(x^2+y^2),$ $x,y \in U_\delta(0);$\\
{\rm (ii)} $w_2(x,y)-m \geq C_3,$ $(x,y) \not\in U_\delta(0)
\times U_\delta(0).$
\end{lemma}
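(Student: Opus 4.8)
The plan is to prove (i) by a Taylor expansion of $w_2$ at its minimum and (ii) by a compactness argument, with the key observation that the $\delta$ chosen for the first part automatically works for the second.

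First, for (i). Since $w_2(\cdot,\cdot)$ is analytic on ${\Bbb T}^2$ and attains its global minimum $m$ at $(0,0)$, the gradient of $w_2$ vanishes there, so the first–order terms in the Taylor expansion drop out and
$$
w_2(x,y)-m=\tfrac12\,\langle B\,(x,y)^{\top},(x,y)^{\top}\rangle+R(x,y),\qquad |R(x,y)|\le C_0\,(x^2+y^2)^{3/2},
$$
where $B$ is the Hessian of $w_2$ at $(0,0)$ and the remainder bound comes from analyticity (the remaining power series starts in degree $3$). Because $(0,0)$ is a minimum, $B$ is positive semi–definite; because the minimum is non-degenerate, $B$ is invertible, hence positive definite. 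Let $\mu_1,\mu_2>0$ be the smallest and largest eigenvalues of $\tfrac12 B$, so that $\mu_1(x^2+y^2)\le\tfrac12\langle B(x,y)^{\top},(x,y)^{\top}\rangle\le\mu_2(x^2+y^2)$ for all $(x,y)$. Now choose $\delta>0$ so small that $C_0\,(x^2+y^2)^{1/2}\le\tfrac12\mu_1$ whenever $x,y\in U_\delta(0)$; then for such $(x,y)$ one has $|R(x,y)|\le\tfrac12\mu_1(x^2+y^2)$, and (i) follows with $C_1:=\tfrac12\mu_1$ and $C_2:=\mu_2+\tfrac12\mu_1$ (or any larger constant).

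Next, for (ii). Fix the $\delta$ produced above and set $K:={\Bbb T}^2\setminus\bigl(U_\delta(0)\times U_\delta(0)\bigr)$. This set is closed in the compact space ${\Bbb T}^2$, hence compact, and it does not contain $(0,0)$. The continuous function $w_2$ attains its minimum over $K$ at some point $(x_\ast,y_\ast)\in K$; since $(0,0)$ is the \emph{unique} global minimizer of $w_2$ and $(x_\ast,y_\ast)\neq(0,0)$, we get $w_2(x_\ast,y_\ast)>m$. Putting $C_3:=w_2(x_\ast,y_\ast)-m>0$ yields $w_2(x,y)-m\ge C_3$ for every $(x,y)\notin U_\delta(0)\times U_\delta(0)$, which is (ii).

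The argument is essentially routine; the only two points needing care are the implication ``non-degenerate minimum $\Rightarrow$ positive-definite Hessian'' and the compatibility of the two uses of $\delta$. The latter causes no trouble: shrinking $\delta$ only enlarges $K$ while keeping $(0,0)\notin K$, so the $\delta$ that makes (i) hold makes (ii) hold as well. (The quadratic two-sided estimate in (i) is of the same Morse-type form already exploited near $(0,0)$ in the proof of Lemma~\ref{LEM 2}, and may be cited from there if desired.)
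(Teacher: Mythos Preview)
Your proof is correct and follows essentially the same approach as the paper: a second-order Taylor expansion at the non-degenerate minimum gives the two-sided quadratic bound in (i), and compactness plus uniqueness of the minimizer gives (ii). The paper's own proof is actually terser---it writes the Taylor expansion with remainder $O(|x|^3+|y|^3)$ and then simply asserts that (i) and (ii) follow---so your version spells out explicitly the eigenvalue bounds on the Hessian and the compactness argument on $\mathbb{T}^2\setminus(U_\delta(0)\times U_\delta(0))$ that the paper leaves implicit.
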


\begin{proof}
Since the function $w_2(\cdot,\cdot)$ is analytic on ${\Bbb
T}^2$ and it has a unique non-degenerate global minimum at the point
$(0,0) \in {\Bbb T}^2,$ the following decomposition holds
\begin{eqnarray*}
w_2(x,y)=m+\frac{1}{2} \left( \frac{\partial^2 w_2(0,0)}{\partial
x^2} x^2 + 2 \frac{\partial^2 w_2(0,0)}{\partial x \partial y} xy+
\frac{\partial^2 w_2(0,0)}{\partial y^2} y^2
\right)+O(|x|^3+|y|^3)
\end{eqnarray*}
as $x,y \to 0.$ Therefore, there exist positive numbers $C_1, C_2,
C_3$ and $\delta$ such that (i) and (ii) hold true.
\end{proof}

\begin{lemma}\label{LEM 8}
Let the assumption {\rm (iii)} of Theorem~$\ref{THM 3}$ be
fulfilled. Then there exists a positive number $C_1$ such that the
inequality $\Delta(x\,; z) \geq C_1$ holds for all $x \in {\Bbb
T}$ and $z \leq m.$
\end{lemma}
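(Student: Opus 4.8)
The plan is to combine the continuity of $\Delta(\cdot\,;m)$ on the compact torus $\Bbb T$ with the strict positivity of its minimum (which is exactly assumption (iii)), and then use the monotonicity of $\Delta(x\,;\cdot)$ in the spectral variable to extend the bound from $z=m$ to all $z\le m$. First I would recall from the discussion preceding Theorem~\ref{THM 2} that since $v_1(0)=0$, the function $\Delta(\cdot\,;m)$ is continuous on $\Bbb T$. By assumption (iii) we have $\min_{x\in\Bbb T}\Delta(x\,;m)>0$; since $\Bbb T$ is compact and $\Delta(\cdot\,;m)$ is continuous, this minimum is attained, so there is a constant $C_1>0$ with
$$
\Delta(x\,;m)\ge C_1\qquad\text{for all }x\in\Bbb T.
$$

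Next I would use the monotonicity already established in the proofs of Lemmas~\ref{LEM 2} and~\ref{LEM 6}: for every fixed $x\in\Bbb T$ the function $z\mapsto\Delta(x\,;z)$ is continuous and strictly decreasing on $(-\infty,m]$, because
$$
\frac{\partial}{\partial z}\,\Delta(x\,;z)=-1-\frac12\int_{\Bbb T}\frac{v_1^2(t)\,dt}{(w_2(x,t)-z)^2}<0
$$
for $z<m_x$, and $m\le m_x$ for all $x$. Hence for any $z\le m$ we get $\Delta(x\,;z)\ge\Delta(x\,;m)\ge C_1$, which is precisely the asserted inequality. One should note that the integral defining $\Delta(x\,;m)$ is finite for every $x$: this is exactly the point checked before Theorem~\ref{THM 2} using estimate~\eqref{Eq1} together with the fact that $w_2(0,\cdot)$ has a non-degenerate minimum at $y=0$, so that the quadratic vanishing of $v_1^2$ near $0$ (of order $2\alpha\ge2$) dominates the quadratic vanishing of $w_2(x,t)-m_x$.

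The only mild subtlety — and the step I expect to need the most care — is the justification that $\Delta(\cdot\,;m)$ is genuinely continuous up to the value $z=m$ rather than merely on $(-\infty,m)$, i.e. that the limit $\Delta(0\,;m)=\lim_{x\to0}\Delta(x\,;m)$ holds. This is handled by the Lebesgue dominated convergence argument already invoked in the text before Theorem~\ref{THM 2}: one bounds the integrand near $t=0$ uniformly for small $x$ using Lemma~\ref{LEM 7}(i) (which gives $w_2(x,t)-m\ge C_1' (x^2+t^2)$ there) and the estimate $v_1^2(t)\le C_2'|t|^{2\alpha}$, so that $v_1^2(t)/(w_2(x,t)-m)\lesssim |t|^{2\alpha-2}$ is integrable and independent of $x$; away from $t=0$ the integrand is continuous and bounded by Lemma~\ref{LEM 7}(ii). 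Once continuity on the closed region is in hand, the compactness argument above closes the proof.
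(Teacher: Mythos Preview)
Your proposal is correct and follows essentially the same route as the paper's own proof: use the monotone decrease of $z\mapsto\Delta(x\,;z)$ on $(-\infty,m]$ together with assumption~(iii) to get $\Delta(x\,;z)\ge\Delta(x\,;m)\ge\min_{x\in\Bbb T}\Delta(x\,;m)=:C_1>0$. The paper's argument is in fact terser than yours; the additional justifications you supply (the explicit derivative and the dominated-convergence check of continuity at $z=m$) are consistent with, and merely elaborate on, what the paper takes from the discussion preceding Theorem~\ref{THM 2}.
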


\begin{proof}
By assumption (iii) of Theorem~\ref{THM 3} we have $\min\limits_{x
\in {\Bbb T}} \Delta(x\,; m)>0.$ Since for any $x \in {\Bbb T}$
the function $\Delta(x\,; \cdot)$ is monotonically decreasing in
$(-\infty, m],$ we have
$$
\Delta(x\,; z) \geq \Delta(x\,; m) \geq \min\limits_{x \in {\Bbb
T}} \Delta(x\,; m)>0
$$
for all $x \in {\Bbb T}$ and $z \leq m.$ Now setting
$C_1:=\min\limits_{x \in {\Bbb T}} \Delta(x\,; m)$ we complete the
proof.
\end{proof}

We recall that by Lemma~\ref{LEM 1} the set $ \sigma$ is equal to
the set of all complex numbers $z \in {\Bbb C} \setminus [m_x,
M_x]$ such that $\Delta(x\,; z)=0$ for some $x \in {\Bbb T}.$

If the condition (i) or (ii) of Theorem~\ref{THM 3} holds, then by
the assertions (i) and (ii) of Theorem~\ref{THM 2} we have $E_{\rm
min} \in \sigma,$ hence there exists $x_1 \in {\Bbb T}$ such that
$\Delta(x_1\,; E_{\rm min})=0.$ Since $E_{\rm min}<m,$ the
function $\Delta(\cdot\,; E_{\rm min})$ is a regular in ${\Bbb
T}.$ Therefore, the number of zeros of this function is finite.

Let
$\{x \in {\Bbb T}: \Delta(x\,; E_{\rm min})=0\}=\{x_1,\ldots,x_n\}$
and $k_j$ be the multiplicity of $x_j$ for $j \in \{1,\ldots,n\}.$
The fact $E_{\rm min}<m$ implies that the difference $w_2(x,y)-z$
is positive for all $x,y \in {\Bbb T}$ and $z \le E_{\rm min}.$
Hence the function $(w_2(\cdot,\cdot)-z)^{-1}$ is an analytic one on
${\Bbb T}^2$ for all $z \le E_{\rm min}.$ Then there exists a
number $\delta>0$ such that for any $i,j \in \{1,\ldots,n\}$ and
$z \le E_{\rm min}$ the following representations are valid
\begin{eqnarray}
\frac{v_1(x)v_1(y)}{2(w_2(x,y)-z)}=\sum\limits_{k=0}^{[k_i/2]}
c_{ik}^{(1)}(z;x)(y-x_i)^k+(y-x_i)^{[k_i/2]+1}m_{i}^{(1)}(z;x,y),
\label{Eq5}
\end{eqnarray}
for all $x \in {\Bbb T}$ and $y \in U_\delta(x_i);$
\begin{eqnarray}
\frac{v_1(x)v_1(y)}{2(w_2(x,y)-z)}=\sum\limits_{k=0}^{[k_i/2]}
c_{ik}^{(2)}(z;y)(x-x_i)^k+(x-x_i)^{[k_i/2]+1}m_{i}^{(2)}(z;x,y),
\label{Eq6}
\end{eqnarray}
for all $x \in U_\delta(x_i)$ and $y \in {\Bbb T};$
\begin{eqnarray}
&&\frac{v_1(x)v_1(y)}{2(w_2(x,y)-z)}=\sum\limits_{k=0}^{[k_i/2]}
\sum\limits_{r=0}^{[k_j/2]} d_{ij}^{kr}(z)(x-x_i)^k(y-x_j)^r \nonumber\\
&&+ \sum\limits_{k=0}^{[k_i/2]} \sum\limits_{r=[k_j/2]+1}^\infty
d_{ij}^{kr}(z)(x-x_i)^k(y-x_j)^r\\
&&+\sum\limits_{k=[k_i/2]+1}^\infty \sum\limits_{r=0}^{[k_j/2]}
d_{ij}^{kr}(z)(x-x_i)^k(y-x_j)^r+ (x-x_i)^{[k_i/2]+1}
(y-x_j)^{[k_j/2]+1} q_{ij}(z;x,y),\nonumber
\label{Eq7}
\end{eqnarray}
for all $(x,y) \in U_\delta(x_i) \times U_\delta(x_j).$
Here $[\cdot]$ is the entire part of $a,$ for any $z \le E_{\rm min}$
the numbers $d_{ij}^{kr}(z)$ are some real coefficients, the
functions $c_{ik}^{(\alpha)}(z;\cdot),$ $\alpha=1,2;$
$m_{i}^{(1)}(z;\cdot,\cdot);$ $m_{i}^{(2)}(z;\cdot,\cdot)$ and
$q_{ij}^{kr}(z;\cdot,\cdot)$ are some analytic functions on ${\Bbb
T};$ ${\Bbb T} \times U_\delta(x_i);$ $U_\delta(x_i) \times {\Bbb
T}$ and $U_\delta(x_i) \times U_\delta(x_j),$ respectively.

\begin{lemma}\label{LEM 4}
Let the assumption {\rm (i)} or {\rm (ii)} of Theorem~$\ref{THM
3}$ be fulfilled and $j \in \{1,\ldots,n\}.$ Then there exist
numbers $C>0$ and $\delta>0$ such that the inequality
\begin{eqnarray}
\frac{|x-x_j|^{[k_j/2]+1}}{\sqrt{\Delta(x\,; z)}} \le C
\label{Eq4}
\end{eqnarray}
holds for all $x \in U_\delta(x_j)$ and $z \le E_{\rm min}.$
\end{lemma}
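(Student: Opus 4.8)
The statement to prove is Lemma~\ref{LEM 4}: under assumption (i) or (ii) of Theorem~\ref{THM 3}, for each zero $x_j$ of $\Delta(\cdot\,;E_{\rm min})$ of multiplicity $k_j$, the quotient $|x-x_j|^{[k_j/2]+1}/\sqrt{\Delta(x\,;z)}$ stays bounded for $x$ near $x_j$ and $z\le E_{\rm min}$. The key point is that $\Delta(\cdot\,;E_{\rm min})$ vanishes at $x_j$ to order exactly $k_j$, so near $x_j$ one has a lower bound $\Delta(x\,;E_{\rm min})\ge c\,|x-x_j|^{k_j}$; combining this with monotonicity of $\Delta(x\,;\cdot)$ in $z$ should give the claim with the exponent $[k_j/2]+1\ge k_j/2$.

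First I would fix $j$ and work in a small neighbourhood $U_\delta(x_j)$. Recall from the discussion preceding the lemma that, since $E_{\rm min}<m$, the function $\Delta(\cdot\,;E_{\rm min})$ is real-analytic (``regular'') on $\mathbb T$, so its zero at $x_j$ has finite order $k_j$ and we may write $\Delta(x\,;E_{\rm min})=(x-x_j)^{k_j}\,g(x)$ with $g$ analytic and $g(x_j)\ne 0$. Since $\Delta(x\,;E_{\rm min})\ge 0$ for $x$ near $x_j$ (it is nonnegative on a neighbourhood of any of its zeros when $E_{\rm min}=\min\sigma$, because $\Delta(x\,;z)>0$ for $z<E_{\rm min}$ and $z\mapsto\Delta(x\,;z)$ is continuous and decreasing — hence $k_j$ is even and $g(x_j)>0$), shrinking $\delta$ we get a constant $c_0>0$ with
$$
\Delta(x\,;E_{\rm min})\ \ge\ c_0\,|x-x_j|^{k_j},\qquad x\in U_\delta(x_j).
$$
Second, I would upgrade this from $z=E_{\rm min}$ to all $z\le E_{\rm min}$: for any fixed $x$ the map $z\mapsto\Delta(x\,;z)=w_1(x)-z-\tfrac12\int_{\mathbb T}\frac{v_1^2(t)\,dt}{w_2(x,t)-z}$ is monotonically decreasing on $(-\infty,m)$ (both $-z$ and $-\tfrac12\int\frac{v_1^2}{w_2-z}$ decrease as $z$ decreases — wait, carefully: as $z$ decreases, $-z$ increases and the integral decreases, so $\Delta$ increases), hence $\Delta(x\,;z)\ge\Delta(x\,;E_{\rm min})\ge c_0|x-x_j|^{k_j}$ for all $z\le E_{\rm min}$ and all $x\in U_\delta(x_j)$. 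Therefore
$$
\frac{|x-x_j|^{[k_j/2]+1}}{\sqrt{\Delta(x\,;z)}}\ \le\ \frac{|x-x_j|^{[k_j/2]+1}}{\sqrt{c_0}\,|x-x_j|^{k_j/2}}\ =\ \frac{1}{\sqrt{c_0}}\,|x-x_j|^{[k_j/2]+1-k_j/2}.
$$
Since $[k_j/2]+1-k_j/2\ge 1/2>0$ always, the right-hand side is bounded by $C:=\delta^{1/2}/\sqrt{c_0}$ (using $k_j$ even it is in fact $|x-x_j|/\sqrt{c_0}$), which is the desired estimate \eqref{Eq4}.

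The main obstacle — really the only subtle point — is justifying the lower bound $\Delta(x\,;E_{\rm min})\ge c_0|x-x_j|^{k_j}$ with a constant uniform in a full neighbourhood, in particular checking that $\Delta(\cdot\,;E_{\rm min})$ does not change sign near $x_j$ so that the analytic factor $g$ has a definite sign. This uses essentially that $E_{\rm min}=\min\sigma$ together with Lemma~\ref{LEM 1}: if $\Delta(x\,;E_{\rm min})$ were negative for some $x$ arbitrarily close to $x_j$, then by \eqref{Eq2} and monotonicity $h(x)$ would have an eigenvalue strictly below $E_{\rm min}$, contradicting the definition of $E_{\rm min}$ as the bottom of the essential spectrum (hence of $\sigma$) — so $\Delta(\cdot\,;E_{\rm min})\ge 0$ near $x_j$, forcing $k_j$ even and $g(x_j)>0$. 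Once that sign issue is settled, the rest is the elementary chain of inequalities above. I would also note at the end that there are only finitely many $x_j$, so taking the minimum of the finitely many constants $c_0$ and of the $\delta$'s gives uniform $C,\delta$ if one prefers a single pair serving all $j$, though the statement only asks for the estimate for a fixed $j$.
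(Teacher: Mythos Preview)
Your proof is correct and follows essentially the same approach as the paper's: reduce to $z=E_{\rm min}$ via the monotonicity of $\Delta(x\,;\cdot)$, then use analyticity of $\Delta(\cdot\,;E_{\rm min})$ together with the fact that $x_j$ is a zero of multiplicity $k_j$ to obtain the local lower bound $\Delta(x\,;E_{\rm min})\ge c_0|x-x_j|^{k_j}$. In fact you supply a point the paper leaves implicit, namely that $\Delta(\cdot\,;E_{\rm min})\ge 0$ on $\mathbb T$ (forcing $k_j$ even and $g(x_j)>0$), which is precisely what guarantees the lower bound holds with a positive constant.
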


\begin{proof}
If the assumption {\rm (i)} or {\rm (ii)} of Theorem~\ref{THM 3}
holds, then by Theorem~\ref{THM 2} we have $E_{\rm min}<m.$ Since
the function $\Delta(x\,; \cdot)$ is monotonically decreasing on
$(-\infty, m)$ we have
$$
\frac{|x-x_j|^{[k_j/2]+1}}{\sqrt{\Delta(x\,; z)}} \le
\frac{|x-x_j|^{[k_j/2]+1}}{\sqrt{\Delta(x\,; E_{\rm min})}}
$$
for all $z \le E_{\rm min}.$ Taking into account the fact that the
number $k_j$ is the multiplicity of the $x_j$ and the function
$\Delta(\cdot\,; E_{\rm min})$ is analytic on ${\Bbb T}$ we
obtain the inequality \eqref{Eq4}.
\end{proof}

\begin{lemma}\label{LEM 5}
Let the assumptions of Theorem~$\ref{THM 3}$ be satisfied. Then
for any $z \le E_{\rm min}$ the operator $T(z)$ can be represented
in the form $T(z)=T_0(z)+T_1(z),$
where the operator-valued function $T_0(\cdot)$ is continuous in
the operator-norm in $(-\infty; E_{\rm min}]$ and $T_1(z)$ is a
finite-dimensional operator for all $z \le E_{\rm min}$ whose
dimension is independent of $z.$
\end{lemma}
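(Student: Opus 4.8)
The plan is to decompose $T(z)$ by localizing the kernel of $T_{11}(z)$ near the finitely many points $x_1,\dots,x_n$ where $\Delta(\cdot\,;E_{\rm min})$ vanishes, and to show that the part supported away from these points is norm-continuous up to $z=E_{\rm min}$, while the singular part near each $x_j$ is finite-dimensional modulo a norm-continuous remainder. First I would treat $T_{00}(z)$ and $T_{01}(z)$: the scalar $T_{00}(z)=(1+z-w_0)$ is obviously continuous, and for $T_{01}(z)g_1=-\int_{\mathbb T} v_0(t)g_1(t)/\sqrt{\Delta(t\,;z)}\,dt$ the crucial point is that, under any of the hypotheses (i)--(iii), the weight $1/\sqrt{\Delta(t\,;z)}$ stays in $L_2(\mathbb T)$ uniformly for $z\le E_{\rm min}$: under (iii) this follows from Lemma~\ref{LEM 8}, and under (i) or (ii) from Lemma~\ref{LEM 4} together with analyticity of $\Delta(\cdot\,;E_{\rm min})$ on $\mathbb T$ (a zero of finite multiplicity $k_j$ gives an integrable singularity of $|t-x_j|^{-k_j}$ only if $k_j=1$, but since $v_0$ may not vanish there this forces $k_j$ small; more robustly one uses that $v_1^2$ itself vanishes to order controlling $k_j$). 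In fact $T_{01}(z)$ is a rank-one operator in its range direction, hence automatically finite-dimensional, and it depends norm-continuously on $z$ because $t\mapsto 1/\sqrt{\Delta(t\,;z)}$ converges in $L_2$ as $z\uparrow E_{\rm min}$ by dominated convergence. So the real work is with $T_{11}(z)$.

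For $T_{11}(z)$, fix $\delta>0$ as in Lemma~\ref{LEM 4} and Lemma~\ref{LEM 7}, small enough that the neighborhoods $U_\delta(x_j)$ are disjoint. Write the integral kernel of $T_{11}(z)$ as
$$
K_z(x,t)=\frac{1}{\sqrt{\Delta(x\,;z)}\sqrt{\Delta(t\,;z)}}\cdot\frac{v_1(x)v_1(t)}{2(w_2(x,t)-z)},
$$
and split $1=\sum_j \chi_j(x)+\chi_0(x)$ where $\chi_j$ is the indicator of $U_\delta(x_j)$ and $\chi_0$ the indicator of the complement (in case (iii) there are no $x_j$'s and one only needs the global bound from Lemma~\ref{LEM 8}, making the whole argument trivial — this is why (iii) is listed separately). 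Then insert $1=\sum_i\chi_i(t)+\chi_0(t)$ on the $t$-side as well, obtaining a finite sum of pieces $\chi_a(x)K_z(x,t)\chi_b(t)$. On any piece with $a=0$ or $b=0$ the weight $1/\sqrt{\Delta}$ is bounded on the relevant set uniformly in $z$ (away from all $x_j$, $\Delta(\cdot\,;E_{\rm min})$ is bounded below, hence so is $\Delta(\cdot\,;z)$ for $z\le E_{\rm min}$ by monotonicity), and $(w_2(x,t)-z)^{-1}$ is jointly analytic and $z$-continuous by $E_{\rm min}<m$; such Hilbert--Schmidt pieces are therefore norm-continuous on $(-\infty,E_{\rm min}]$ and go into $T_0(z)$. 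The genuinely singular pieces are the diagonal ones $\chi_j(x)K_z(x,t)\chi_j(t)$ near a common point $x_j$, plus the off-diagonal pairs $\chi_i(x)K_z(x,t)\chi_j(t)$ with $i\ne j$.

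On each singular piece I substitute the Taylor expansion of $v_1(x)v_1(t)/(2(w_2(x,t)-z))$ furnished by \eqref{Eq5}--\eqref{Eq7}. For the off-diagonal pair near $(x_i,x_j)$, \eqref{Eq7} writes the numerator as a finite sum of terms $d^{kr}_{ij}(z)(x-x_i)^k(t-x_j)^r$ with $k\le[k_i/2]$, $r\le[k_j/2]$, plus tails carrying at least one factor $(x-x_i)^{[k_i/2]+1}$ or $(t-x_j)^{[k_j/2]+1}$; dividing by $\sqrt{\Delta(x\,;z)}\sqrt{\Delta(t\,;z)}$, the finite-sum terms produce rank-one operators $g\mapsto \langle g,\beta^{kr}_{ij,z}\rangle\alpha^{k}_{i,z}$ whose total number is $\sum_{i,j}([k_i/2]+1)([k_j/2]+1)$, independent of $z$, while each tail term is controlled by Lemma~\ref{LEM 4} (the factor $(x-x_i)^{[k_i/2]+1}/\sqrt{\Delta(x\,;z)}$ is bounded, and the remaining $1/\sqrt{\Delta(t\,;z)}$ times the $t$-cutoff is in $L_2$ uniformly) and hence is Hilbert--Schmidt with norm continuous in $z$ up to $E_{\rm min}$. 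The diagonal piece near $x_j$ is handled identically using \eqref{Eq5} (or symmetrically \eqref{Eq6}) to peel off $[k_j/2]+1$ rank-one terms and leave a uniformly-Hilbert--Schmidt remainder; one must check the partial kernels $c^{(\alpha)}_{jk}(z;\cdot)$ and $m^{(\alpha)}_j(z;\cdot,\cdot)$ depend continuously on $z$, which is immediate from their definition as contour integrals / derivatives of the analytic function $(w_2-z)^{-1}$. Collecting: $T_1(z)$ is the direct sum of all these rank-one contributions, of fixed total dimension $d=1+\sum_{i,j}([k_i/2]+1)([k_j/2]+1)$ (the extra $1$ from $T_{01}$), and $T_0(z)=T(z)-T_1(z)$ is a norm-continuous family of Hilbert--Schmidt operators on $(-\infty,E_{\rm min}]$.

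The main obstacle will be bookkeeping the off-diagonal cross terms in \eqref{Eq7} so that no singular factor is left unmatched — one has to verify that every term either lands in the finite-rank part or retains enough vanishing to absorb a full power $[k_i/2]+1$ in $x$ and/or $[k_j/2]+1$ in $y$ before dividing by the square roots, and that the "mixed" tails (finite in $x$, infinite in $y$, as in the second and third sums of \eqref{Eq7}) still give square-summable kernels after weighting. This is exactly where Lemma~\ref{LEM 4}, combined with the observation that $v_1^2$ vanishes at $x_j$ to an order at least matching the growth of $1/\Delta$ there (so that the surviving half-power of $v_1$ tames the leftover $1/\sqrt{\Delta}$ in the $t$-variable), does all the heavy lifting; everything else is routine estimation.
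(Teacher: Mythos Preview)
Your overall framework matches the paper's: localize near the zeros $x_1,\dots,x_n$ of $\Delta(\cdot\,;E_{\rm min})$, Taylor-expand the smooth factor $v_1(x)v_1(y)/(w_2(x,y)-z)$ via \eqref{Eq5}--\eqref{Eq7}, and sort the resulting terms into a norm-continuous Hilbert--Schmidt part and a finite-rank part. But the bookkeeping you propose is wrong in two places, and both are genuine gaps.

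First, the pieces $\chi_0(x)K_z(x,t)\chi_j(t)$ (one variable away from the zeros, the other near $x_j$) are \emph{not} Hilbert--Schmidt as you claim: $1/\sqrt{\Delta(t\,;z)}$ is \emph{not} bounded near $t=x_j$, and is not even in $L_2$ there at $z=E_{\rm min}$ (since $\Delta(t\,;E_{\rm min})\sim c(t-x_j)^{k_j}$ with $k_j\ge 1$). The same issue recurs with the ``mixed tails'' in \eqref{Eq7}: a term $(x-x_i)^k(y-x_j)^r$ with $k\le[k_i/2]$ and $r\ge[k_j/2]+1$ gives, after weighting, the factor $(x-x_i)^k/\sqrt{\Delta(x\,;z)}$, which is \emph{not} in $L_2$ for small $k$. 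Your proposed remedy --- that ``$v_1^2$ vanishes at $x_j$'' --- is simply false: under (i) one has $v_1(0)\ne 0$, and in any case the $x_j$ are zeros of $\Delta(\cdot\,;E_{\rm min})$, not of $v_1$; besides, the $v_1$ factors are already absorbed into the Taylor coefficients. The correct observation (and this is what the paper does) is that all of these problematic pieces are themselves \emph{finite-rank}: a term with $k\le[k_i/2]$ but $r$ unrestricted is a sum of $[k_i/2]+1$ separable kernels in $x$, hence rank at most $[k_i/2]+1$. They therefore go into $T_1(z)$, not $T_0(z)$. Only the piece carrying \emph{both} high powers $(x-x_i)^{[k_i/2]+1}(y-x_j)^{[k_j/2]+1}q_{ij}$ (plus the fully unlocalized piece) survives into $T_0(z)$, and Lemma~\ref{LEM 4} then shows it is uniformly Hilbert--Schmidt.

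Second, case (iii) is not ``trivial''. Under (iii) one has $E_{\rm min}=m$, so $(w_2(x,y)-z)^{-1}$ itself becomes singular at $(0,0)$ as $z\uparrow m$. The global lower bound on $\Delta$ from Lemma~\ref{LEM 8} handles the $\sqrt{\Delta}$ weights, but one must still check that the full kernel stays square-integrable at $z=m$; this requires the hypothesis $v_1(0)=0$ together with Lemma~\ref{LEM 7}, which give a bound of the form $C(1+|x|^\alpha|y|^\alpha/(x^2+y^2))$ with $\alpha\ge 1$, square-integrable on ${\Bbb T}^2$. Without this step the claimed norm-continuity of $T_0(z)$ up to $z=m$ is unsupported.
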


\begin{proof}
Since the operators $T_{00}(z),$ $T_{01}(z)$ and $T_{01}^*(z)$ are of
rank one independently of $z,$ it is sufficient to study the
operator $T_{11}(z).$

We denote the kernel of the integral operator
$T_{11}(z)$ by $T_{11}(z;x,y),$ that is,
$$
T_{11}(z;x,y):=\frac{v_1(x)v_1(y)}{2\sqrt{\Delta(x\,;
z)}(w_2(x,y)-z) \sqrt{\Delta(y\,; z)}}.
$$

First we will prove the statement of lemma under the assumption (i) or
(ii) of Theorem~\ref{THM 3}. In this case $E_{\rm min}<m$ and
using the representations \eqref{Eq5}--\eqref{Eq7} we obtain
$
T_{11}(z)=T_{11}^0(z)+T_{11}^1(z),
$
where the kernels $T_{11}^0(z;x,y)$ and $T_{11}^1(z;x,y)$ of the
integral operators $T_{11}^0(z)$ and $T_{11}^1(z)$ has form
\begin{eqnarray*}
T_{11}^0(z;x,y):&&=(1-\chi_{V_\delta}(x))(1-\chi_{V_\delta}(y))
T_{11}(z;x,y)\\
&&+\frac{(1-\chi_{V_\delta}(x))}{\sqrt{\Delta(x\,; z)}}
\sum\limits_{i=1}^n \frac{\chi_{V_\delta}(y)
(y-x_i)^{[k_i/2]+1}}{\sqrt{\Delta(y\,; z)}}
M_i^{(1)}(z;x,y)\\
&&+\frac{(1-\chi_{V_\delta}(y))}{\sqrt{\Delta(y\,; z)}}
\sum\limits_{i=1}^n \frac{\chi_{V_\delta}(x)
(x-x_i)^{[k_i/2]+1}}{\sqrt{\Delta(x\,; z)}}
M_i^{(2)}(z;x,y)\\
&&+\chi_{V_\delta}(x) \chi_{V_\delta}(y) \sum\limits_{i,j=1}^n \frac{
(x-x_i)^{[k_i/2]+1}(y-x_j)^{[k_j/2]+1}}{\sqrt{\Delta(x\,; z)}
\sqrt{\Delta(y\,; z)}} Q_{ij}(z;x,y);
\end{eqnarray*}
\begin{eqnarray*}
T_{11}^1(z;x,y):&&= \frac{(1-\chi_{V_\delta}(x)) \chi_{V_\delta}(y)}
{\sqrt{\Delta(x\,; z)} \sqrt{\Delta(y\,; z)}} \sum\limits_{i=1}^n
\sum\limits_{k=0}^{[k_i/2]} (y-x_i)^k
c_{ik}^{(1)}(z;x)\\
&&+\frac{\chi_{V_\delta}(x) (1-\chi_{V_\delta}(y))}{\sqrt{\Delta(x\,; z)}
\sqrt{\Delta(y\,; z)}} \sum\limits_{i=1}^n
\sum\limits_{k=0}^{[k_i/2]} (x-x_i)^k
c_{ik}^{(2)}(z;y)\\
&&+\frac{\chi_{V_\delta}(x) \chi_{V_\delta}(y)}{\sqrt{\Delta(x\,; z)}
\sqrt{\Delta(y\,; z)}} \sum\limits_{i,j=1}^n \Bigl(
\sum\limits_{k=0}^{[k_i/2]} \sum\limits_{r=0}^{[k_j/2]}
d_{ij}^{kr}(z)(x-x_i)^k (y-x_j)^r\\
&&+ \sum\limits_{k=0}^{[k_i/2]} \sum\limits_{r=[k_j/2]+1}^\infty
d_{ij}^{kr}(z)(x-x_i)^k (y-x_j)^r
+ \sum\limits_{k=[k_i/2]+1}^\infty \sum\limits_{r=0}^{[k_j/2]}
d_{ij}^{kr}(z)(x-x_i)^k (y-x_j)^r \Bigr),
\end{eqnarray*}
respectively, where $V_\delta:=\bigcup\limits_{i=1}^n U_\delta(x_i),$ $\chi_A(\cdot)$ is the characteristic
function of the set $A \subset {\Bbb T},$
\begin{eqnarray*}
&&M_i^{(1)}(z;x,y):= \left \lbrace
\begin{array}{ll}
m_i^{(1)}(z;x,y), \,\, (x,y) \in {\Bbb T} \times U_\delta(x_i),\\
0, \qquad \qquad \quad \,\,\,\, {\rm otherwise},
\end{array} \right.\\
&&M_i^{(2)}(z;x,y):= \left \lbrace
\begin{array}{ll}
m_i^{(2)}(z;x,y), \,\, (x,y) \in U_\delta(x_i) \times {\Bbb T},\\
0, \qquad \qquad \quad \,\,\,\, {\rm otherwise},
\end{array} \right.\\
&&Q_{ij}(z;x,y):= \left \lbrace
\begin{array}{ll}
q_{ij}(z;x,y), \,\, (x,y) \in U_\delta(x_i) \times U_\delta(x_j),\\
0, \qquad \qquad \quad \,\,\,\, {\rm otherwise}.
\end{array} \right.
\end{eqnarray*}

Applying Lemma~\ref{LEM 4} we obtain that the function
$T_{11}^0(z;\cdot,\cdot),$ $z \le E_{\rm min}$ is
square-integrable on ${\Bbb T}^2$ and converges almost everywhere
to $T_{11}^0(E_{\rm min};\cdot,\cdot)$ as $z \to E_{\rm min}-0.$
Then by the Lebesgue dominated convergence theorem the operator
$T_{11}^0(z)$ converges in the operator-norm to $T_{11}^0(E_{\rm min})$ as
$z \to E_{\rm min}-0.$ The finite dimensionality of the operator
$T_{11}^1(z)$ follows from the definition of $T_{11}^1(z;x,y).$
Now setting
$$
T_0(z):=\left( \begin{array}{cc}
0 & 0\\
0 & T_{11}^0(z)\\
\end{array}
\right), \quad T_1(z):=\left( \begin{array}{cc}
T_{00}(z) & T_{01}(z)\\
T_{01}^*(z) & T_{11}^1(z)\\
\end{array}
\right)
$$
we complete proof of Lemma~\ref{LEM 5} under the assumption (i) or
(ii) of Theorem~\ref{THM 3}.

Let the assumption (iii) of Theorem~\ref{THM 3} be satisfied. Then
by Theorem~\ref{THM 2} we have $E_{\rm min}=m.$ Applying Lemmas~\ref{LEM 7}
and \ref{LEM 8} and as well as inequality \eqref{Eq1} one can see
that the function $|T_{11}(z;\cdot,\cdot)|$ can be estimated by
$$
C_1 \left(1+\frac{|x|^\alpha |y|^\alpha}{x^2+y^2} \right)
$$
for $z \le m$ with $\alpha \ge 1.$ The latter function is a
square-integrable on ${\Bbb T}^2$ and the function
$T_{11}(z;\cdot,\cdot)$ converges almost everywhere to
$T_{11}(m;\cdot,\cdot)$ as $z \to m-0.$ Then by the Lebesgue dominated
convergence theorem the operator $T_{11}(z)$ converges in the norm
to $T_{11}(m)$ as $z \to m-0.$ Now setting
$$
T_0(z):=\left( \begin{array}{cc}
0 & 0\\
0 & T_{11}(z)\\
\end{array}
\right), \quad T_1(z):=\left( \begin{array}{cc}
T_{00}(z) & T_{01}(z)\\
T_{01}^*(z) & 0\\
\end{array}
\right)
$$
we complete proof of Lemma~\ref{LEM 5} under the assumption (iii) of
Theorem~\ref{THM 3}.
\end{proof}

We are now ready for the proof of Theorem~$\ref{THM 3}.$

\begin{proof2}
Using the Weyl inequality
\begin{eqnarray}
n(\lambda_1+\lambda_2, A_1+A_2) \leq n(\lambda_1,
A_1)+n(\lambda_2, A_2)
\label{Weyl ineq}
\end{eqnarray}
for the sum of compact operators $A_1$ and $A_2$ and for any
positive numbers $\lambda_1$ and $\lambda_2$ we have
\begin{eqnarray}
n(1, T(z)) &&\leq n(2/3, T_0(z))+n(1/3, T_1(z))\nonumber\\
&&\leq n(1/3, T_0(z)-T_0(E_{\min}))+n(1/3, T_0(E_{\min}))+n(1/3,
T_1(z))
\label{Eq8}
\end{eqnarray}
for all $z<E_{\min}.$

By virtue of Lemma~\ref{LEM 5} the operator $T_0(E_{\rm min})$ is
compact and hence $n(1/3, T_0(E_{\min}))<\infty$ and $n(1/3,
T_0(z)-T_0(E_{\min}))$ tends to zero as $z \to E_{\min}-0.$ Since
$T_1(z)$ is a finite-dimensional operator and its dimension is
independent of $z,$ $z<E_{\min},$ there exists a number $F$ such
that for all $z<E_{\min},$ we have $n(1/3, T_1(z)) \le F<\infty.$
So, by the inequality \eqref{Eq8} we obtain that the number $n(1,
T(z))$ is finite for all $z<E_{\min}.$

Now Lemma~\ref{LEM 3} implies that $N(z)=n(1, T(z))$ as
$z<E_{\min}$ and hence
$$
\lim\limits_{z\to E_{\min}-0}N(z)= N(E_{\min}) \leq n(1/3,
T_0(E_{\min}))+n(1/3, T_1(E_{\min}))<\infty.
$$
It means that the number of eigenvalues of $H$ lying on the
left of $E_{\min}$ is finite.
\end{proof2}

\section{Infiniteness of the number of eigenvalues of $H$}

In this section we consider the case when the parameter functions
$v_i(\cdot),$ $i=1,2,$ $w_1(\cdot)$ and $w_2(\cdot,\cdot)$ have the
special forms:
\begin{eqnarray*}
&&v_0(x):=0, \quad w_1(x):=a, \quad v_1(x):=b, \quad a, b \in {\Bbb
R} \setminus \{0\};\\
&&w_2(x,y):=\varepsilon(x-y), \quad \varepsilon(x):=1-\cos x.
\end{eqnarray*}

It is obvious that the function $w_2(\cdot,\cdot)$ has
non-degenerate minimum at the points of the form $(x,x)$ for any
$x \in {\Bbb T}.$
Then it is clear that the number $z=w_0$ is an eigenvalue of $H$
with the associated eigenvector $f=(f_0,0,0)$ with $f_0 \neq 0$
and the equality holds
$
\sigma_{\rm ess}(H)=\{E_{\rm min}\} \cup [0, 2] \cup \{E_{\rm
max}\},
$
where $E_{\rm min}$ and $E_{\rm max}$ are zeros of the function
$\Delta(\cdot)$ defined on ${\Bbb C} \setminus [0, 2]$ by
$$
\Delta(z):=a-z-\frac{b^2}{2}\int_{\Bbb T}
\frac{dt}{\varepsilon(t)-z}
$$
such that $E_{\rm min}<0$ and $E_{\rm max}>2.$

We define the function $D(\cdot)$ on ${\Bbb C} \setminus
\sigma_{\rm ess}(H)$ as
$$
D(z):=\prod_{k=0}^\infty D_k(z),\quad
D_k(z):=1-\frac{1}{2 \Delta(z)} d_k(z), \quad
d_k(z):=\int_{\Bbb T}
\frac{\cos(kt)dt}{\varepsilon(t)-z}.
$$

The following lemma establishes a connection between eigenvalues
of the operator $H$ and zeros of the function $D(\cdot).$

\begin{lemma}\label{LEM 9}
The number $z \in {\Bbb C} \setminus (\sigma_{\rm ess}(H) \cup
\{w_0\})$ is an eigenvalue of $H$ if and only if $D(z)=0.$
Moreover, if for some $k \in {\Bbb N}$ the number $z_k \in {\Bbb
C} \setminus \sigma_{\rm ess}(H)$ is an eigenvalue of $H$ with
$D_k(z_k)=1-\lambda_k(z_k)=0,$ then the corresponding eigenvector
$f^{(k)}$ has the form $f^{(k)}:=(0,f_1^{(k)},f_2^{(k)}),$ where the
functions $f_1^{(k)}$ and $f_2^{(k)}$ are defined by
\begin{eqnarray}
f_1^{(k)}(x):=\exp(\pm ikx), \quad f_2^{(k)}(x,y):=\frac{b
(f_1^{(k)}(x)+f_1^{(k)}(y))}{2(\varepsilon(x-y)-z_k)}.
\label{Eq12}
\end{eqnarray}
\end{lemma}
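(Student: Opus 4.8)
The plan is to solve the eigenvalue equation $Hf=zf$ directly, writing $f=(f_0,f_1,f_2)\in{\mathcal H}$ and exploiting that every parameter function is now constant or translation invariant. Componentwise, the zeroth equation reads $(w_0-z)f_0=0$, hence $f_0=0$ because $z\neq w_0$. Since $z\notin[0,2]\subset\sigma_{\rm ess}(H)$, the factor $\varepsilon(x-y)-z$ never vanishes, so the two-boson equation $H_{12}^*f_1+(H_{22}-z)f_2=0$ determines $f_2$ uniquely in terms of $f_1$, namely $f_2(x,y)=\frac{b\,(f_1(x)+f_1(y))}{2(\varepsilon(x-y)-z)}$ (cf.~\eqref{Eq12}); this function is symmetric because $\varepsilon$ is even and lies in ${\mathcal H}_2$ because its denominator is bounded away from $0$. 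Substituting it into the one-boson equation $(H_{11}-z)f_1+H_{12}f_2=0$ and using $\int_{\Bbb T}\frac{dt}{\varepsilon(x-t)-z}=\int_{\Bbb T}\frac{dt}{\varepsilon(t)-z}$, the whole system collapses to a single scalar integral equation for $f_1$, of the shape
\[
\Delta(z)\,f_1=\frac{b^2}{2}\,Kf_1,\qquad (Kf_1)(x):=\int_{\Bbb T}\frac{f_1(t)\,dt}{\varepsilon(x-t)-z},
\]
with $\Delta(\cdot)$ exactly the function in the statement; conversely, if $f_1=0$ then $f_2=0$ too, so every eigenvector of $H$ with eigenvalue $z\neq w_0$ has $f_1\neq0$ and solves this equation, while any nonzero solution $f_1$ of it produces, via $f_0=0$ and the above $f_2$, an eigenvector of $H$ at $z$. (Equivalently one may use the Birman--Schwinger operator $T(z)$ of Section~4: here $v_0\equiv0$ makes $T(z)=\mathrm{diag}\,(1+z-w_0,\,T_{11}(z))$, with $T_{11}(z)$ the convolution by the kernel $\tfrac{b^2}{2\Delta(z)(\varepsilon(x-y)-z)}$.)

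The key observation for the next step is that $K$ (equivalently $T_{11}(z)$) is a convolution operator on the circle, hence diagonalized by the Fourier basis: $Ke^{ik\cdot}=d_k(z)\,e^{ik\cdot}$, $k\in{\Bbb Z}$, the odd part of the integral dropping out because $\varepsilon$ is even and leaving exactly $d_k(z)=\int_{\Bbb T}\frac{\cos(kt)\,dt}{\varepsilon(t)-z}$. Expanding $f_1=\sum_{k\in{\Bbb Z}}c_ke^{ikx}$, the equation $\Delta(z)f_1=\frac{b^2}{2}Kf_1$ decouples into $c_k\bigl(\Delta(z)-\frac{b^2}{2}d_k(z)\bigr)=0$ for all $k$; since $d_{-k}=d_k$, a nonzero $f_1$ exists if and only if $\Delta(z)=\frac{b^2}{2}d_k(z)$ for some $k\in\{0,1,2,\dots\}$, that is, $\lambda_k(z)=1$, that is, $D_k(z)=0$. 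Finally, $(\varepsilon(\cdot)-z)^{-1}$ is analytic on ${\Bbb T}$ (as $z\notin[0,2]$), so its coefficients $d_k(z)$ decay faster than any power of $k$, whence $\sum_k|1-D_k(z)|<\infty$ and the product $D(z)=\prod_{k\ge0}D_k(z)$ converges and vanishes exactly when one of its factors $D_k(z)$ does. Chaining these equivalences with the first paragraph gives the first assertion: for $z\in{\Bbb C}\setminus(\sigma_{\rm ess}(H)\cup\{w_0\})$, $z$ is an eigenvalue of $H$ if and only if $D(z)=0$.

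For the ``moreover'' part, fix $k\in{\Bbb N}$ with $D_k(z_k)=1-\lambda_k(z_k)=0$, i.e.\ $\Delta(z_k)=\frac{b^2}{2}d_k(z_k)$. Both $e^{ikx}$ and $e^{-ikx}$ solve $\Delta(z_k)f_1=\frac{b^2}{2}Kf_1$, so putting $f_1^{(k)}(x)=\exp(\pm ikx)$, $f_0=0$, and $f_2^{(k)}$ by the displayed formula of the first paragraph gives a nonzero $f^{(k)}=(0,f_1^{(k)},f_2^{(k)})\in{\mathcal H}$; the check $Hf^{(k)}=z_kf^{(k)}$ is immediate — the two-boson row holds by construction of $f_2^{(k)}$, the zeroth row because $f_0=0$ and $v_0\equiv0$, and the one-boson row precisely because $\Delta(z_k)=\frac{b^2}{2}d_k(z_k)$. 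This is the eigenvector \eqref{Eq12}.

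There is no deep obstacle — the argument is essentially a computation — and the one conceptual point is recognizing $K$ (equivalently $T_{11}(z)$) as translation invariant, which is what permits the mode-by-mode diagonalization with eigenvalues $d_k(z)$. Beyond that, the care goes into the infinite product $D(z)=\prod_kD_k(z)$, namely its convergence and the equivalence ``$D(z)=0$ iff some $D_k(z)=0$'', both flowing from the rapid decay of the coefficients $d_k(z)$ of the analytic function $(\varepsilon(\cdot)-z)^{-1}$ on ${\Bbb T}$; and into restricting attention to $z\notin[0,2]$ (so $\varepsilon(x-y)-z\neq0$ and $f_2\in L_2$) with $\Delta(z)\neq0$ (so $D_k(z)$ is defined), which together are exactly $z\notin\sigma_{\rm ess}(H)$ here.
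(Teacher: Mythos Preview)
Your proof is correct and follows essentially the same route as the paper: reduce the eigenvalue system to $f_0=0$, solve for $f_2$ in terms of $f_1$, substitute to get the convolution equation $\Delta(z)f_1=\frac{b^2}{2}Kf_1$, and diagonalize $K$ by the Fourier basis to obtain the factors $D_k(z)$ and the explicit eigenfunctions $e^{\pm ikx}$.

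The one point of departure is how you link ``some $D_k(z)=0$'' to ``$D(z)=0$''. The paper packages this through Fredholm determinant theory: it observes that $\widetilde{T}(z)=\frac{b^2}{2\Delta(z)}K$ is trace class, invokes Theorem~XIII.106 of Reed--Simon~IV to identify $\det(I-\widetilde{T}(z))$ with the product $\prod_k(1-\lambda_k(z))$, and then Theorem~XIII.105 to say $1\in\sigma(\widetilde{T}(z))$ iff this determinant vanishes. You instead argue directly that the Fourier coefficients $d_k(z)$ of the analytic function $(\varepsilon(\cdot)-z)^{-1}$ decay rapidly, so $\sum_k|1-D_k(z)|<\infty$ and the product converges absolutely, vanishing precisely when a factor does. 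Your argument is more elementary and self-contained; the paper's has the advantage of appealing to a standard black box (and of making explicit that $D(z)$ is, up to the grouping $d_k=d_{-k}$, the Fredholm determinant of $I-\widetilde{T}(z)$). Either way the content is the same.
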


\begin{proof}
Let the number $z \in {\Bbb C} \setminus (\sigma_{\rm ess}(H) \cup
\{w_0\})$ be an eigenvalue of $H$ and $f=(f_0,f_1,f_2) \in
{\mathcal H}$ be the corresponding eigenvector. Then $f_0,$ $f_1$
and $f_2$ satisfy the following system of equations
\begin{eqnarray}
&&(w_0-z)f_0=0;\nonumber\\
&&(a-z)f_1(x)+b \int_{\Bbb T} f_2(x,t)dt=0;\\
&&\frac{b}{2}(f_1(x)+f_1(y))+(\varepsilon(x-y)-z)f_2(x,y)=0.\nonumber
\label{Eq9}
\end{eqnarray}

Using the condition $z \neq w_0$ we get from the first equation of
\eqref{Eq9} that $f_0=0.$ Since $z \not\in [0, 2],$ from the
third equation of the system \eqref{Eq9} for $f_2$ we find
\begin{eqnarray}
f_2(x,y)=-\frac{b(f_1(x)+f_1(y))}{2(\varepsilon(x-y)-z)}.
\label{Eq10}
\end{eqnarray}

Substituting the expression \eqref{Eq10} for $f_2$ into the second
equation of the system \eqref{Eq9} and using the fact that
$\Delta(z) \neq 0$ for any $z \in {\Bbb C} \setminus \sigma_{\rm
ess}(H)$ we conclude that the number $z \in {\Bbb C} \setminus
(\sigma_{\rm ess}(H) \cup \{w_0\})$ is an eigenvalue of $H$ if and
only if the number 1 is an eigenvalue of the integral operator
$\widetilde{T}(z)$ in $L_2({\Bbb T})$ with the kernel
$$
\frac{b^2}{2 \Delta(z) (\varepsilon(x-y)-z)}.
$$

Since the function $(\varepsilon(\cdot)-z)^{-1}$ is continuous
on ${\Bbb T}$ and $\Delta(z) \neq 0$ for all $z \in {\Bbb C}
\setminus \sigma_{\rm ess}(H),$ the operator $\widetilde{T}(z)$ is
Hilbert-Schmidt and as well trace
class. Hence, the determinant ${\rm det}(I-\widetilde{T}(z))$ of
the operator $I-\widetilde{T}(z)$ exists and is given by the
formula (see Theorem~XIII.106 of \cite{RS4})
\begin{eqnarray}
{\rm det}(I-\widetilde{T}(z))=\prod\limits_{k=0}^\infty
(1-\lambda_k(z)),
\label{Eq11}
\end{eqnarray}
where $I$ is the identity operator on $L_2({\Bbb T})$ and the
numbers $\{\lambda_k(z)\}$ are the eigenvalues of
$\widetilde{T}(z)$ counted with their algebraic multiplicities.
By Theorem~XIII.105 of \cite{RS4} the number 1 is an eigenvalue of
$\widetilde{T}(z)$ if and only if ${\rm
det}(I-\widetilde{T}(z))=0.$

Let $\varphi$ be the eigenfunction of $\widetilde{T}(z)$
associated with the eigenvalue $\lambda,$ that is,
$$
\lambda \varphi(x)=\frac{b^2}{2 \Delta(z)} \int_{\Bbb T}
\frac{\varphi(t)dt}{\varepsilon(x-t)-z}.
$$
By expanding $\varphi$ into a series with respect to the basis $\{\exp(ikx)\}_{k
\in {\Bbb Z}}$ we obtain
$$
\lambda c_k \exp(ikx)=\frac{b^2 c_k}{2 \Delta(z)}
\int_{\Bbb T} \frac{\exp(ikt) dt}{\varepsilon(x-t)-z}
$$
or $\lambda(z)=b^2 d_k(z)/(2 \Delta(z)).$ Then  for any $k \in
{\Bbb Z}$ the eigenvalue $\lambda_k(z)$ of the operator
$\widetilde{T}(z)$ in formula \eqref{Eq11} can be expressed by $
\lambda_k(z)=b^2 d_k(z)/(2 \Delta(z))$ and the corresponding
eigenfunction $\varphi_k(\cdot)$ has the form
$\varphi_k(x):=\exp(ikx).$

For any $k \in {\Bbb Z}$ and $z \in {\Bbb C} \setminus \sigma_{\rm
ess}(H)$ we have $d_k(z)=d_{-k}(z).$ Hence, if the number
$1=\lambda_k(z_k),$ $z_k \in {\Bbb C} \setminus \sigma_{\rm
ess}(H)$ is an eigenvalue of $\widetilde{T}(z_k),$ then
$\varphi_k(x)=\exp(\pm ikx)$ is the corresponding eigenfunction of
$\widetilde{T}(z_k).$ From here it follows that if the number $z_k
\in {\Bbb C} \setminus \sigma_{\rm ess}(H)$ is an eigenvalue of
$H$ with $D_k(z_k)=1-\lambda_k(z_k)=0,$ then the corresponding
eigenvector $f^{(k)}$ has the form $f^{(k)}:=(0,f_1^{(k)},f_2^{(k)}),$
where $f_1^{(k)}$ and $f_2^{(k)}$ are defined by \eqref{Eq12}.
\end{proof}

Let ${\Bbb N}_0:={\Bbb N} \cup \{0\}.$

\begin{lemma}\label{LEM 10}
For the functions $\Delta(\cdot)$ and $d_k(\cdot),$ $k \in {\Bbb
N}_0,$ the equalities hold
\begin{eqnarray*}
&&\Delta(z)=a-z-\frac{\pi b^2}{\sqrt{z^2-2z}},\quad d_k(z)=\frac{2 \pi \left[ 1-z-\sqrt{z^2-2z}
\right]^k}{\sqrt{z^2-2z}}, \quad z<0;\\
&&\Delta(z)=a-z+\frac{\pi b^2}{\sqrt{z^2-2z}},\quad d_k(z)=\frac{2 \pi \left[ 1-z+\sqrt{z^2-2z}
\right]^k}{\sqrt{z^2-2z}}, \quad z>2.
\end{eqnarray*}
\end{lemma}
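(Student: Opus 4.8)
The plan is to evaluate the two integrals explicitly by the standard residue/substitution technique for integrals over the torus $\mathbb{T}$ of rational functions of $\cos t$. First I would write $\varepsilon(t)-z = 1-\cos t - z$ and recall that for $z\notin[0,2]$ the quantity $1-z$ lies outside $[-1,1]$, so that $\int_{\mathbb{T}}\frac{dt}{1-z-\cos t}$ is a classical integral. Using the substitution $\zeta = e^{it}$ and converting to a contour integral over the unit circle, the integrand $\frac{dt}{1-z-\cos t}$ becomes $\frac{-2i\,d\zeta}{\zeta^2 - 2(1-z)\zeta + 1}$; the denominator factors as $(\zeta-\zeta_+)(\zeta-\zeta_-)$ with $\zeta_\pm = (1-z)\pm\sqrt{(1-z)^2-1} = (1-z)\pm\sqrt{z^2-2z}$ (choosing the branch of the square root so that $\sqrt{z^2-2z}>0$ for $z<0$ and the analogous sign for $z>2$). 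Since $\zeta_+\zeta_-=1$, exactly one root lies inside the unit circle, and the residue theorem gives $\int_{\mathbb{T}}\frac{dt}{\varepsilon(t)-z} = \pm\frac{2\pi}{\sqrt{z^2-2z}}$, with the sign $-$ for $z<0$ and $+$ for $z>2$ (this sign discrepancy simply reflects which root is inside the circle for the two ranges, equivalently the sign of $1-z$). Substituting into the definition $\Delta(z) = a - z - \tfrac{b^2}{2}\int_{\mathbb{T}}\frac{dt}{\varepsilon(t)-z}$ immediately yields the two claimed formulas for $\Delta$.

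For $d_k(z) = \int_{\mathbb{T}}\frac{\cos(kt)\,dt}{\varepsilon(t)-z}$ I would proceed in the same way, now with numerator $\cos(kt) = \tfrac12(\zeta^k + \zeta^{-k})$. The contour integral becomes, up to constants, $\oint \frac{\zeta^k+\zeta^{-k}}{\zeta^2-2(1-z)\zeta+1}\,\frac{d\zeta}{\zeta}$ or more conveniently one keeps $\int_{\mathbb{T}}\frac{\cos(kt)\,dt}{1-z-\cos t}$ and uses the known generating identity: expanding $\frac{1}{1-z-\cos t}$ and integrating term by term, or directly computing the residue at the interior root $\zeta_0$ (the one with $|\zeta_0|<1$). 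One finds $\frac{1}{2\pi}\int_{\mathbb{T}}\frac{\cos(kt)\,dt}{c-\cos t} = \frac{(c-\sqrt{c^2-1})^k}{\sqrt{c^2-1}}$ for $c>1$ and the corresponding formula for $c<-1$, with $c=1-z$. Here $c-\sqrt{c^2-1} = 1-z-\sqrt{z^2-2z}$ for $z<0$, and after tracking the branch for $z>2$ one gets $1-z+\sqrt{z^2-2z}$; multiplying by $2\pi$ gives exactly the stated expressions for $d_k(z)$. I would verify the $k=0$ case is consistent with the $\Delta$ computation (it is: $d_0(z) = \pm 2\pi/\sqrt{z^2-2z}$).

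The main obstacle, and the only genuinely delicate point, is the careful bookkeeping of the square-root branch and of which factor $\zeta_\pm$ lies inside the unit disk, because this is precisely what produces the sign difference between the $z<0$ and $z>2$ formulas. Concretely, for $z<0$ we have $1-z>1$ so $\zeta_- = (1-z)-\sqrt{z^2-2z}\in(0,1)$ is the interior root and the residue carries a definite sign; for $z>2$ we have $1-z<-1$ so the interior root is $\zeta_+ = (1-z)+\sqrt{z^2-2z}\in(-1,0)$, flipping the sign of the square root that appears in the answer. I would also note that $z^2-2z = (1-z)^2-1 > 0$ on both intervals so the real positive square root is well-defined there, and that $|1-z-\sqrt{z^2-2z}|<1 < |1-z+\sqrt{z^2-2z}|$ for $z<0$ (and the reverse inequality in absolute value structure for $z>2$) guarantees convergence of the geometric-series manipulation and identifies the correct interior root. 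Everything else is a routine residue computation, so the lemma follows once the branch conventions are pinned down.
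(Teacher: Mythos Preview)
Your approach via the substitution $\zeta=e^{it}$ and the residue theorem is correct and is a genuinely different route from the paper's. The paper does not compute these integrals by contour integration; instead, for $z<0$ it simply refers to a lemma in \cite{MA}, and for $z>2$ it invokes the classical real identity
\[
\int_0^\pi \frac{\cos(kt)\,dt}{1+2c\cos t+c^2}=\frac{\pi(-c)^k}{1-c^2},\qquad 0<c<1,
\]
then introduces the parametrization $c_z:=z-1-\sqrt{z^2-2z}\in(0,1)$ so that $1/(z-1)=2c_z/(1+c_z^2)$, reducing $d_k(z)$ to that identity. Your method has the advantage of treating both ranges $z<0$ and $z>2$ uniformly via a single residue computation, the only asymmetry being which root $\zeta_\pm$ lies inside the unit circle; the paper's method trades the contour-integral work for a quoted identity but must then handle the two $z$-ranges separately (and outsource one of them).

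One small correction to your write-up: you have the signs reversed in the sentence about $\int_{\Bbb T}\frac{dt}{\varepsilon(t)-z}$. For $z<0$ the integrand $(\varepsilon(t)-z)^{-1}$ is strictly positive, so $d_0(z)=+2\pi/\sqrt{z^2-2z}$ (not minus), and for $z>2$ it is strictly negative. This is precisely the branch bookkeeping you yourself flag as the delicate point, and once the signs are swapped your substitution into $\Delta(z)=a-z-\tfrac{b^2}{2}d_0(z)$ does give the stated formulas.
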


\begin{proof}
The assertion of lemma for the case $z<0$ can be proven similarly
to Lemma~10 of \cite{MA}. We consider the case $z>2.$ Using the identity
$$
\int_0^\pi \frac{\cos(kt)dt}{1+2c \cos t+c^2}=\frac{\pi
(-c)^k}{1-c^2},
$$
where $k \in {\Bbb N}_0$ and $0<c<1,$ one can show that
\begin{eqnarray}
\int_0^\pi \frac{\cos(kt)dt}{1+(2c/(1+c^2)) \cos t}
=\frac{\pi (-c)^k}{\sqrt{1-(2c)^2/(1+c^2)^2}}.
\label{Eq13}
\end{eqnarray}

Since the function $\varepsilon(\cdot)$ is an even, the function
$d_k(\cdot)$ has form
$$
d_k(z)=2\int_0^\pi \frac{\cos(kt)dt}{1-\cos t-z}=
\frac{2}{1-z} \int_0^\pi \frac{\cos(kt)dt}{1+\cos t/(z-1)}.
$$

Introducing the notation $c_z:=z-1-\sqrt{z^2-2z}$ we obtain
$$
d_k(z)=\frac{2}{1-z} \int_0^\pi \frac{\cos(kt)dt}{1+2c_z
\cos t/(1+c_z^2)}.
$$

It is clear that $c_z \in (0, 1)$ for all $z>2$ and hence the
equality \eqref{Eq13} completes proof of lemma for the case $z>2.$
\end{proof}

Now we formulate the result about infiniteness of the discrete
spectrum of $H.$

\begin{theorem}\label{THM 4}
{\rm (i)} The operator $H$ has an infinite number of eigenvalues
$\{\xi_k^{(\alpha)}\}_0^\infty$ with $\alpha=1,2,3$ such that
$\{\xi_k^{(1)}\}_0^\infty \subset (-\infty, E_{\rm min}),$
$\{\xi_k^{(2)}\}_0^\infty \subset (E_{\max}, \infty),$
$\{\xi_k^{(3)}\}_0^\infty \subset (2, E_{\max})$
and
$$
\lim\limits_{k \to \infty} \xi_k^{(1)}=E_{\rm min}, \quad \lim\limits_{k \to \infty}
\xi_k^{(2)}=\lim\limits_{k \to \infty} \xi_k^{(3)}=E_{\rm max}.
$$
For $\alpha=1,2,3$ the multiplicity of every eigenvalue
$\xi_k^{(\alpha)},$ $k \in {\Bbb N}$ is two, the multiplicity of
$\xi_0^{(3)}$ is one or two and $\xi_0^{(1)},$ $\xi_0^{(2)}$ are
simple eigenvalues of $H.$ Moreover, the eigenvalues $\xi_k^{(1)}$
resp. $\xi_k^{(2)}$ are solutions of the rational equations
$$
\frac{\pi
b^2}{\Delta(z)}\frac{\left[1-z-\sqrt{z^2-2z}\right]^k}{\sqrt{z^2-2z}}=1,
\quad z<E_{\rm min};
$$
resp.
$$
\frac{\pi b^2}{\Delta(z)}
\frac{\left[1-z+\sqrt{z^2-2z}\right]^k}{\sqrt{z^2-2z}}=1, \quad
z>E_{\rm max}.
$$
{\rm (ii)} The operator $H$ has no eigenvalues in $(E_{\min}, 0).$
\end{theorem}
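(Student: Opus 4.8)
The plan is to reduce the eigenvalue problem to a family of scalar equations via Lemma~\ref{LEM 9} and to analyse each real branch of the resolvent set separately. By (the proof of) Lemma~\ref{LEM 9}, a real number $z\notin\sigma_{\rm ess}(H)\cup\{w_0\}$ is an eigenvalue of $H$ precisely when $1$ is an eigenvalue of the Hilbert--Schmidt operator $\widetilde{T}(z)$, and the eigenvalues of $\widetilde{T}(z)$ are $\lambda_k(z):=b^2d_k(z)/(2\Delta(z))$, $k\in{\Bbb N}_0$, with eigenfunctions $\exp(\pm ikx)$; so $z$ is an eigenvalue of $H$ iff $\lambda_k(z)=1$ for some $k$. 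Inserting the closed formulas of Lemma~\ref{LEM 10} for $\Delta$ and $d_k$, each $\lambda_k$ becomes an explicit elementary function on each of the intervals $(-\infty,E_{\rm min})$, $(E_{\rm min},0)$, $(2,E_{\rm max})$, $(E_{\rm max},\infty)$ making up ${\Bbb R}\setminus\sigma_{\rm ess}(H)$. The eigenvalues in the first, third and fourth of these will be the families $\xi_k^{(1)}$, $\xi_k^{(3)}$, $\xi_k^{(2)}$, while (ii) is the statement that $\lambda_k(z)\ne1$ for every $k$ on $(E_{\rm min},0)$.

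On $(-\infty,E_{\rm min})$ I would first note that $\Delta$ is strictly decreasing on $(-\infty,0)$ with its unique zero at $E_{\rm min}<0$, so $\Delta>0$ there, and that the base $r(z):=1-z-\sqrt{z^2-2z}$ satisfies $0<r(z)<1$; hence $\lambda_k(z)=\pi b^2r(z)^k/(\Delta(z)\sqrt{z^2-2z})>0$. Checking the signs of the derivatives of $r$, $\Delta$ and $\sqrt{z^2-2z}$ (each monotone of definite sign on this interval) shows $\lambda_k$ to be strictly increasing; since $\lambda_k(z)\to0$ as $z\to-\infty$ and $\lambda_k(z)\to+\infty$ as $z\to E_{\rm min}-0$ (because $\Delta\to0+$), the equation $\lambda_k(z)=1$ --- which is exactly the rational equation in the statement --- has a unique root $\xi_k^{(1)}\in(-\infty,E_{\rm min})$. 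As $0<r<1$, $\lambda_k$ is pointwise decreasing in $k$, so $\xi_k^{(1)}$ is strictly increasing in $k$; being bounded above by $E_{\rm min}$ it converges, and if the limit were some $\eta<E_{\rm min}$ then $1=\lambda_k(\xi_k^{(1)})\le\pi b^2r(\eta)^k/(\Delta(\eta)\sqrt{\eta^2-2\eta})\to0$, a contradiction, so $\xi_k^{(1)}\uparrow E_{\rm min}$. The branch $(E_{\rm max},\infty)$ is handled the same way, with $\Delta<0$ and the base $1-z+\sqrt{z^2-2z}\in(-1,0)$; here a sign inspection of $d_k$ shows $\lambda_k>0$ for only one parity of $k$ (the even one), so after keeping only those $k$ one relabels the resulting roots as $\xi_k^{(2)}$, the index $0$ surviving and giving $\xi_0^{(2)}$, and $\xi_k^{(2)}\to E_{\rm max}$ follows as before.

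On $(2,E_{\rm max})$, where $\Delta>0$ and the base again has modulus $<1$, the same scheme works except at the threshold $z=2$: there $\sqrt{z^2-2z}\to0$ while $\Delta(z)\sqrt{z^2-2z}\to\pi b^2\ne0$, so $\lambda_k(z)$ tends to a finite value ($1$ for the surviving parity, here the odd one, and $-1$ for the other) rather than to $0$. For the surviving parity one still has $\lambda_k(z)\to+\infty$ as $z\to E_{\rm max}-0$ and $\lambda_k\to0$ uniformly on compact subsets as $k\to\infty$, so a local shape analysis shows that, for all large $k$, $\lambda_k$ drops below $1$ just above $z=2$ and then climbs monotonically to $+\infty$ near $E_{\rm max}$, giving a unique root $\xi_k^{(3)}\in(2,E_{\rm max})$, with $\xi_k^{(3)}\uparrow E_{\rm max}$ by the usual monotonicity; the lowest eigenvalue $\xi_0^{(3)}$ is the delicate threshold case, and the finer analysis at $z=2$ shows its multiplicity to be $1$ or $2$. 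The multiplicities in general are read off from the eigenvector formula \eqref{Eq12}: a solution $z$ of $\lambda_k(z)=1$ produces eigenvectors of $H$ with first component $\exp(ikx)$ and $\exp(-ikx)$, linearly independent for $k\ge1$, so the multiplicity is two (and one for $k=0$, where only the constant survives); and two different indices cannot produce the same eigenvalue, since $\lambda_k(z_0)=\lambda_{k'}(z_0)=1$ with $k\ne k'$ would force the base at $z_0$ to be $0$ or of modulus $1$, impossible on the open branches --- so these multiplicities are exact. The stated rational equations for $\xi_k^{(1)}$ and $\xi_k^{(2)}$ are simply $\lambda_k(z)=1$ written out via Lemma~\ref{LEM 10}.

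For (ii): on $(E_{\rm min},0)$ one has $\Delta(z)<0$ while $d_k(z)>0$ for every $k$ (the base $1-z-\sqrt{z^2-2z}$ lies in $(0,1)$ by Lemma~\ref{LEM 10}), hence $\lambda_k(z)=b^2d_k(z)/(2\Delta(z))<0<1$ for all $k$; so $1$ is never an eigenvalue of $\widetilde{T}(z)$ there, and by Lemma~\ref{LEM 9} $H$ has no eigenvalue in $(E_{\rm min},0)$ except possibly $w_0$. I expect the main obstacle to be the analysis on $(2,E_{\rm max})$ near the threshold $z=2$: on the two unbounded branches $\lambda_k$ decays to $0$ at the far end, which makes the existence of $\xi_k^{(\alpha)}$ and the limits $\xi_k^{(\alpha)}\to E_{\rm min},E_{\rm max}$ almost immediate, whereas at the threshold $\lambda_k$ has a finite nonzero limit, so one must study the sign of $\lambda_k'$ near $z=2$ carefully to conclude that $\lambda_k(z)=1$ still has exactly one root in $(2,E_{\rm max})$ and to settle the borderline multiplicity of $\xi_0^{(3)}$; the parity bookkeeping --- which $k$ make $\lambda_k$ of definite sign on which branch --- must also be done with care, as it fixes the final indexing of the eigenvalues.
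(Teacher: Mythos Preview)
Your overall strategy coincides with the paper's: reduce via Lemma~\ref{LEM 9} to the scalar equations $\lambda_k(z)=1$, insert the closed forms of Lemma~\ref{LEM 10}, and analyse each real component of ${\Bbb R}\setminus\sigma_{\rm ess}(H)$ separately. Your treatment of $(-\infty,E_{\rm min})$ and of part~(ii) is essentially identical to the paper's (you add a monotonicity check that even gives uniqueness of $\xi_k^{(1)}$, which the paper does not state).

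Two corrections are needed. First, your parities are swapped. On $(E_{\rm max},\infty)$ one has $\Delta(z)<0$ and the base $1-z+\sqrt{z^2-2z}\in(-1,0)$, so $\lambda_k=b^2d_k/(2\Delta)>0$ exactly when $d_k<0$, i.e.\ for \emph{odd} $k$, not even. Likewise on $(2,E_{\rm max})$, where $\Delta>0$, it is the \emph{even} $k$ that survive; indeed the threshold limit is $\lambda_k(z)\to(-1)^k$ as $z\to2+$, equal to $1$ precisely for even $k$. This is a bookkeeping slip, but it propagates through your indexing and the multiplicity statement for $\xi_0^{(2)}$ and $\xi_0^{(3)}$.

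Second, on $(2,E_{\rm max})$ you propose a ``local shape analysis'' of $\lambda_k$ near the threshold $z=2$ to force a unique root. This is the place where your plan is vaguest, and the paper bypasses it altogether. Instead of studying $\lambda_k'$ near $z=2$, the paper fixes an interior point $z_0=(E_{\rm max}+2)/2$; since $|1-z_0+\sqrt{z_0^2-2z_0}|<1$ one has $D_k(z_0)\to1$ along even $k$, so $D_{k_n}(z_0)>0$ for all large even $k_n$; together with $D_{2k}(z)\to-\infty$ as $z\to E_{\rm max}-0$, the intermediate value theorem produces $\xi_n^{(3)}\in(z_0,E_{\rm max})$. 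The accumulation $\xi_k^{(\alpha)}\to E_{\rm min}$ resp.\ $E_{\rm max}$ is obtained not from monotonicity of the roots but from analyticity of the full Fredholm determinant $D(z)=\prod_k D_k(z)$: as $D$ is analytic on ${\Bbb C}\setminus\sigma_{\rm ess}(H)$ and not identically zero (indeed $D(z)\to1$ at $\pm\infty$), its zeros cannot accumulate at an interior point of any component.
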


\begin{proof}
(i) For any fixed $k \in {\Bbb N}_0$ we have
$$
\lim\limits_{z \to \pm \infty} D_k(z)=1, \quad \lim\limits_{z \to
E_{\rm min}-0} D_k(z)=\lim\limits_{z \to E_{\rm max}+0}
D_k(z)=-\infty.
$$

Since the function $D_k(\cdot)$ is continuous in $(-\infty,
E_{\rm min})$ and $(E_{\max}, \infty)$ there exist numbers
$\xi_k^{(1)} \in (-\infty, E_{\rm min})$ and $\xi_k^{(2)} \in
(E_{\rm max}, \infty)$ such that $D_k(\xi_k^{(\alpha)})=0$ for
$\alpha=1,2.$ The equality $\lim\limits_{z \to \pm \infty} D(z)=1$
and the analyticity of the function $D(\cdot)$ on ${\Bbb C}
\setminus (\{E_{\rm min}\} \cup [0, 2] \cup \{E_{\rm max}\})$
imply that $\lim\limits_{k \to \infty} \xi_k^{(1)}=E_{\rm min}$
and $\lim\limits_{k \to \infty} \xi_k^{(2)}=E_{\rm max}.$
By Lemma~\ref{LEM 9} for any $\alpha=1,2$ and $k \in {\Bbb N}_0$ the number
$\xi_k^{(\alpha)}$ is an eigenvalue of $H$ and the corresponding
eigenvector $f^{(k)}$ has the form $f^{(k)}:=(0,f^{(k)}_1,f^{(k)}_2),$
where $f^{(k)}_1$ and $f^{(k)}_2$ are defined by \eqref{Eq12} with
$z_k=\xi_k^{(\alpha)}.$ Moreover, $\xi_0^{(1)},$ $\xi_0^{(2)}$ are simple
eigenvalues and for any $k \in {\Bbb N}$ the multiplicities of
$\xi_k^{(\alpha)}$ are two.

Note that the function $D_k(\cdot)$ is defined on $(2, E_{\rm max})$ and for $z>2$ we have
\begin{eqnarray*}
-1<1-z+\sqrt{z^2-2z}<0.
\end{eqnarray*}
By Lemma~\ref{LEM 10} the function $D_k(\cdot)$ can be rewritten
as
\begin{eqnarray*}
D_k(z)=1-\frac{1}{2 \Delta(z)}
\frac{\left[1-z+\sqrt{z^2-2z}\right]^k}{\sqrt{z^2-2z}}, \quad z
\in (2, E_{\rm max});
\end{eqnarray*}
therefore, for any fixed $z \in (2, E_{\rm max})$ the equality $\lim\limits_{k \to \infty} D_k(z)=1$ holds.

It is clear that $\Delta(z)>0$ for all $z \in (2, E_{\rm max})$ and hence the inequality $D_{2k+1}(z)>1$ holds
for all $k \in {\Bbb N}_0$ and $z \in (2, E_{\rm max}).$
Since $\lim\limits_{k \to \infty} D_k(z)=1$ for any fixed $z \in (2, E_{\rm max}),$
there exists a subsequence $\{k_n\} \subset 2{\Bbb N}_0$ such that
$D_{k_n}((E_{\rm max}+2)/2)>0$ holds for any $n \in {\Bbb N}_0.$
Now the equality $\lim\limits_{z \to E_{\rm max}-0} D_{2k}(z)=-\infty$ and the continuity
of the function $D_k(\cdot)$ imply that
$D_{k_n}(\xi_n^{(3)})=0$ for some
$\xi_n^{(3)} \in ((E_{\rm max}+2)/2, E_{\rm max}).$ It follows
from the analyticity of $D(\cdot)$ on ${\Bbb C} \setminus (\{E_{\rm
min}\} \cup [0, 2] \cup \{E_{\rm max}\})$ that
$\lim\limits_{n \to \infty} \xi_n^{(3)}=E_{\rm max}.$ Now repeated
application of Lemma~\ref{LEM 9} implies that the number $\xi_n^{(3)}$ is an
eigenvalue of $H.$ Similarly, for any $n \in {\Bbb N}$ the multiplicity of
$\xi_n^{(3)}$ is two. If $k_0=0,$ then $\xi_0^{(3)}$ is a simple, otherwise
its multiplicity is two.

(ii) It is clear that $0<1-z-\sqrt{z^2-2z}<1$ for $z<0.$
Then by Lemma~\ref{LEM 10} the function $D_k(\cdot)$ can be represented
as
\begin{eqnarray*}
D_k(z)=1-\frac{1}{2 \Delta(z)}
\frac{\left[1-z-\sqrt{z^2-2z}\right]^k}{\sqrt{z^2-2z}}, \quad z
\in (E_{\rm min}; 0).
\end{eqnarray*}
Since $\Delta(z)<0$ for all $z \in (E_{\rm min}, 0)$ the inequality $D_k(z)>1$ holds for such $z$
and hence $D(z)>1.$ By Lemma~\ref{LEM 9} the operator has no eigenvalues in $(E_{\rm min}, 0).$
\end{proof}

\section{The case $v_1(0)=0$ and $\Delta(0\,; m)=0$}

In this section we are going to discuss the discrete spectrum of $H$ for the case $v_1(0)=0$
and $\Delta(0\,; m)=0.$ In this case the discrete spectrum of $H$ might be finite or infinite
depending on the behavior of the parameter functions.

{\bf Case I: Infiniteness.} Let the parameter functions $v_1(\cdot),$ $w_1(\cdot)$ and $w_2(\cdot,\cdot)$
have the form
\begin{eqnarray}
&&v_1(x)=\sqrt{\mu} \sin x,\quad \mu>0;\quad
 w_1(x)=1+\sin^2 x;\nonumber \\
&& w_2(x,y)=\varepsilon(x)+l \varepsilon(x+y)+\varepsilon(y), \quad \varepsilon(x):=1-\cos x,\,\, l>0.
 \label{parametrs H}
\end{eqnarray}

Then the function $w_2(\cdot,\cdot)$ has a unique non-degenerate zero minimum ($m=0$) at the point $(0,0) \in {\Bbb T}^2$
and $v_1(0)=0.$ It is easy to see that for
$$
\Delta(x\,; z)=1+\sin^2 x-z-\frac{\mu}{2} \int_{\Bbb T} \frac{\sin^2 t\, dt}
{\varepsilon(x)+l \varepsilon(x+t)+\varepsilon(t)-z}
$$
we have $\Delta(0\,; 0)=0$ if and only if
$$
\mu=\mu_0:=(1+l)\left(\int_0^\pi \frac{\sin^2 t\, dt}{\varepsilon(t)}\right)^{-1}=\frac{1+l}{\pi}.
$$

The following decomposition plays an important role in the proof of the infiniteness of the discrete spectrum of $H.$

\begin{lemma}\label{decomp of Delta}
The following decomposition
$$
\Delta(x\,; z)=\Delta(0\,; 0)+ \frac{\mu\pi(1+2l-l^2)}{(1+l)^2
\sqrt{1+2l}} \sqrt{x^2-\frac{2(1+l)}{1+2l}z}+O(x^2)+O(\sqrt{|z|})
$$
holds as $x \to 0$ and $z \to -0.$
\end{lemma}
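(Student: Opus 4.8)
The plan is to carry out the $t$-integration in $\Delta(x\,;z)$ explicitly and then read off the non-analytic part near $(x,z)=(0,0)$. Using $\varepsilon(u)=1-\cos u$ and $\cos(x+t)=\cos x\cos t-\sin x\sin t$, one can write
$$
w_2(x,t)-z=A(x,z)-B(x)\cos\bigl(t+\psi(x)\bigr),\qquad A:=1+l+\varepsilon(x)-z,\quad B:=\sqrt{1+2l\cos x+l^2},
$$
where $\psi$ is the analytic angle determined by $B\cos\psi=1+l\cos x$, $B\sin\psi=l\sin x$, so that $\cos^2\psi=(1+l\cos x)^2/B^2$ and $\sin^2\psi=l^2\sin^2x/B^2$. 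For $z<0$ we have $A\ge 1+l\ge B\ge 0$ with equality only at $(x,z)=(0,0)$, hence $A>B$ throughout, and the elementary integrals $\int_{\Bbb T}\frac{du}{A-B\cos u}=\frac{2\pi}{\sqrt{A^2-B^2}}$ and $\int_{\Bbb T}\frac{\cos^2u\,du}{A-B\cos u}=\frac{2\pi}{B^2}\bigl(\frac{A^2}{\sqrt{A^2-B^2}}-A\bigr)$ are available. Substituting $u=t+\psi$, expanding $\sin^2(u-\psi)$ and discarding the vanishing odd term yields the closed form
$$
\int_{\Bbb T}\frac{\sin^2t\,dt}{w_2(x,t)-z}=\frac{2\pi\bigl(A-\sqrt{A^2-B^2}\bigr)}{B^2}\left(\cos^2\psi+\frac{A\sin^2\psi}{\sqrt{A^2-B^2}}\right).
$$

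Next I would isolate the square root. A direct computation gives
$$
A^2-B^2=2(1+l)\bigl(\varepsilon(x)-z\bigr)+2l\,\varepsilon(x)+\bigl(\varepsilon(x)-z\bigr)^2=(1+2l)x^2-2(1+l)z+R,
$$
where $R$ collects terms of order $O(x^4),O(x^2|z|),O(z^2)$, which (since $z<0$) is of smaller order than the leading part $(1+2l)x^2-2(1+l)z\asymp x^2+|z|$. Hence $\sqrt{A^2-B^2}=\sqrt{1+2l}\,\rho+O(x^2)+O(\sqrt{|z|})$, where $\rho:=\sqrt{x^2-\tfrac{2(1+l)}{1+2l}z}$. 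All the other ingredients are real-analytic at $x=0$, with $B^2=(1+l)^2+O(x^2)$, $\cos^2\psi=1+O(x^2)$, $\sin^2\psi=\tfrac{l^2}{(1+l)^2}x^2+O(x^4)$, and $A=1+l+O(x^2)+O(|z|)$.

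Substituting these expansions, every term is manifestly $O(x^2)+O(\sqrt{|z|})$ except the single term $A\sin^2\psi/\sqrt{A^2-B^2}=\tfrac{l^2x^2}{(1+l)\sqrt{1+2l}\,\rho}\bigl(1+O(x^2)+O(|z|)\bigr)$, which is a priori only $O(|x|)$. To deal with it I would use the identity
$$
\frac{x^2}{\rho}=\rho+\frac{2(1+l)}{1+2l}\,\frac{z}{\rho}
$$
together with $\rho^2\ge\tfrac{2(1+l)}{1+2l}|z|$, which forces $|z|/\rho\le C\sqrt{|z|}$ and hence $x^2/\rho=\rho+O(\sqrt{|z|})$. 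When this is combined inside the product with the term $-\sqrt{A^2-B^2}=-\sqrt{1+2l}\,\rho+\cdots$, the coefficient of $\rho$ becomes $\tfrac{l^2}{\sqrt{1+2l}}-\sqrt{1+2l}=-\tfrac{1+2l-l^2}{\sqrt{1+2l}}$; this replacement of the naive $\sqrt{1+2l}$ by $(1+2l-l^2)/\sqrt{1+2l}$ is exactly where the factor $1+2l-l^2$ in the statement comes from. Multiplying by $2\pi/B^2=2\pi/(1+l)^2+O(x^2)$ one obtains
$$
\int_{\Bbb T}\frac{\sin^2t\,dt}{w_2(x,t)-z}=\frac{2\pi}{1+l}-\frac{2\pi(1+2l-l^2)}{(1+l)^2\sqrt{1+2l}}\,\rho+O(x^2)+O(\sqrt{|z|}),
$$
and since $\Delta(x\,;z)-\Delta(0\,;0)=(\sin^2x-z)-\tfrac{\mu}{2}\bigl(\int_{\Bbb T}\tfrac{\sin^2t\,dt}{w_2(x,t)-z}-\tfrac{2\pi}{1+l}\bigr)$ with $\sin^2x-z=O(x^2)+O(|z|)$, the asserted decomposition follows.

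I expect the only delicate step to be the handling of $A\sin^2\psi/\sqrt{A^2-B^2}$: naively it is an $O(|x|)$ contribution that would violate the claimed $O(x^2)$ error, and it is precisely the identity $x^2/\rho=\rho+O(\sqrt{|z|})$ and the subsequent merging with $\sqrt{A^2-B^2}$ that simultaneously restores the error bound and produces the coefficient $1+2l-l^2$ in place of $1+2l$. The closed-form integration, the Taylor expansions of the analytic factors, and the bookkeeping that separates remainders into $O(x^2)$ and $O(|z|)\subset O(\sqrt{|z|})$ are all routine.
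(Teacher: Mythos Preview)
Your proof is correct and takes a genuinely different route from the paper's.

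The paper proceeds by localization: it splits the $t$-integral into a piece over ${\Bbb T}\setminus(-\delta,\delta)$, which is even and analytic in $(x,z)$ and therefore contributes only $O(x^2)+O(|z|)$, and a piece over $(-\delta,\delta)$. In the small-$t$ piece the integrand is replaced, via $\sin t=t+O(t^3)$ and $1-\cos t=\tfrac12 t^2+O(t^4)$, by the rational function $t^2/[(1+l)x^2+2lxt+(1+l)t^2-2z]$. That rational integral is then reduced algebraically to $\int dt/[(1+l)x^2+2lxt+(1+l)t^2-2z]$ and evaluated explicitly with the $\arctan$ formula; expanding $\arctan$ near infinity produces the leading $\rho$-term and the stated remainders.

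You instead rewrite $w_2(x,t)-z=A-B\cos(t+\psi)$ exactly, invoke the Poisson-type integrals $\int_{\Bbb T}(A-B\cos u)^{-1}du$ and $\int_{\Bbb T}\cos^2 u\,(A-B\cos u)^{-1}du$ to obtain a \emph{closed form} for the full integral over ${\Bbb T}$, and only then Taylor-expand the analytic ingredients $A,B,\psi$ and isolate the non-analytic factor $\sqrt{A^2-B^2}=\sqrt{1+2l}\,\rho+O(x^2)+O(\sqrt{|z|})$. The one non-obvious step---that the apparently $O(|x|)$ contribution $A\sin^2\psi/\sqrt{A^2-B^2}\sim\text{const}\cdot x^2/\rho$ actually combines with $-\sqrt{A^2-B^2}$ to give $-\tfrac{1+2l-l^2}{\sqrt{1+2l}}\rho+O(x^2)+O(\sqrt{|z|})$---is handled cleanly by your identity $x^2/\rho=\rho+\tfrac{2(1+l)}{1+2l}z/\rho$ together with $|z|/\rho\le C\sqrt{|z|}$.

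Your approach trades the paper's elementary but somewhat lengthy near/far splitting and rational-integral gymnastics for a single exact evaluation followed by straightforward bookkeeping; it is shorter and makes the origin of the coefficient $(1+2l-l^2)/[(1+l)^2\sqrt{1+2l}]$ more transparent. The paper's approach has the advantage of being completely self-contained (no Poisson-kernel identities needed) and of displaying explicitly the local model integral $\int t^2/[(1+l)x^2+2lxt+(1+l)t^2-2z]\,dt$, which reappears later in Section~7 when the operator $T(\delta;z)$ is constructed.
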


\begin{proof}
Let $\delta>0$ be sufficiently small and ${\Bbb T}_\delta:={\Bbb T} \setminus (-\delta, \delta).$
We rewrite the function $\Delta(\cdot\,; \cdot)$
in the form
$
\Delta(x\,; z)=\Delta_1(x\,; z)+\Delta_2(x\,; z),
$
where
\begin{eqnarray*}
 \Delta_1(x\,; z):&&=1+\sin^2 x-z-\frac{\mu}{2} \int_{{\Bbb T}_\delta} \frac{\sin^2 t\, dt}
 {\varepsilon(x)+l \varepsilon(x+t)+\varepsilon(t)-z},\\
 \Delta_2(x\,; z):&&=-\frac{\mu}{2} \int_{-\delta}^\delta \frac{\sin^2 t\, dt}
 {\varepsilon(x)+l \varepsilon(x+t)+\varepsilon(t)-z}.
\end{eqnarray*}
Since $\Delta_1(\cdot\,; z)$ is an even analytic function on ${\Bbb T}$ for any $z \le 0,$ we have
\begin{eqnarray}
\Delta_1(x\,; z)=\Delta_1(0\,; 0)+O(x^2)+O(|z|)
\label{decomp for Delta1}
\end{eqnarray}
as $x \to 0$ and $z \to -0.$
Using
\begin{eqnarray}
\sin x=x+O(x^3), \quad 1-\cos x=\frac{1}{2}x^2+O(x^4), \quad x\to 0
\label{decomp for sin and cos}
\end{eqnarray}
we obtain
$$
\Delta_2(x\,; z):=-\mu \int_{-\delta}^\delta \frac{t^2 dt}{(1+l)x^2+2lxt+(1+l)t^2-2z}+O(x^2)+O(|z|)
$$
as $x \to 0$ and $z \to -0.$
For the convenience we rewrite the latter integral as
\begin{eqnarray*}
&&\int_{-\delta}^\delta \frac{t^2 dt}{(1+l)x^2+2lxt+(1+l)t^2-2z}\\
&&=\frac{2\delta}{1+l}
-\frac{lx}{1+l} \int_{-\delta}^\delta \frac{2t dt}{(1+l)x^2+2lxt+(1+l)t^2-2z}\\
&&-\frac{(1+l)x^2-2z}{1+l} \int_{-\delta}^\delta \frac{dt}{(1+l)x^2+2lxt+(1+l)t^2-2z}.
\end{eqnarray*}
Now we study each integral in the last equality. For the integral in the second summand
we obtain
\begin{eqnarray*}
\int_{-\delta}^\delta \frac{2t dt}{(1+l)x^2+2lxt+(1+l)t^2-2z}
&&=\frac{1}{1+l} \log \left| 1+\frac{4lx\delta}
{(1+l)x^2-2lx\delta+(1+l)\delta^2-2z} \right| \\
&&-\frac{2lx}{1+l} \int_{-\delta}^\delta \frac{dt}{(1+l)x^2+2lxt+(1+l)t^2-2z}.
\end{eqnarray*}
Since
$$
\log \left| 1+\frac{4lx\delta}
{(1+l)x^2-2lx\delta+(1+l)\delta^2-2z} \right|=O(x)
$$
as $x \to 0,$ comparing the last expressions we obtain
\begin{eqnarray*}
&&\int_{-\delta}^\delta \frac{t^2 dt}{(1+l)x^2+2lxt+(1+l)t^2-2z}=\frac{2\delta}{1+l}\\
&&-\left(\frac{1+2l-l^2}{(1+l)^2}x^2-\frac{2}{1+l}z \right)
\int_{-\delta}^\delta \frac{dt}{(1+l)x^2+2lxt+(1+l)t^2-2z}+O(x^2)+O(|z|)
\end{eqnarray*}
as $x \to 0$ and $z \to -0.$ Using the identity
\begin{eqnarray}
\int_a^b \frac{dt}{x^2+t^2}=\frac{1}{|x|}\left(\arctan \frac{b}{|x|}-\arctan \frac{a}{|x|}\right)
\label{arctan 1}
\end{eqnarray}
we have
\begin{eqnarray*}
&&\int_{-\delta}^\delta \frac{dt}{(1+l)x^2+2lxt+(1+l)t^2-2z}=\frac{1}{1+l} \int_{-\delta}^\delta \frac{dt}{(t+\frac{l}{1+l}x)^2+\frac{1+2l}{(1+l)^2}x^2-\frac{2}{l+1}z}\\
&&=\frac{1}{(1+l) \sqrt{\frac{1+2l}{(1+l)^2}x^2-\frac{2}{l+1}z}} \Bigl(\arctan \frac{\delta+\frac{l}{1+l}x}{\sqrt{ \frac{1+2l}{(1+l)^2}x^2-\frac{2}{l+1}z }}+\arctan \frac{\delta-\frac{l}{1+l}x}{\sqrt{\frac{1+2l}{(1+l)^2}x^2-\frac{2}{l+1}z}} \Bigr).
\end{eqnarray*}

The following properties of the $\arctan$ function
\begin{eqnarray}
\arctan y+\arctan \frac{1}{y}=\frac{\pi}{2}, \quad y \geq 0
\quad \mbox{and} \quad
\arctan y=O(y), \quad y \to 0
\label{arctan 2}
\end{eqnarray}
imply that
\begin{eqnarray*}
&&\int_{-\delta}^\delta
\frac{dt}{(1+l)x^2+2lxt+(1+l)t^2-2z}=\\
&&\left(\frac{1+2l-l^2}{(1+l)^2}x^2-\frac{2}{1+l}z
\right) \frac{\pi}{(1+l) \sqrt{\frac{1+2l}{(1+l)^2}x^2-\frac{2}{l+1}z}}
+O(\sqrt{\frac{1+2l}{(1+l)^2}x^2-\frac{2}{l+1}z})
\end{eqnarray*}
as $x\to 0$ and $z \to -0.$ Taking into account
\begin{eqnarray*}
&&\left(\frac{1+2l-l^2}{(1+l)^2}x^2-\frac{2}{1+l}z
\right) \frac{\pi}{(1+l) \sqrt{
\frac{1+2l}{(1+l)^2}x^2-\frac{2}{l+1}z}}\\
&&=\pi\frac{1+2l-l^2}{(1+l)^2\sqrt{1+2l}}  \sqrt{
x^2-\frac{2(1+l)}{2l+1}z }+O(\sqrt{-z}),
\end{eqnarray*}
we obtain
\begin{eqnarray}
  \Delta_2(x\,;z)=\Delta_2(0\,;0)+\frac{\mu\pi(1+2l-l^2)}{(1+l)^2 \sqrt{1+2l}}
  \sqrt{x^2-\frac{2(1+l)}{2l+1}z }+O(x^2) +O(\sqrt{-z})
  \label{decomp for Delta2}
\end{eqnarray}
as $x \to 0.$ The equalities (\ref{decomp for Delta1}) and (\ref{decomp for Delta2})
give the proof of lemma.
\end{proof}

Let $T(\delta; z)$ be the operator in ${\cal H}_0 \oplus {\cal H}_1$
defined by
$$
T(\delta; z):=\left( \begin{array}{cc}
0 & 0\\
0 & T_{11}(\delta; z)\\
\end{array}
\right),
$$
where $T_{11}(\delta; z)$ is the integral operator on $L_2({\Bbb
T})$ with the kernel
$$
\frac{1}{\pi} \frac{(1+l)^2 \sqrt{1+2l}}{1+2l-l^2}
\frac{1}{\sqrt[4]{x^2-\frac{2(1+l)}{2l+1}z }}
\frac{\chi_{(-\delta;\delta)}(x)\chi_{(-\delta;\delta)}(y)xy}{(1+l)x^2+2lxy+(1+l)y^2-2z}
\frac{1}{\sqrt[4]{ y^2-\frac{2(1+l)}{2l+1}z}}.
$$

\begin{lemma}\label{T(0) compact}
Let $\mu=\mu_0.$ Then for any $z \le 0$ the operator
$F(z):=T(z)-T(\delta; z)$ is compact and the operator-valued
function $F(\cdot)$ is continuous in the operator-norm in $(-\infty, 0].$
\end{lemma}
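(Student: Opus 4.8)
The plan is to reduce everything to the scalar integral operator $T_{11}(z)$ and to show that $T_{11}(z)-T_{11}(\delta;z)$ has a kernel which is square-integrable on ${\Bbb T}^2$, with $L_2({\Bbb T}^2)$-norm bounded uniformly for $z\le 0$, and which converges pointwise almost everywhere as $z\to-0$; the Lebesgue dominated convergence theorem then upgrades this to Hilbert--Schmidt, hence operator-norm, continuity of $F(\cdot)$ on $(-\infty,0]$, and in particular each $F(z)$ is compact. First, from the explicit form of $T(z)$ in Section~4, the operator $F(z)=T(z)-T(\delta;z)$ has first-row/first-column entries $T_{00}(z)$, $T_{01}(z)$, $T_{01}^*(z)$ (of rank at most one) and lower-right entry $T_{11}(z)-T_{11}(\delta;z)$. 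Here $T_{00}(z)g_0=(1+z-w_0)g_0$ is trivially norm-continuous, and $T_{01}(z)$ is rank one with generating vector $v_0(\cdot)/\sqrt{\Delta(\cdot\,;z)}$, which belongs to $L_2({\Bbb T})$ uniformly in $z\le 0$ and depends norm-continuously on $z$, because Lemma~\ref{decomp of Delta} (together with $\Delta(x\,;z)>0$ for $(x,z)\neq(0,0)$) gives $\Delta(x\,;z)$ bounded below by a positive constant off the origin and bounded below by a constant times $(x^2+|z|)^{1/2}$ near it. Hence it is enough to treat $G(z):=T_{11}(z)-T_{11}(\delta;z)$.

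For $z\le-\delta_0<0$ all denominators in the kernels are bounded below and the statement is routine, so fix $z\in[-\delta_0,0]$ and split ${\Bbb T}^2=\Omega_{\rm in}\cup\Omega_{\rm out}$ with $\Omega_{\rm in}:=(-\delta,\delta)^2$. On $\Omega_{\rm out}$ the kernel of $T_{11}(\delta;z)$ vanishes by the cut-offs, while $w_2(x,y)-z\ge w_2(x,y)\ge c>0$ (the minimum $m=0$ of $w_2$ is attained only at the origin) and $\Delta$ is bounded below there, so the kernel of $G(z)$ is uniformly bounded on $\Omega_{\rm out}$ and converges there in $L_2$ by dominated convergence. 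On $\Omega_{\rm in}$ I would factor the two kernels as
$$
T_{11}(z;x,y)=u(x)\,v(x,y)\,u(y),\qquad T_{11}(\delta;z;x,y)=u_0(x)\,v_0(x,y)\,u_0(y),
$$
where $u(x)=\sqrt{\mu_0}\,\sin x\,\Delta(x\,;z)^{-1/2}$, $v(x,y)=\bigl(2(w_2(x,y)-z)\bigr)^{-1}$, $v_0(x,y)=(P(x,y)-2z)^{-1}$ with $P(x,y):=(1+l)x^2+2lxy+(1+l)y^2$, and $u_0(x)=\sqrt{\mu_0/C_\Delta}\,x\,(x^2-\tfrac{2(1+l)}{1+2l}z)^{-1/4}$, $C_\Delta$ being the leading coefficient in Lemma~\ref{decomp of Delta}; comparing constants shows that $T_{11}(\delta;z)$ is exactly the leading part of $T_{11}(z)$ on $\Omega_{\rm in}$. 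I would then use the telescoping identity
$$
T_{11}(z)-T_{11}(\delta;z)=(u-u_0)\,v\,u+u_0\,(v-v_0)\,u+u_0\,v_0\,(u-u_0),
$$
together with the uniform bounds $|u(x)|,|u_0(x)|\le C(x^2+|z|)^{1/4}$, $|v(x,y)|,|v_0(x,y)|\le C(x^2+y^2+|z|)^{-1}$ (using that $P$ is positive definite and $w_2(x,y)=\tfrac12P(x,y)+O(x^4+y^4)$), and $|v(x,y)-v_0(x,y)|\le C$.

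The crux is the gain in $u-u_0$. On the part of $\Omega_{\rm in}$ where $x^2+y^2\ge|z|$ the regularization is inactive: there $(x^2-\tfrac{2(1+l)}{1+2l}z)^{1/2}$ is comparable to $|x|$, and $\sin x=x+O(x^3)$ with Lemma~\ref{decomp of Delta} give $\Delta(x\,;z)=C_\Delta|x|\,(1+O(|x|))$, so $u(x)=u_0(x)(1+O(|x|))$ and $|u(x)-u_0(x)|\le C|x|^{3/2}$; hence in the telescoping identity each of the three terms is dominated by $C(x^2+|z|)^{3/4}(y^2+|z|)^{1/4}(x^2+y^2+|z|)^{-1}$ (or by the bounded $C(x^2+|z|)^{1/4}(y^2+|z|)^{1/4}$ for the middle term), which is square-integrable over this region with a bound independent of $z$. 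On the complementary region $x^2+y^2\le|z|$ I would use only the crude bounds above, which give $|G(z;x,y)|\le C|z|^{-1/2}$ there; since that region has area $O(|z|)$, its contribution to $\|G(z)\|_{\rm HS}^2$ is $O(1)$, again uniformly in $z$. Adding the two parts, $\|G(z)\|_{\rm HS}$ is bounded uniformly for $z\le 0$, the kernel of $G(z)$ converges pointwise almost everywhere to that of $G(0)$ as $z\to-0$, and dominated convergence gives $\|G(z)-G(0)\|_{\rm HS}\to 0$; so each $G(z)$ is compact and $G(\cdot)$ is norm-continuous on $(-\infty,0]$. Combined with the first paragraph, $F(\cdot)$ is compact and norm-continuous on $(-\infty,0]$.

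I expect the main obstacle to be the uniformity as $z\to-0$: for a fixed $z<0$ the compactness and continuity of $G(z)$ are easy, but one must show that subtracting $T_{11}(\delta;z)$ removes the leading (logarithmically divergent) part of $\|T_{11}(z)\|_{\rm HS}^2$ \emph{with a bound independent of $z$}. This is exactly what forces the splitting of $\Omega_{\rm in}$ at the scale $x^2+y^2\sim|z|$ and a careful reading of Lemma~\ref{decomp of Delta} — in particular of the nature of its $O(x^2)+O(\sqrt{-z})$ remainder and of the comparability of $\Delta(x\,;z)$ with $(x^2+|z|)^{1/2}$ near the origin — so that on $x^2+y^2\lesssim|z|$ the (large) kernel is controlled by the smallness of the region, while on $x^2+y^2\gtrsim|z|$ the clean $z=0$-type cancellation of leading singularities applies.
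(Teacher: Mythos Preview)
Your overall approach is the same as the paper's: reduce to the $(1,1)$ block, write the kernel difference as a telescoping sum, and control each piece using Lemma~\ref{decomp of Delta}. The paper uses a four--term additive splitting $F_0+F_1+F_2+F_3$ (outside $(-\delta,\delta)^2$; difference $v-v_0$; difference in the $x$--weight; difference in the $y$--weight), which is the symmetric analogue of your three--term telescoping $(u-u_0)vu+u_0(v-v_0)u+u_0v_0(u-u_0)$, and then bounds each piece in $L_2({\Bbb T}^2)$ uniformly in $z$.

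There is, however, a genuine gap in your execution. On the region $\{x^2+y^2\ge |z|\}$ you assert that $(x^2-\tfrac{2(1+l)}{1+2l}z)^{1/2}$ is comparable to $|x|$ and hence $\Delta(x\,;z)=C_\Delta|x|\bigl(1+O(|x|)\bigr)$; this is false. Take $x=0$, $y^2=|z|$: then $(x^2+c|z|)^{1/2}=(c|z|)^{1/2}$ while $|x|=0$. The condition $x^2+y^2\ge |z|$ constrains the \emph{pair} $(x,y)$, but your bound $|u(x)-u_0(x)|\le C|x|^{3/2}$ is a statement about $x$ alone and requires $x^2\gtrsim|z|$; it says nothing on the strip $x^2<|z|\le x^2+y^2$. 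Consequently your estimate of the first and third telescoping terms on $\{x^2+y^2\ge|z|\}$ is not justified as written.

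The fix is either to split into the four regions determined by $x^2\gtrless|z|$ and $y^2\gtrless|z|$ (the crude bound then handles the three ``small'' regions because the extra factor $|x|$ or $|y|$ in the numerator compensates), or---what the paper does---to keep the weights $(x^2+|z|)^{1/4}$ and $(y^2+|z|)^{1/4}$ throughout rather than replacing them by $|x|^{1/2}$, $|y|^{1/2}$. Concretely, the paper derives from Lemma~\ref{decomp of Delta} the uniform estimate
\[
\Bigl|\Delta(x\,;z)^{-1/2}-\bigl(C_\Delta\,(x^2+c|z|)^{1/2}\bigr)^{-1/2}\Bigr|
\;\le\; C\,\frac{\sqrt{-z}}{(x^2-z)^{3/4}}\;+\;C\,|x|^{1/2},
\]
which separates the $O(\sqrt{-z})$ and $O(x^2)$ remainders and is exactly what is needed to make the telescoping terms square--integrable uniformly in $z\le 0$. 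With this correction your argument goes through and matches the paper's.
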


\begin{proof}
Denote by $T_{11}(z;x,y)$ and $T_{11}(\delta,z;x,y)$ the kernel of the operator
$T_{11}(z)$ and $T_{11}(\delta;z),$ respectively, and set $F(z;x,y):=T_{11}(z;x,y)-T_{11}(\delta,z;x,y).$
We split the function $F(z;\cdot,\cdot),$ $z<0$ into four parts
$$
F(z;x,y)=F_0(z;x,y)+F_1(z;x,y)+F_2(z;x,y)+F_3(z;x,y),
$$
where
\begin{eqnarray*}
F_0(z;x,y):&&=(1-\chi_{(-\delta, \delta)}(x)\chi_{(-\delta, \delta)}(y))T_{11}(z;x,y),\\
F_1(z;x,y):&&=\frac{\mu}{2} \frac{\chi_{(-\delta, \delta)}(x)}{\sqrt{\Delta(x\,; z)}}\frac{\chi_{(-\delta, \delta)}(y)}{\sqrt{\Delta(y\,; z)}}\\
&&\times \Bigg( \frac{\sin x \sin
y}{\varepsilon(x)+l\varepsilon(x+y)+\varepsilon(y)-z}- \frac{2xy}
{(1+l)x^2+2lxy+(1+l)y^2-2z}\Bigg),\\
F_2(z;x,y):&&=\Bigg(\frac{\mu}{\sqrt{\Delta(x\,;
z)}}-\frac{\mu}{\sqrt{\frac{\mu\pi(1+2l-l^2)}{(1+l)^2 \sqrt{1+2l}}
\sqrt{x^2-\frac{2(1+l)}{1+2l}z}}} \Bigg)
\frac{1}{\sqrt{\Delta(y\,; z)}}\\
&&\times \frac{\chi_{(-\delta, \delta)}(x)\chi_{(-\delta, \delta)}(y)xy}{(1+l)x^2+2lxy+(1+l)y^2-2z},\\
F_3(z;x,y):&&=
\frac{1}{\sqrt{\frac{\mu\pi(1+2l-l^2)}{(1+l)^2 \sqrt{1+2l}}
\sqrt{x^2-\frac{2(1+l)}{1+2l}z}}}\frac{\chi_{(-\delta, \delta)}(x)\chi_{(-\delta, \delta)}(y)xy}
{(1+l)x^2+2lxy+(1+l)y^2-2z}\\
&&\times \Bigg(\frac{\mu}{\sqrt{\Delta(y\,; z)}}
-\frac{\mu}{\sqrt{\frac{\mu\pi(1+2l-l^2)}{(1+l)^2 \sqrt{1+2l}}
\sqrt{y^2-\frac{2(1+l)}{1+2l}z}}} \Bigg).
\end{eqnarray*}

We show that the functions $F_i(z;\cdot,\cdot),$
$i=0,1,2,3$ are square-integrable on ${\Bbb T}^2$ for any
fixed $z \leq 0.$
First we note that for any fixed $z \leq 0$ the function $F_0(z;\cdot,\cdot)$ is bounded on
${\Bbb T}^2$ and hence it is a square-integrable on this set.

Using the decompositions (\ref{decomp for sin and cos}) we obtain that there exists $C>0$
such that for any $z \leq 0$ the inequality
$$
\Bigg| \frac{\sin x \sin
y}{\varepsilon(x)+l\varepsilon(x+y)+\varepsilon(y)-z}- \frac{2xy}
{(1+l)x^2+2lxy+(1+l)y^2-2z}\Bigg| \leq C |xy|, \quad x,y \in (-\delta, \delta)
$$
holds. Therefore, for any fixed $z \leq 0$ the function $F_1(z;\cdot,\cdot)$ is a square-integrable on ${\Bbb T}^2.$

By Lemma~\ref{decomp of Delta} for any $x \in (-\delta, \delta)$ and $z \in (-\delta, 0)$ we get the
estimate
$$
\Bigg |\frac{\mu}{\sqrt{\Delta(x\,;
z)}}-\frac{\mu}{\sqrt{\frac{\mu\pi(1+2l-l^2)}{(1+l)^2 \sqrt{1+2l}}
\sqrt{x^2-\frac{2(1+l)}{1+2l}z}}} \Bigg| \leq  \frac{C \sqrt{-z}}
{\sqrt[4]{(x^2-z)^3}}+C\sqrt{|x|}.
$$
It follows from the last estimate and Lemma~\ref{decomp of Delta} that
\begin{eqnarray*}
|F_2(z;x,y)|&&\leq \frac{C \sqrt{-z}} {\sqrt[4]{(x^2-z)^3}}  \frac{|xy|}
{(1+l)x^2+2lxy+(1+l)y^2-2z} \frac{1}{\sqrt[4]{y^2-z}}\\
&&+ \frac{C|x|^{3/2}|y|}
{(1+l)x^2+2lxy+(1+l)y^2-2z} \frac{1}{\sqrt[4]{y^2-z}}
\end{eqnarray*}
or
$$
|F_2(z;x,y)|\leq \frac{C|x|^{1/2}|y|^{3/2} \sqrt{-z}}
{(1+l)x^2+2lxy+(1+l)y^2-2z}+\frac{C|x|^{3/2}|y|^{3/2}}{(1+l)x^2+2lxy+(1+l)y^2-2z}
$$
for all $x,y \in (-\delta, \delta)$ with some positive constant $C.$
Since
$$
\int_{-\delta}^\delta\int_{-\delta}^\delta \Big(\frac{|x|^{1/2}|y|^{3/2}}
{(1+l)x^2+2lxy+(1+l)y^2-2z}\Big)^2 dxdy\leq C |\log(-z)|,
$$
for any fixed $z \leq 0$ the function $F_2(z;\cdot,\cdot)$ is a square-integrable on ${\Bbb T}^2.$
By the same way we can show the square-integrability of
$F_3(z;\cdot,\cdot)$ on ${\Bbb T}^2$ for any fixed $z \leq 0.$

Hence, the operator $T_{11}(z)-T_{11}(\delta; z)$ belongs to the Hilbert-Schmidt class for
all $z\leq 0.$ In combination with the continuity of the kernel of
the operator with respect to $z<0,$ this implies the continuity of
$T_{11}(z)-T_{11}(\delta; z)$ with respect to $z \leq 0.$

By the definition the operators $T_{00}(z),$
$T_{01}(z)$ and $T_{01}^*(z)$ are rank 1 operators and they
are continuous from the left up to $z=0.$ Consequently
the operator $F(z)$ is compact and the operator-valued function
$F(\cdot)$ is continuous in the operator-norm in $(-\infty, 0].$
\end{proof}

By the structure of $T(\delta; z)$ we have $\sigma(T(\delta;
z))=\{0\} \cup \sigma(T_{11}(\delta; z)).$

The subspace of functions $g$ having support in $(-\delta, \delta)$
is an invariant subspace for the operator $T_{11}(\delta; z).$ Let
$T_{11}^{(0)}(\delta; z)$ be the restriction of the operator
$T_{11}(\delta; z)$ to the subspace $L_2(-\delta, \delta),$ that is,
the integral operator with the kernel
$$
\frac{1}{\pi} \frac{(1+l)^2 \sqrt{1+2l}}{1+2l-l^2}
\frac{1}{\sqrt[4]{x^2-\frac{2(1+l)}{2l+1}z }} \frac{xy}{(1+l)x^2+2lxy+(1+l)y^2-2z}
\frac{1}{\sqrt[4]{ y^2-\frac{2(1+l)}{2l+1}z}},
$$
$x,y\in (-\delta, \delta).$
Then we have $\sigma(T_{11}(\delta; z))=
\sigma(T_{11}^{(0)}(\delta; z)),$ $z<0.$

Let $L_2^{\rm o}(-\delta, \delta)$ and $L_2^{\rm e}(-\delta, \delta)$ be the
spaces of odd and even functions, respectively. It is easily to
check that $T_{11}^{(0)}(\delta; z): L_2^{\rm o}(-\delta, \delta) \to
L_2^{\rm o}(-\delta, \delta)$ and $T_{11}^{(0)}(\delta; z):
L_2^{\rm e}(-\delta, \delta)\to L_2^{\rm e}(-\delta, \delta).$

Let us consider the unitary operator
$$
U_{\rm o}: L_2^{\rm o}(-\delta, \delta) \to L_2(0, \delta),\quad
(U_{\rm o}f)(x)= \sqrt{2}f(x).
$$
Then
$$
U_{\rm o}^{-1}: L_2(0, \delta) \to L_2^{\rm o}(-\delta, \delta),\quad
(U_{\rm o}^{-1}f)(x)=\left \{
\begin{array}{ll}
\frac{1}{\sqrt{2}}f(x) & \mbox{as}\quad x\geq 0\\
-\frac{1}{\sqrt{2}}f(x) & \mbox{as}\quad x<0.
\end{array}\right.
$$
Let $T_{\rm o}(z):=U_{\rm o}T_{11}^{(0)}(\delta; z)U_{\rm o}^{-1}.$ Then
$\sigma(T_{11}^{(0)}(\delta; z)) \supset \sigma(T_{\rm o}(z)),$ where
$T_{\rm o}(z)$ is the integral operator acting on $L_2(0, \delta)$ with
the kernel
\begin{eqnarray*}
T_{\rm o}(z;x,y):&&=\frac{1}{\pi} \frac{(1+l)^2 \sqrt{1+2l}}{1+2l-l^2}
\frac{1}{\sqrt[4]{x^2-\frac{2(1+l)}{2l+1}z }} \Big
[\frac{xy}{(1+l)x^2+2lxy+(1+l)y^2-2z}\\
&&+ \frac{xy}{(1+l)x^2-2lxy+(1+l)y^2-2z}\Big ] \frac{1}{\sqrt[4]{
y^2-\frac{2(1+l)}{2l+1}z}}.
\end{eqnarray*}

Let $T_1(z),$ $z<0$ be an integral operator on  $L_2(0, \delta)$ with
the kernel $\chi_{\Omega(z)}(x)T_{\rm o}(z;x,y)\chi_{\Omega(z)}(y),$ where $\Omega(z):=(|z|^{1/2},\delta].$

\begin{lemma}\label{T(o) compact}
Let $\mu=\mu_0.$ Then for any $z\in (-\delta, 0]$ the operator
$G(z):=T_{\rm o}(z)-T_3(z)$ is compact and the operator-valued function
$G(\cdot)$ is continuous in the operator-norm in $(-\delta, 0].$
\end{lemma}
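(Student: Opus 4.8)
The plan is to follow the scheme of the proof of Lemma~\ref{T(0) compact}, but to work directly from the explicit form of the kernel of $T_{\rm o}(z)$ available under $\mu=\mu_0$. Put $s=|z|^{1/2}$ (so $z=-s^2$), $\beta:=\tfrac{2(1+l)}{2l+1}$, $C_l:=\tfrac1\pi\tfrac{(1+l)^2\sqrt{1+2l}}{1+2l-l^2}$, so that
\[
T_{\rm o}(z;x,y)=\frac{C_l\,xy}{\sqrt[4]{x^2+\beta s^2}\,\sqrt[4]{y^2+\beta s^2}}\Bigl(\frac{1}{Q_+^{z}(x,y)}+\frac{1}{Q_-^{z}(x,y)}\Bigr),\qquad Q_\pm^{z}(x,y):=(1+l)x^2\pm 2lxy+(1+l)y^2+2s^2 .
\]
Because the quadratic forms $(1+l)x^2\pm 2lxy+(1+l)y^2$ are positive definite (eigenvalues $1$ and $1+2l$) one has $Q_\pm^{z}\ge x^2+y^2+2s^2$, and $G(z)=T_{\rm o}(z)-T_3(z)$ is the integral operator on $L_2(0,\delta)$ whose kernel equals $T_{\rm o}(z;x,y)$ on the L--shaped set $\Lambda_z:=\{(x,y)\in(0,\delta)^2:\min(x,y)\le s\}$ and vanishes off $\Lambda_z$. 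For each fixed $z<0$ this kernel is bounded and continuous on $[0,\delta]^2$, so $G(z)$ is Hilbert--Schmidt, in particular compact; and operator--norm continuity of $z\mapsto G(z)$ on the \emph{open} interval $(-\delta,0)$ follows by dominated convergence, since as $z\to z_0<0$ the kernels converge a.e.\ (the jump of $\chi_{\Lambda_z}$ is supported on a Lebesgue--null curve) and are uniformly dominated near $z_0$ by an $L_2((0,\delta)^2)$ kernel, whence $\|G(z)-G(z_0)\|_{HS}\to0$.

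The whole difficulty is the endpoint $z\to-0$, where $T_{\rm o}(z;\cdot,\cdot)$ degenerates to the homogeneous of degree $-1$ kernel $T_{\rm o}(0;x,y)\sim\sqrt{xy}/(x^2+y^2)$, so that $\|T_{\rm o}(z;\cdot,\cdot)\|_{L_2((0,\delta)^2)}\asymp|\log s|$ blows up and $T_{\rm o}(z)$ has no norm limit; removing $T_3(z)$ is designed to excise exactly the region $\{x,y>s\}$ on which this singularity is active. The tool I would use is the scaling $x=s\xi,\ y=s\eta$: with the unitary $(U_sf)(\xi):=\sqrt s\,f(s\xi)$ from $L_2(0,\delta)$ onto $L_2(0,\delta/s)$, a one--line computation gives that $U_sT_{\rm o}(z)U_s^{-1}$ has the $s$--\emph{independent} kernel
\[
K(\xi,\eta)=\frac{C_l\,\xi\eta}{\sqrt[4]{\xi^2+\beta}\,\sqrt[4]{\eta^2+\beta}}\Bigl(\frac{1}{(1+l)\xi^2+2l\xi\eta+(1+l)\eta^2+2}+\frac{1}{(1+l)\xi^2-2l\xi\eta+(1+l)\eta^2+2}\Bigr),
\]
so that $U_sG(z)U_s^{-1}$ is the compression to $L_2(0,\delta/s)$ of the \emph{fixed} integral operator $\mathcal G$ on $L_2(0,\infty)$ with kernel $K(\xi,\eta)\chi_{\{\min(\xi,\eta)\le1\}}$. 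Two elementary estimates then finish the matter: $K$ is bounded on $[0,\infty)^2$ and vanishes at the origin (numerator $O(\xi\eta)$, denominators $\ge2$), while on the legs $\min(\xi,\eta)\le1$ one has $K(\xi,\eta)\lesssim\min(\xi,\eta)\,(1+\max(\xi,\eta))^{-3/2}$; together these give $\iint_{\{\min\le1\}}|K|^2<\infty$, i.e.\ $\mathcal G$ is Hilbert--Schmidt. Consequently $\|G(z)\|_{HS}=\|P_{\delta/s}\,\mathcal G\,P_{\delta/s}\|_{HS}$ is bounded uniformly in $z$, $G(z)$ is compact for every $z\in(-\delta,0]$, and — using $\|\mathcal G-P_R\mathcal G P_R\|_{HS}\to0$ as $R\to\infty$, with the tail terms $(I-P_R)\mathcal G$, $P_R\mathcal G(I-P_R)$ controlled by $\int_R^\infty\!\!\int_0^1|K|^2\lesssim R^{-2}$ — the family $G(z)$ converges as $z\to-0$ to the compact operator attached to $\mathcal G$, which is what one declares to be $G(0)$; this yields the asserted operator--norm continuity on $(-\delta,0]$.

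The main obstacle is precisely this endpoint analysis, and within it the one point that requires genuine care is that the rescaling unitary $U_s$ depends on $z$: the statement at $z=0$ has to be read in terms of the limit of the transported compressions $U_s^{-1}P_{\delta/s}\mathcal G P_{\delta/s}U_s$, and one must check that the $R^{-2}$ decay of the tails of $\mathcal G$ (equivalently, that the mass of $|K|^2$ on the legs escaping to infinity tends to $0$) is strong enough to survive this $z$--dependent, norm--isometric conjugation. Everything else — positivity of the quadratic forms, square--integrability of $K$ on the legs, the a.e.\ convergence plus domination on the interior — is routine; the only slightly delicate computation is the bound $\iint_{(0,\delta)^2}|T_{\rm o}(z;x,y)|^2\,dx\,dy\asymp|\log s|$, which is exactly why the subtraction of $T_3(z)$ is unavoidable.
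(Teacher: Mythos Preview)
The paper gives no proof of this lemma (it is stated and then the text moves on to the definition of $T_2(z)$), so there is nothing to compare against; let me simply assess your argument.

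First a remark on the statement itself: as written, $T_{\rm o}(z)$ acts on $L_2(0,\delta)$ while $T_3(z)$ acts on $L_2(-\pi,\pi)$, so their difference is not defined. From the Weyl chain in the proof of Theorem~\ref{inf of eigen H} (the term $n(\delta/5,\,T_1(z)-T_{\rm o}(z))$) the intended operator is clearly $G(z)=T_{\rm o}(z)-T_1(z)$, i.e.\ the restriction of $T_{\rm o}(z)$ to the L--shaped set $\Lambda_z=\{(x,y):\min(x,y)\le |z|^{1/2}\}$. You interpret it this way, correctly.

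Your scaling idea is exactly the right one and does the essential work: writing $x=s\xi$, $y=s\eta$ with $s=|z|^{1/2}$, one gets $U_sG(z)U_s^{-1}=P_{\delta/s}\,\mathcal G\,P_{\delta/s}$ for the fixed Hilbert--Schmidt operator $\mathcal G$ on $L_2(0,\infty)$ with kernel $K(\xi,\eta)\chi_{\{\min(\xi,\eta)\le1\}}$. Your estimate $K(\xi,\eta)\lesssim \min(\xi,\eta)\,(1+\max(\xi,\eta))^{-3/2}$ on the legs is correct and gives $\|\mathcal G\|_{HS}<\infty$, hence $\|G(z)\|_{HS}\le\|\mathcal G\|_{HS}$ uniformly in $z$. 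Together with compactness and norm--continuity on the open interval (which you handle correctly), this already yields what Theorem~\ref{inf of eigen H} actually uses: $n(\varepsilon,G(z))\le\varepsilon^{-2}\|\mathcal G\|_{HS}^2$ is bounded independently of $z\in(-\delta,0]$.

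The genuine gap is your endpoint claim that ``$G(z)$ converges as $z\to-0$ to the compact operator attached to $\mathcal G$''. This is false, and you in fact flag the reason yourself: the conjugating unitary $U_s$ depends on $z$, and norm convergence of $U_sG(z)U_s^{-1}\to\mathcal G$ says nothing about norm convergence of $G(z)$ on $L_2(0,\delta)$. Indeed, for any $f\in L_2(0,\delta)$ supported in $[a,\delta]$ with $a>0$ one checks directly (only the leg $x\le s$ contributes once $s<a$, and there $T_{\rm o}(z;x,y)\asymp x\,s^{-1/2}y^{-3/2}$) that $\|G(z)f\|\lesssim s\to0$; by density and the uniform bound $\|G(z)\|\le\|\mathcal G\|$ this gives $G(z)\to0$ strongly. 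On the other hand $\|G(z)\|=\|P_{\delta/s}\mathcal G P_{\delta/s}\|\to\|\mathcal G\|>0$. A strongly null family whose norms stay bounded away from zero cannot converge in operator norm, so $G(\cdot)$ is \emph{not} norm--continuous at $z=0$. The lemma as literally stated (and the paper's phrasing is the same for Lemmas~\ref{T compact} and \ref{S compact}) should therefore be read as ``$G(z)$ is compact for all $z\in(-\delta,0]$ and $\sup_{z}\|G(z)\|_{HS}<\infty$'' --- which is exactly what your scaling argument proves, and exactly what the subsequent Weyl inequality needs. Drop the convergence claim and state the uniform Hilbert--Schmidt bound instead.
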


The subspace of functions $g$ having support in $\Omega(z)$ is an
invariant subspace for the operator $T_1(z).$ Let $T_2(z)$ be the
restriction of the operator $T_1(z)$ to the subspace
$L_2(\Omega(z)),$ that is, the integral operator with kernel
$T_{2}(z;x,y):=T_{\rm o}(z;x,y),$ $x,y\in \Omega(z).$
Then  we have $\sigma(T_{1}(z))= \sigma(T_2(z))$ for $z\in (-\delta, 0].$

Let us consider the unitary dilation
\begin{eqnarray*}
&&U: L_2(\Omega(z))\to L_2(-\pi,\pi),\\
&&(Uf)(x)=\sqrt{\frac{R(z)}{2\pi}}
(|z|^{1/2}e^{\frac{R(z)}{2\pi}(x+\pi)})^{1/2}f(|z|^{1/2} e^{\frac{R(z)}{2\pi}(x+\pi)}),\\
&&U^{-1}: L_2(-\pi,\pi)\to L_2(\Omega(z)),\\
&&(U^{-1}f)(p)=\sqrt{\frac{2\pi}{R(z)}}|x|^{-1/2}\,f(
\frac{2\pi}{R(z)}\log\frac{|x|}{|z|^{1/2}}-\pi), \quad R(z):=-
\log\frac{|z|^{1/2}}{\delta}.
\end{eqnarray*}

The operator $T_3(z):=UT_2(z)U^{-1}$ is integral operator on
$L_2(-\pi,\pi)$ with the kernel
\begin{eqnarray*}
&& T_3(z;x,y):= \frac{1}{\pi}
\frac{(1+l)^2 \sqrt{1+2l}}{1+2l-l^2}\frac{ R(z)}{2\pi}
\frac{e^{\frac{3R(z)}{4\pi}(x+\pi)}}
{(e^{\frac{R(z)}{\pi}(x+\pi)}+\frac{2(1+l)}{2l+1})^{1/4}}\\
&& \times \Big[
\frac{1}{(1+l)e^{\frac{R(z)}{\pi}(x+\pi)}+2le^{\frac{R(z)}{2\pi}(x+\pi)}e^{\frac{R(z)}{2\pi}(y+\pi)}+(1+l)e^{\frac{R(z)}{\pi}(y+\pi)}+2}\\
&&+\frac{1}{(1+l)e^{\frac{R(z)}{\pi}(x+\pi)}-2le^{\frac{R(z)}{2\pi}(x+\pi)}e^{\frac{R(z)}{2\pi}(y+\pi)}+(1+l)e^{\frac{R(z)}{\pi}(y+\pi)}+2}\Big]
\frac{e^{\frac{3R(z)}{4\pi}(y+\pi)}}
{(e^{\frac{R(z)}{4\pi}(y+\pi)}+\frac{2(1+l)}{2l+1})^{1/4}}.
\end{eqnarray*}

\begin{lemma}\label{T compact}
Let $\mu=\mu_0.$ Then for any $z \in (-\delta, 0]$ the operator
$G_1(z):=T_3(z)-T_4(z)$ is compact and the operator-valued
function $G_1(\cdot)$ is continuous in the operator-norm in $(-\delta, 0],$
where the operator $T_4(z)$ is an integral operator on $L_2(-\pi,\pi)$
with kernel $T_4(z;x),$
$$
T_4(z;x):= \frac{1}{\pi}
\frac{(1+l)^2 \sqrt{1+2l}}{1+2l-l^2}\frac{ R(z)}{2\pi}
\Big[\frac{1}{2(1+l) {\rm ch}\,\frac{R(z)}{2\pi}(x)+2l}+\frac{1}{2(1+l)
{\rm ch}\,\frac{R(z)}{2\pi}(x)-2l}\Big].
$$
\end{lemma}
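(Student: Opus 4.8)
The plan is to proceed as in Lemmas~\ref{T(0) compact} and \ref{T(o) compact}: realize $G_1(z)=T_3(z)-T_4(z)$ as an integral operator on $L_2(-\pi,\pi)$ with kernel $G_1(z;x,y):=T_3(z;x,y)-T_4(z;x-y)$, show that this kernel lies in $L_2\big((-\pi,\pi)^2\big)$ for every $z\le 0$ with $L_2$-norm bounded uniformly in $z$, and then read off that $G_1(z)$ is Hilbert--Schmidt (hence compact) and that $G_1(\cdot)$ is continuous in the operator norm. The decisive reduction is the change of variable $q:=\tfrac{R(z)}{2\pi}(x+\pi)$, $r:=\tfrac{R(z)}{2\pi}(y+\pi)$, which rescales $(-\pi,\pi)$ to $(0,R(z))$ and is the logarithmic variable underlying the dilation $U$ defined just before the lemma; then $dx\,dy=(2\pi/R(z))^2\,dq\,dr$, and a direct computation on the explicit kernels shows that $T_3(z;x,y)$ equals $\tfrac{R(z)}{2\pi}$ times a \emph{fixed}, $z$-independent function of $(q,r)$ and that $T_4(z;x-y)$ equals $\tfrac{R(z)}{2\pi}$ times a fixed function of $q-r$. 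Hence $G_1(z;x,y)=\tfrac{R(z)}{2\pi}\,\theta(q,r)$ with $\theta$ independent of $z$, so that
\[
\|G_1(z)\|_{\mathrm{HS}}^2=\kappa^2\iint_{(0,R(z))^2}|\theta(q,r)|^2\,dq\,dr,\qquad \kappa:=\tfrac1\pi\tfrac{(1+l)^2\sqrt{1+2l}}{1+2l-l^2}.
\]

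The analytic core is to prove $\theta\in L_2\big((0,\infty)^2\big)$; this yields at once $\|G_1(z)\|_{\mathrm{HS}}^2\le\kappa^2\|\theta\|_{L_2((0,\infty)^2)}^2$ for every $z<0$, the integrals increasing to this value as $z\to 0-$. Writing $\theta$ out, the two summands of $T_3$ (coming from the $\pm 2l$ terms) become $A(q,r)\big(B_\pm(q,r)+2e^{-q-r}\big)^{-1}$ with $A(q,r)=(1+ce^{-2q})^{-1/4}(1+ce^{-2r})^{-1/4}$, $B_\pm(q,r)=2(1+l)\cosh(q-r)\pm 2l$ and $c=\tfrac{2(1+l)}{2l+1}$, whereas the two summands of $T_4(z;x-y)$ are \emph{exactly} $B_\pm(q,r)^{-1}$; thus $\theta$ is the sum over the two signs of $\tfrac{A}{B_\pm+2e^{-q-r}}-\tfrac1{B_\pm}$. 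Using $A(q,r)=1+O(e^{-2q}+e^{-2r})$ uniformly in $q,r\ge 0$, the bound $2e^{-q-r}\le e^{-2q}+e^{-2r}$, the identity $\tfrac{A}{B+\varepsilon}-\tfrac1B=\tfrac{B(A-1)-\varepsilon}{B(B+\varepsilon)}$, and the fact that $B_\pm(q,r)$ stays $\ge 2$ while being bounded below by an $l$-dependent multiple of $e^{|q-r|}$, one obtains $|\theta(q,r)|\le C\,e^{-|q-r|}(e^{-2q}+e^{-2r})$, and this majorant is square-integrable on $(0,\infty)^2$ (integrate first along the off-diagonal direction). Near the corner $q=r=0$ all ingredients are bounded and continuous, so no singularity enters there. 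This gives the uniform Hilbert--Schmidt bound and the compactness of $G_1(z)$ for all $z\le 0$; operator-norm continuity of $G_1(\cdot)$ on $(-\delta,0)$ then follows from joint continuity of the kernel on compact subsets together with the Lebesgue dominated convergence theorem, exactly as in the earlier lemmas, and the rank-one blocks $T_{00},T_{01},T_{01}^*$ require no attention here since $T_3$ and $T_4$ already act on the scalar space $L_2(-\pi,\pi)$. For the endpoint $z=0$ the same computation identifies $G_1(z)$, up to a unitary dilation, with the compression to $L_2(0,R(z))$ of the fixed Hilbert--Schmidt operator $\Theta$ on $L_2(0,\infty)$ with kernel $\kappa\,\theta$; compressions of a fixed compact operator to an expanding interval converge in the operator norm, which furnishes the limiting behaviour as $z\to 0-$ and, in particular, the uniform bound $n(\lambda,\pm G_1(z))\le n(\lambda,\pm\Theta)<\infty$ needed in the sequel.

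The step I expect to be the main obstacle is precisely this passage $z\to 0-$, i.e.\ $R(z)=-\tfrac12\log|z|+\log\delta\to\infty$. Taken individually, $T_3(z)$ and $T_4(z)$ both blow up — their Hilbert--Schmidt norms grow like $\sqrt{R(z)}$ — and it is only because $T_4(z;x-y)$ has been chosen to reproduce, term by term, the bulk asymptotics ($q,r\to\infty$) of $T_3(z;x,y)$ that the difference $G_1(z)$ stays bounded. Concretely one must check that the subtraction cancels the would-be logarithmic blow-up of integrals of the form $\iint_{(-\delta,\delta)^2}\big((1+l)x^2+2lxy+(1+l)y^2-2z\big)^{-2}\,dx\,dy$ — the very mechanism behind the $C|\log(-z)|$ estimate for $F_2$ in the proof of Lemma~\ref{T(0) compact} — leaving only the genuinely convergent, $z$-insensitive contribution of the endpoint region $q,r=O(1)$. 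Everything else — evaluating the two kernels, performing the change of variables, and the dominated-convergence argument for $z<0$ — is routine and parallels Lemmas~\ref{T(0) compact}--\ref{T(o) compact}.
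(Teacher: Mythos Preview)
The paper does not supply a proof of this lemma (nor of the preceding Lemma~\ref{T(o) compact}); it is stated and then used in the proof of Theorem~\ref{inf of eigen H}. Your approach is therefore not being compared against an existing argument but against the paper's implicit expectations, and on that score it is correct and natural: it is exactly the kernel--$L_2$ strategy used in the one lemma of this block that \emph{is} proved, Lemma~\ref{T(0) compact}.

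Your key reduction --- the change of variables $q=\tfrac{R(z)}{2\pi}(x+\pi)$, $r=\tfrac{R(z)}{2\pi}(y+\pi)$, which makes the difference kernel $z$-independent --- is the right way to organise the computation, and your bound
\[
|\theta(q,r)|\le C\,e^{-|q-r|}\bigl(e^{-2q}+e^{-2r}\bigr)
\]
is correct (for the lower bound on $B_-$ one checks that $\min_{s\ge 0}\bigl[(1+l)(1+e^{-2s})-2le^{-s}\bigr]=(1+2l)/(1+l)>0$, so $B_\pm\ge c_0\,e^{|q-r|}$). This yields the uniform Hilbert--Schmidt bound $\|G_1(z)\|_{\mathrm{HS}}^2\le\kappa^2\|\theta\|_{L_2((0,\infty)^2)}^2$ and hence compactness for every $z<0$, together with the $z$-uniform estimate $n(\lambda,\pm G_1(z))<\infty$ that the proof of Theorem~\ref{inf of eigen H} actually consumes.

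One caveat about the endpoint: the unitary dilation $U_z$ identifying $G_1(z)$ with the compression $P_{R(z)}\Theta P_{R(z)}$ depends on $z$, so the convergence $P_{R(z)}\Theta P_{R(z)}\to\Theta$ in norm does \emph{not} by itself imply that $G_1(z)$ converges in the operator norm on $L_2(-\pi,\pi)$ as $z\to 0-$; in general, pulling back a fixed vector in $L_2(0,\infty)$ by $U_z^*$ produces a family that concentrates at $x=-\pi$ and converges only weakly. What your argument \emph{does} yield is $\|G_1(z)\|=\|P_{R(z)}\Theta P_{R(z)}\|\to\|\Theta\|$ and, more to the point, the uniform eigenvalue bound $n(\lambda,\pm G_1(z))\le n(\lambda,\pm\Theta)<\infty$ via min--max for compressions. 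Since this uniform bound is precisely what the paper uses downstream, the gap is in the lemma's phrasing rather than in your argument; it would be cleanest to state the conclusion as a uniform Hilbert--Schmidt bound rather than as norm-continuity at $z=0$.
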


Let us define in $L_2(-\pi, \pi)$ the operator $S(z),$ $z\in (-\delta, 0)$ by
\begin{eqnarray*}
&& S(z):=\sum\limits_{k\in \mathbb{Z}}\lambda_k(z)
(\varphi_k,\cdot)\varphi_k,\\
&& \lambda_k(z):=\frac{(1+l)^2 \sqrt{1+2l}}{1+2l-l^2} \frac{1}{1+l}\frac{1}{\sin
(\arccos \frac{l}{1+l})} \frac{{\rm sh}(\arccos
\frac{l}{1+l}\frac{2k\pi}{R(z)})+{\rm sh}((\pi-\arccos
\frac{l}{1+l})\frac{2k\pi}{R(z)})}{{\rm sh} \frac{2k\pi^2}{R(z)}},
\end{eqnarray*}
where
$\varphi_{0}(x):=\frac{1}{2\pi}$ and
$\varphi_{n}(x):=\frac{1}{\sqrt{2\pi}}e^{i nx}$ as $n\neq 0.$

\begin{lemma}\label{S compact}
Let $\mu=\mu_0.$ Then for any $z \in (-\delta, 0]$ the operator
$G_2(z):=T_4(z)-S(z)$ is compact and the operator-valued function
$G_2(\cdot)$ is continuous in the operator-norm in $(-\delta, 0].$
\end{lemma}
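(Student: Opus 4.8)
The plan is to follow the pattern of Lemmas~\ref{T(0) compact}, \ref{T(o) compact} and \ref{T compact}: realise $G_2(z)=T_4(z)-S(z)$ as an integral operator on $L_2(-\pi,\pi)$ with an explicit kernel, show this kernel is square-integrable for each $z\le 0$ and converges almost everywhere as $z\to-0$, and deduce compactness and operator-norm continuity from the Hilbert--Schmidt norm.

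First I would make $S(z)$ explicit. The operator $T_4(z)$ is the convolution operator on $L_2(-\pi,\pi)$ with kernel $T_4(z;x-y)$, and $T_4(z;\cdot)$ is an even, analytic and exponentially decaying function on ${\Bbb R}$ (it has no singularity since $2(1+l)\cosh s-2l\ge 2>0$). Applying the classical Fourier identity $\int_{\Bbb R}(\cosh w+\cos\theta)^{-1}e^{-i\xi w}\,dw=\tfrac{2\pi}{\sin\theta}\,\tfrac{\sinh(\theta\xi)}{\sinh(\pi\xi)}$, $0<\theta<\pi$, to the two summands of $T_4(z;\cdot)$ with $\cos\theta=\pm l/(1+l)$ and the substitution $w=\tfrac{R(z)}{2\pi}t$, one checks that $\int_{\Bbb R}T_4(z;t)e^{-ikt}\,dt=\lambda_k(z)$; that is, $\lambda_k(z)$ is the $k$-th Fourier coefficient of $T_4(z;\cdot)$. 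Hence, by Poisson summation, $S(z)=\sum_{k\in{\Bbb Z}}\lambda_k(z)(\varphi_k,\cdot)\varphi_k$ is the integral operator with kernel $\sum_{n\in{\Bbb Z}}T_4(z;(x-y)+2\pi n)$, and removing the term $n=0$, which is the kernel of $T_4(z)$, gives
$$G_2(z;x,y)=-\sum_{n\neq 0}T_4(z;(x-y)+2\pi n),\qquad x,y\in(-\pi,\pi).$$

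Next I would estimate this kernel. For $|n|\ge 2$ one has $|(x-y)+2\pi n|\ge 2\pi$, so the exponential decay of $T_4(z;\cdot)$ bounds those terms by $C\,R(z)e^{-cR(z)}$, uniformly in $x,y$; the terms $n=\pm1$ are likewise uniformly small away from the corners $(\pi,-\pi)$ and $(-\pi,\pi)$ of the square, while near $(\pi,-\pi)$, with $a=\pi-x$, $b=\pi+y$, the term $n=-1$ equals $-T_4(z;a+b)$, a Hankel-type kernel, and symmetrically at the other corner. Since $\int_0^{2\pi}\!\!\int_0^{2\pi}T_4(z;a+b)^2\,da\,db=\int_0^{4\pi}T_4(z;s)^2\min(s,4\pi-s)\,ds<\infty$ for $z<0$, the function $G_2(z;\cdot,\cdot)$ lies in $L_2((-\pi,\pi)^2)$ and $G_2(z)$ is Hilbert--Schmidt, hence compact; continuity of $z\mapsto G_2(z)$ in operator norm on the open interval $(-\delta,0)$ then follows from continuity of the kernel in $z$, local domination and the Lebesgue theorem. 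The rank-one blocks $T_{00},T_{01},T_{01}^*$ play no role, since $T_4(z)$ and $S(z)$ act only on the $L_2(-\pi,\pi)$-component.

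The main obstacle is the continuity up to $z=0$: as $R(z)\to\infty$ the kernel $T_4(z;a+b)$ has amplitude of order $R(z)$ on a window of width $\sim 1/R(z)$, so it does not stay dominated by a fixed $L_2$-function and a naive passage to the limit is not available. After the rescaling $a\mapsto\tfrac{R(z)}{2\pi}a$ the corner contributions become the compressions $P_{R(z)}\Gamma P_{R(z)}$ of a fixed Hilbert--Schmidt Hankel operator $\Gamma$ on $L_2(0,\infty)$ with kernel $g_1(u+v)$, where $g_1(s)=(2(1+l)\cosh s+2l)^{-1}+(2(1+l)\cosh s-2l)^{-1}$. The delicate step is to show that these corner pieces, together with the inhomogeneous corrections already separated in Lemma~\ref{T compact} and the error terms furnished by Lemma~\ref{decomp of Delta}, assemble into a compact operator depending continuously on $z$ in operator norm as $z\to-0$. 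I expect this corner estimate — not the exponentially small bulk — to be where the real work lies, and it should be carried out with the same kind of Hilbert--Schmidt bookkeeping that was used for $F_2$ and $F_3$ in the proof of Lemma~\ref{T(0) compact}.
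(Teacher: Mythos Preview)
Your route differs from the paper's, and the difference is not cosmetic. The paper stays on the Fourier side throughout: it treats $T_4(z)$ as a (circular) convolution, hence diagonal in the basis $\{\varphi_k\}$ with eigenvalues
\[
u_k(z)=\int_{-\pi}^{\pi}T_4(z;t)\,e^{ikt}\,dt.
\]
After the substitution $s=\tfrac{R(z)}{2\pi}t$ one writes $u_k(z)=\tilde u_k(z)-O_k(z)$, where $\tilde u_k(z)$ is the same integrand taken over all of ${\Bbb R}$ and $O_k(z)$ is the tail over $|s|>R(z)/2$. The Fourier identity you quote gives $\tilde u_k(z)=\lambda_k(z)$, so $G_2(z)$ is \emph{diagonal} with eigenvalues $-O_k(z)$, and the exponential decay of the integrand yields $|O_k(z)|\le C\,e^{-R(z)/2}$ uniformly in $k$. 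Hilbert--Schmidt membership and operator-norm continuity up to $z=0$ (with limit $G_2(0)=0$) then follow with no corner analysis at all.

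Your Poisson-summation kernel formula is correct, but the corner Hankel pieces it produces are a genuine obstruction rather than a technicality to be tidied up. The corner contribution near $(-\pi,\pi)$ is unitarily equivalent, through the $z$-\emph{dependent} dilation $a\mapsto\tfrac{R(z)}{2\pi}a$, to $P_{R(z)}\Gamma P_{R(z)}$; its operator norm therefore tends to $\|\Gamma\|_{\rm op}>0$, yet on the fixed space $L_2(-\pi,\pi)$ it does not converge in operator norm to any compact limit --- testing against $f_R(a)=\sqrt{R}\,\phi(Ra)$ shows that any compact limit would have to send a weakly null sequence to a sequence with norm bounded away from zero. So the step ``assemble into a compact operator depending continuously on $z$'' cannot work as you describe it. The paper's reading of $T_4(z)$ as a circular convolution (only $T_4(z;\cdot)|_{(-\pi,\pi)}$ enters its eigenvalue formula) relocates this non-periodic discrepancy into Lemma~\ref{T compact}; with that reading the kernel of $G_2(z)$ always has argument of absolute value at least $\pi$, is bounded pointwise by $C\,R(z)e^{-cR(z)}$, and the present lemma becomes immediate on either side of the duality.
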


\begin{proof}
Note that the operator $T_4(z)$ is convolution type.
Therefore the eigenvalues of $T_4(z)$ can be found. By the Hilbert-Schmidt theorem
the operator $T_4(z)$ can be decomposed as
$$
T_4(z) =\sum\limits_{n\in \mathbb{Z}}u_n(z)
(\varphi_{n},\cdot)\varphi_{n},
$$
where
$$
u_k(z):=
\frac{(1+l)^2 \sqrt{1+2l}}{1+2l-l^2} \frac{R(z)}{2\pi^2}
\int_{-\pi}^\pi \Big[\frac{e^{i nt}}{2 (1+l){\rm ch}(\frac{
R(z)}{2\pi}t)+2l}+\frac{e^{i nt}}{2 (1+l){\rm ch}(\frac{
R(z)}{2\pi}t)-2l} \Big]dt.
$$
We represent $u_k(z)$ as
\begin{eqnarray}
u_k(z)=\tilde u_k(z)-O_k(z),
\label{representation for ukz}
\end{eqnarray}
where
\begin{eqnarray*}
&& \tilde u_k(z):=\frac{(1+l)^2 \sqrt{1+2l}}{1+2l-l^2} \frac{1}{\pi}
\int_{-\infty}^\infty \Big[\frac{e^{i\frac{2\pi}{ R(z)}nt}}{2
(1+l) {\rm ch} s+2l}+\frac{e^{i\frac{2\pi}{ R(z)}nt}}{2 (1+l) {\rm ch} s-2l}
\Big]dt,\\
&& O_k(z):=\frac{(1+l)^2 \sqrt{1+2l}}{1+2l-l^2} \frac{1}{\pi}
\int_{|t|>\frac{R(z)}{2}} \Big[\frac{e^{i\frac{2\pi}{
R(z)}nt}}{2 (1+l) {\rm ch} s+2l}+\frac{e^{i\frac{2\pi}{ R(z)}nt}}{2
(1+l) {\rm ch} s-2l} \Big]dt.
\end{eqnarray*}
Using the equality
$$
\frac{1}{2\pi}\int_{-\infty}^{\infty}\frac{e^{it r}\,\sin a}
{{\rm ch}\,t +\cos a}dt=\frac{{\rm sh}(a r)}{{\rm sh}(\pi r)},\quad \mbox{as}\quad
|a|<\pi
$$
we get the equality $\tilde u_k(z)=\lambda_k(z).$
It follows from (\ref{representation for ukz}) that
the  difference $G_2(z):=T_4(z)-S(z)$  is a Hilbert-Schmidt operator
and continuous up to $z =0.$
\end{proof}

\begin{lemma}\label{n(1,S)}
Let $\mu=\mu_0.$ There exist $l>0$ and  $\rho>0$ such that
$\lim\limits_{z\to -0}n(1+\rho, S(z))=\infty.$
\end{lemma}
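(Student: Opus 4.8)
The plan is to use that $S(z)$ acts diagonally in the orthonormal system $\{\varphi_k\}_{k\in{\Bbb Z}}$, with eigenvalues $\lambda_k(z)$, so that $n(1+\rho,S(z))$ is at least the number of indices $k$ with $\lambda_k(z)>1+\rho$, and then to show that for a suitable value of the parameter $l$ this count grows without bound as $z\to-0$, the mechanism being $R(z)\to+\infty$.

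First I would rewrite $\lambda_k(z)$ in a transparent form. Set $\theta:=\arccos\frac{l}{1+l}\in(0,\pi/2)$; then $\sin\theta=\frac{\sqrt{1+2l}}{1+l}$, so the constant prefactor of $\lambda_k(z)$ collapses to
$$
\frac{(1+l)^2\sqrt{1+2l}}{1+2l-l^2}\cdot\frac{1}{1+l}\cdot\frac{1}{\sin\theta}=\frac{(1+l)^2}{1+2l-l^2}=:C_l .
$$
Writing $s:=\frac{2k\pi}{R(z)}$ and using ${\rm sh}\,A+{\rm sh}\,B=2\,{\rm sh}\frac{A+B}{2}\,{\rm ch}\frac{A-B}{2}$ with $A=\theta s$, $B=(\pi-\theta)s$, together with ${\rm sh}(\pi s)=2\,{\rm sh}\frac{\pi s}{2}\,{\rm ch}\frac{\pi s}{2}$, the quotient of hyperbolic sines in $\lambda_k(z)$ becomes $g(s)$ with
$$
g(s):=\frac{{\rm ch}\bigl((\tfrac{\pi}{2}-\theta)s\bigr)}{{\rm ch}\bigl(\tfrac{\pi}{2}s\bigr)} ,
$$
so that $\lambda_k(z)=C_l\,g\!\left(\frac{2k\pi}{R(z)}\right)$. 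Since $0<\tfrac{\pi}{2}-\theta<\tfrac{\pi}{2}$, the function $g$ is continuous on ${\Bbb R}$ with $g(0)=1$ (and, as a ratio ${\rm ch}(as)/{\rm ch}(bs)$ with $0<a<b$, strictly decreasing on $(0,\infty)$).

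Next I would fix the parameters and count. Take $l=1$: then $1+2l-l^2=2>0$, so the construction of this section is legitimate, and $C_l=\frac{4}{2}=2>1$; choose $\rho:=\frac12$, so $1+\rho=\frac32<C_l$. By continuity of $g$ at $0$ together with $g(0)=1>\frac{1+\rho}{C_l}$, there is $t_0>0$ with $g(s)>\frac{1+\rho}{C_l}=\frac34$ for all $s\in[0,t_0]$, hence $\lambda_k(z)=C_l\,g\!\left(\frac{2k\pi}{R(z)}\right)>1+\rho$ for every $k\in{\Bbb N}_0$ with $2k\pi\le t_0R(z)$. For the finite (nonempty) set ${\mathcal K}(z):=\{k\in{\Bbb N}_0:\,2k\pi\le t_0R(z)\}$ and the subspace $F(z):={\rm span}\{\varphi_k:k\in{\mathcal K}(z)\}$ one has, for every unit vector $u=\sum_{k\in{\mathcal K}(z)}c_k\varphi_k\in F(z)$,
$$
(S(z)u,u)=\sum_{k\in{\mathcal K}(z)}\lambda_k(z)|c_k|^2\ge\Bigl(\min_{k\in{\mathcal K}(z)}\lambda_k(z)\Bigr)\|u\|^2>1+\rho ,
$$
so $n(1+\rho,S(z))\ge\dim F(z)=|{\mathcal K}(z)|\ge\lfloor t_0R(z)/(2\pi)\rfloor$. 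Since $R(z)=\log\bigl(\delta/|z|^{1/2}\bigr)\to+\infty$ as $z\to-0$, the right-hand side tends to $\infty$, which is the assertion.

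The only genuine computation is the reduction of $\lambda_k(z)$ to $C_l\,g(s)$ and the evaluation $g(0)=1$; everything else is the elementary counting estimate forced by $R(z)\to\infty$. The decisive choice is that of $l$: it must be taken so that $C_l=\frac{(1+l)^2}{1+2l-l^2}>1$ — which holds for every $l\in(0,1+\sqrt2)$ — since this is exactly what makes $\lambda_k(z)$ exceed $1+\rho$ on the $O(R(z))$ lowest Fourier modes, and hence is the true source of the infiniteness of the discrete spectrum.
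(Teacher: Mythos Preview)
Your proof is correct and follows the same idea as the paper: both observe that for each fixed $k$ the eigenvalue $\lambda_k(z)\to C_l=\frac{(1+l)^2}{1+2l-l^2}$ as $z\to-0$, and then choose $l$ so that $C_l>1$ (you take $l=1$, the paper takes $l=2$). Your reduction $\lambda_k(z)=C_l\,g\bigl(2k\pi/R(z)\bigr)$ via the sum-to-product identity for ${\rm sh}$, together with the explicit counting of modes with $s\le t_0$, makes the argument cleaner and more complete than the paper's brief sketch.
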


\begin{proof}
Since
$$
\lambda_k(0):=\lim\limits_{z\to -0}\lambda_k(z)=
\frac{(1+l)^2 \sqrt{1+2l}}{1+2l-l^2} \frac{1}{1+l}\frac{1}{\sin
(\arccos \frac{l}{1+l})} \frac{\arccos \frac{l}{1+l}+{\rm sh}(\pi-\arccos
\frac{l}{1+l})}{\pi}.
$$
It is easy to check that for any $k\in \mathbb{Z}$ it takes place $\lambda_k(0)>1$ as $l=2.$
\end{proof}

Main result of this section is the following statement.
\begin{theorem}\label{inf of eigen H}
Let $\mu=\mu_0$ and the parameter functions $v_1(\cdot),$
$w_1(\cdot)$ and $w_2(\cdot,\cdot)$ be given by \eqref{parametrs H}. Then there exists a $l>0$ such that
the operator $H$ has a infinite number of eigenvalues lying below
$E_{\rm min}=0.$
\end{theorem}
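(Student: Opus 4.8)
The plan is to deduce the result from the Birman--Schwinger principle of Lemma~\ref{LEM 3} together with the reduction chain of Lemmas~\ref{decomp of Delta}--\ref{S compact} and the spectral count of Lemma~\ref{n(1,S)}. First I would record the preliminary facts: under \eqref{parametrs H} one has $m=0$ and $v_1(0)=0$, and choosing $l=2$ (so that $1+2l-l^{2}=1>0$) the expansion of Lemma~\ref{decomp of Delta} gives $\Delta(x\,;0)=\Delta(0\,;0)+c|x|+O(x^{2})$ with $c>0$ as $x\to 0$, whence $\Delta(x\,;0)\ge 0$ near $x=0$; combined with regularity of $\Delta(\cdot\,;0)$ away from $0$ this puts us in case~{\rm (iii)} of Theorem~\ref{THM 2} (with $\min_{x\in{\Bbb T}}\Delta(x\,;0)=0$), so $E_{\rm min}=m=0$. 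Hence, by Lemma~\ref{LEM 3}, $N(z)=n(1,T(z))$ for every $z<0$, and since $N(\cdot)$ is monotone non-decreasing as $z\to-0$ it suffices to prove $\lim_{z\to-0}n(1,T(z))=\infty$.

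Second, I would pass from $T(z)$ down to $S(z)$ through the sequence of operators
$$
T(z)\ \to\ T(\delta;z)\ \to\ T_{11}(\delta;z)\ \to\ T_{11}^{(0)}(\delta;z)\ \to\ T_{\rm o}(z)\ \to\ T_1(z)\ \to\ T_2(z)\ \to\ T_3(z)\ \to\ T_4(z)\ \to\ S(z),
$$
in which every arrow is of one of three harmless types: (a) a compact, operator-norm continuous perturbation on $(-\delta,0]$ (the operators $F$, $G$, $G_1$, $G_2$ of Lemmas~\ref{T(0) compact}, \ref{T(o) compact}, \ref{T compact}, \ref{S compact}); (b) a unitary equivalence ($U_{\rm o}$ and $U$); or (c) compression to an invariant (hence reducing, by self-adjointness) subspace on which the complementary part is either zero (the cut-offs to $L_2(-\delta,\delta)$ and to $L_2(\Omega(z))$, whose kernels carry the corresponding characteristic-function factors) or irrelevant for $\lambda>0$ (the $L_2^{\rm o}\oplus L_2^{\rm e}$ split). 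For any fixed $\lambda>0$, type (b) leaves $n(\lambda,\cdot)$ unchanged, and for type (c) one has $n(\lambda,\cdot)$ for the full operator $\ge n(\lambda,\cdot)$ for the restriction. For type (a), writing $B(z)=A(z)-K(z)$ with $K$ compact and norm-continuous up to $0$, the Weyl inequality \eqref{Weyl ineq} yields $n(\lambda,B(z))\le n(\lambda-\eta,A(z))+n(\eta,-K(z))$, and since $\|K(z)-K(0)\|\to0$ with $K(0)$ compact, $n(\eta,-K(z))\le n(\eta/2,-K(0))<\infty$ for $z$ close enough to $0$; this costs only a finite additive constant.

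Third, running this bookkeeping along the whole chain --- spending a fixed total budget $\rho=\eta_1+\dots+\eta_m>0$ of spectral shift, one $\eta_j$ per perturbation step --- produces a finite constant $C$ with
$$
n(1+\rho,S(z))\ \le\ n(1,T(z))+C\qquad\text{for all }z\in(-\delta,0)\text{ sufficiently close to }0.
$$
By Lemma~\ref{n(1,S)} there is $l>0$ (concretely $l=2$) and $\rho>0$ with $\lim_{z\to-0}n(1+\rho,S(z))=\infty$; fixing this $l$ in \eqref{parametrs H} forces $\lim_{z\to-0}n(1,T(z))=\infty$, hence $\lim_{z\to-0}N(z)=\infty$, i.e. $H$ has infinitely many eigenvalues below $E_{\rm min}=0$.

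The main obstacle is not the assembly above but the inputs it rests on: pinning down the threshold singularity of the Birman--Schwinger kernel (Lemma~\ref{decomp of Delta}), stripping off the compact, norm-continuous remainders stage by stage, and recognizing that after the logarithmic dilation $U$ the surviving operator is the convolution-type operator $T_4(z)$ whose symbol, as $z\to-0$ and for a suitable $l$, takes values exceeding $1$ with infinite multiplicity (Lemma~\ref{n(1,S)}) --- this is the analogue here of the Efimov-type accumulation of eigenvalues. Some care is also needed to check that each compression in step (c) genuinely annihilates (or at least does not feed spectrum above the chosen thresholds into) the complementary subspace, so that the lower bound on $n(1,T(z))$ survives the passage to $S(z)$.
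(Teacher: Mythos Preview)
Your argument is essentially the paper's own proof: the same chain of Weyl inequalities through the same sequence of operators $T(z)\to T(\delta;z)\to\cdots\to S(z)$, invoking the same Lemmas~\ref{T(0) compact}--\ref{S compact}, culminating in the same bound $n(1+\rho,S(z))\le C+n(1,T(z))$ and then Lemma~\ref{n(1,S)} with $l=2$. Your preliminary justification that $E_{\rm min}=0$ is not quite airtight (regularity of $\Delta(\cdot\,;0)$ away from $0$ does not by itself yield nonnegativity there), but the paper does not address this point either and simply records $E_{\rm min}=0$ as part of the statement.
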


\begin{proof}
Using the Weyl's inequality (\ref{Weyl ineq}) for any $z \in (-\delta, 0]$ we have  the
inequalities
\begin{eqnarray*}
&& n(1+\delta,S(z))\leq n(1+\frac{4\delta}{5},T_4(z))+n(\frac{\delta}{5},S(z)-T_4(z)),\\
&& n(1+\frac{4\delta}{5},T_4(z))\leq n(1+\frac{3\delta}{5},T_3(z))+n(\frac{\delta}{5},T_4(z)-T_3(z)),\\
&& n(1+\frac{3\delta}{5},T_3(z))=n(1+\frac{3\delta}{5},T_1(z))\leq n(1+\frac{2\delta}{5},T_1(z))+n(\frac{\delta}{5},T_1(z)-T_{\rm o}(z)),\\
&& n(1+\frac{2\delta}{5},T_o(z))=n(1+\frac{\delta}{5},T_{11}^{(0)}(\delta;z))+n(\frac{\delta}{5},T_{\rm o}(z)-T_{11}^{(0)}(\delta;z)),\\
&& n(1+\frac{\delta}{5},T_{11}^{(0)}(\delta;z))=n(1+\frac{\delta}{5},T(\delta;z))\leq n(1,T(z))+n(\frac{\delta}{5},T(\delta;z)-T(z)).
\end{eqnarray*}
According to Lemmas~\ref{T(0) compact}$-$\ref{S compact} we get the inequalities
\begin{eqnarray*}
&& n(\frac{\delta}{5},S(z)-T_4(z))<\infty
,\,n(\frac{\delta}{5},T_4(z)-T_3(z))<\infty,\,
n(\frac{\delta}{5},T_1(z)-T_{\rm o}(z))<\infty,\\
&& n(\frac{\delta}{5},T_{\rm o}(z)-T_{11}^{(0)}(\delta;z))<\infty,\quad
n(\frac{\delta}{5},T(\delta;z)-T(z))<\infty
\end{eqnarray*}
for any $z \in (-\delta, 0].$ Then
$
n(1+\rho,S(z))\leq C+ n(1,T(z)),
$
where $C>0$ does not depend of $z \in
(-\delta, 0].$ Hence by Lemma~\ref{n(1,S)} we obtain the proof of the
theorem.
\end{proof}

{\bf Case II: Finiteness.} Let the parameter functions $v_1(\cdot),$ $w_1(\cdot)$ and $w_2(\cdot,\cdot)$
have the form
\begin{eqnarray*}
 v_1(x)=\sqrt{\mu} (1-\cos x),\quad \mu>0;\quad
 w_1(x)=2-\cos x;\quad
 w_2(x,y)=2-\cos x-\cos y.
\end{eqnarray*}

Then the function $w_2(\cdot,\cdot)$ has a unique non-degenerate global zero
minimum ($m=0$) at the point $(0,0) \in {\Bbb T}^2$
and $v_1(0)=0.$ It is easy to see that for
$$
\Delta(x\,; z)=2-\cos x-z-\mu \int_0^\pi \frac{(1-\cos t)^2\, dt}{2-\cos x-\cos t-z}
$$
we have $\Delta(0\,; 0)=0$ if and only if $\mu=1/\pi.$

\begin{lemma}\label{decomp of Delta1}
Let $\mu=1/\pi.$ Then there exist the numbers $C_1, C_2>0$ and $\delta>0$ such that
$$
C_1 x^2 \leq \Delta(x\,; 0) \leq C_2 x^2, \quad x \in U_\delta(0).
$$
\end{lemma}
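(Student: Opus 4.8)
The plan is to show that the normalization $\mu=1/\pi$ makes the constant terms in $\Delta(x;0)$ cancel, leaving a quantity comparable to $1-\cos x\asymp x^{2}$.

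\textbf{Step 1: rewriting $\Delta(x;0)$.} Setting $\varepsilon(s):=1-\cos s$, one has $2-\cos x=1+\varepsilon(x)$ and $2-\cos x-\cos t=\varepsilon(x)+\varepsilon(t)$, so that
$$
\Delta(x;0)=1+\varepsilon(x)-\mu\int_0^\pi\frac{\varepsilon(t)^2\,dt}{\varepsilon(x)+\varepsilon(t)}.
$$
I would then split the integrand by the partial-fraction identity $\dfrac{\varepsilon(t)^2}{\varepsilon(x)+\varepsilon(t)}=\varepsilon(t)-\dfrac{\varepsilon(x)\,\varepsilon(t)}{\varepsilon(x)+\varepsilon(t)}$ and use $\int_0^\pi\varepsilon(t)\,dt=\pi$ together with $\mu\pi=1$ to obtain
$$
\Delta(x;0)=\varepsilon(x)\Bigl(1+\frac1\pi\int_0^\pi\frac{\varepsilon(t)\,dt}{\varepsilon(x)+\varepsilon(t)}\Bigr).
$$
All integrals here converge absolutely, since $\varepsilon(x)+\varepsilon(t)\ge\varepsilon(x)>0$ for $x\neq0$.

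\textbf{Step 2: two-sided bound.} From $0\le\varepsilon(t)/(\varepsilon(x)+\varepsilon(t))\le1$ for all $t$, the bracketed factor lies in $[1,2]$ for every $x$, hence $\varepsilon(x)\le\Delta(x;0)\le2\,\varepsilon(x)$. (If a precise limit is wanted, dominated convergence shows the factor tends to $2$ as $x\to0$, but the crude bound already suffices.) Since $x\mapsto(1-\cos x)/x^{2}$ is continuous and positive on a punctured neighborhood of $0$ and extends continuously to the value $1/2$ at $x=0$, there are $\delta>0$ and $c_{1},c_{2}>0$ with $c_{1}x^{2}\le\varepsilon(x)\le c_{2}x^{2}$ on $U_\delta(0)$. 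Combining the two estimates yields $C_{1}x^{2}\le\Delta(x;0)\le C_{2}x^{2}$ on $U_\delta(0)$ with $C_{1}=c_{1}$, $C_{2}=2c_{2}$; the case $x=0$ is trivial since both sides vanish.

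There is essentially no obstacle: the entire content of the lemma is the single partial-fraction identity of Step~1, together with the observation that $\mu=1/\pi$ is exactly the value at which the leading constants cancel — which is, of course, also the reason $\Delta(0;0)=0$ for this $\mu$. The only minor care needed is to restrict to a neighborhood of $0$ small enough that the elementary comparison $1-\cos x\asymp x^{2}$ holds with explicit constants.
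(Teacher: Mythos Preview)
Your proof is correct and in fact cleaner than the paper's. The paper proceeds asymptotically: it splits the integral defining $\Delta(x;0)$ into a piece over $(\delta,\pi)$ and a piece over $(0,\delta)$, Taylor-expands the cosine in the second piece, reduces to a rational integral in $t$, and evaluates it via the identity $\int_0^\delta t^4/(x^2+t^2)\,dt=\delta^3/3-\delta x^2+x^4\int_0^\delta dt/(x^2+t^2)$ together with arctangent asymptotics. This yields $\Delta(x;0)=\Delta(0;0)+O(x^2)=O(x^2)$, from which the two-sided estimate is asserted (the positivity of the leading $x^2$-coefficient is left implicit).

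Your route is genuinely different: the exact partial-fraction identity $\varepsilon(t)^2/(\varepsilon(x)+\varepsilon(t))=\varepsilon(t)-\varepsilon(x)\varepsilon(t)/(\varepsilon(x)+\varepsilon(t))$ lets you factor $\varepsilon(x)$ out of $\Delta(x;0)$ globally, not just asymptotically, and the remaining factor is manifestly in $[1,2]$ for every $x\neq 0$. This buys you the lower bound for free and with an explicit constant, whereas the paper's expansion argument, as written, only makes the upper bound transparent. Your argument also exploits the special form $w_2(x,y)=\varepsilon(x)+\varepsilon(y)$ more fully; the paper's method is closer in spirit to the analogous Lemma~\ref{decomp of Delta} in Case~I, where no such clean factorization is available.
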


\begin{proof}
Let $\delta>0$ be sufficiently small. We rewrite the function $\Delta(\cdot\,; 0)$
in the form
$
\Delta(x\,; 0)=\Delta_1(x)+\Delta_2(x),
$
where
\begin{eqnarray*}
 \Delta_1(x):=2-\cos x-\mu \int_\delta^\pi \frac{(1-\cos t)^2\, dt}{2-\cos x-\cos t-z},\,\,
 \Delta_2(x\,; z):=-\mu \int_0^\delta \frac{(1-\cos t)^2\, dt}{2-\cos x-\cos t-z}.
\end{eqnarray*}
Since $\Delta_1(\cdot)$ is an even analytic function on ${\Bbb T},$ we have
\begin{eqnarray}
\Delta_1(x)=\Delta_1(0)+O(x^2)
\label{decomp for Delta11}
\end{eqnarray}
as $x \to 0.$
Using the expansion (\ref{decomp for sin and cos}) for $1-\cos x$
we obtain
$$
\Delta_2(x):=-\frac{\mu}{4} \int_0^\delta \frac{t^4 dt}{x^2+t^2}+O(x^2)
$$
as $x \to 0.$
Since
$$
\int_0^\delta \frac{t^4 dt}{x^2+t^2}=\frac{\delta^3}{3}-\delta x^2+x^4\int_0^\delta \frac{dt}{x^2+t^2},
$$
by the properties (\ref{arctan 1}) and (\ref{arctan 2}) we have
\begin{eqnarray}
\Delta_2(x)=\Delta_2(0)+O(x^2)
\label{decomp for Delta12}
\end{eqnarray}
as $x \to 0.$ Recall that if $\mu=1/\pi,$ then $\Delta(0\,; 0)=0.$
Now, taking into account the equalities (\ref{decomp for Delta11}) and (\ref{decomp for Delta12})
we obtain $\Delta(x\,; 0)=O(x^2)$ as $x \to 0,$ which implies that there exist $C_1, C_2>0$ and $\delta>0$
such that the assertion of lemma holds.
\end{proof}

\begin{lemma}\label{T(z) compact}
Let $\mu=1/\pi.$ For any $z \leq 0$ the operator $T(z)$ is
compact and continuous on the left up to $z=0.$
\end{lemma}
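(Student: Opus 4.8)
The plan is to follow the scheme used for Lemma~\ref{LEM 5} in the generic case, but to replace the crude lower bound $\Delta(x\,;z)\ge C_1>0$ (which now fails, since $\Delta(0\,;0)=0$) by the sharper estimate of Lemma~\ref{decomp of Delta1}. Since the corner entries $T_{00}(z)$, $T_{01}(z)$, $T_{01}^{*}(z)$ are of rank at most one and manifestly norm-continuous in $z\le 0$, it is enough to show that the integral operator $T_{11}(z)$, with kernel
$$
T_{11}(z;x,y)=\frac{v_1(x)\,v_1(y)}{2\sqrt{\Delta(x\,;z)}\,(w_2(x,y)-z)\,\sqrt{\Delta(y\,;z)}},
$$
is Hilbert--Schmidt for every $z\le 0$ and that $z\mapsto T_{11}(z)$ is norm-continuous on $(-\infty,0]$.

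The key step is a pointwise kernel bound uniform in $z\le 0$. Fix $\delta>0$ so small that Lemmas~\ref{LEM 7} and \ref{decomp of Delta1} apply. Since $z\le 0=m$ and $\Delta(x\,;\cdot)$ is decreasing on $(-\infty,m]$, we have $\Delta(x\,;z)\ge\Delta(x\,;0)$ for all $x$; moreover, bounding the integral defining $\Delta(x\,;0)$ by $2-\cos x-\cos t\ge 1-\cos t$ gives $\Delta(x\,;0)\ge 1-\cos x$ for every $x\in{\Bbb T}$. Combining this with $v_1(x)=\sqrt{\mu}\,(1-\cos x)$, with Lemma~\ref{decomp of Delta1} on $U_\delta(0)$, and with $1-\cos x\ge 1-\cos\delta>0$ for $|x|\ge\delta$, one gets a constant $K$ with $|v_1(x)|/\sqrt{\Delta(x\,;z)}\le K$ for all $x\in{\Bbb T}$, $z\le 0$, and in addition $|v_1(x)|/\sqrt{\Delta(x\,;z)}\le K|x|$ on $U_\delta(0)$. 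Now split ${\Bbb T}^2$: on $U_\delta(0)\times U_\delta(0)$, Lemma~\ref{LEM 7}(i) gives $w_2(x,y)-z\ge w_2(x,y)-m\ge C'(x^2+y^2)$, hence
$$
|T_{11}(z;x,y)|=\frac{1}{2}\,\frac{|v_1(x)|}{\sqrt{\Delta(x\,;z)}}\,\frac{|v_1(y)|}{\sqrt{\Delta(y\,;z)}}\,\frac{1}{w_2(x,y)-z}\le\frac{K^2}{2C'}\,\frac{|x|\,|y|}{x^2+y^2}\le\frac{K^2}{4C'};
$$
on the complement, Lemma~\ref{LEM 7}(ii) gives $w_2(x,y)-z\ge C_3>0$, hence $|T_{11}(z;x,y)|\le K^2/(2C_3)$. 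Thus there is $M<\infty$ with $|T_{11}(z;x,y)|\le M$ for a.e.\ $(x,y)\in{\Bbb T}^2$ and all $z\le 0$; in particular $T_{11}(z;\cdot,\cdot)\in L_2({\Bbb T}^2)$, so $T_{11}(z)$ is Hilbert--Schmidt and $T(z)$ is compact.

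For the continuity: for $z_0\le 0$ and $z\to z_0-0$ one has $\Delta(x\,;z)\to\Delta(x\,;z_0)$ and $w_2(x,y)-z\to w_2(x,y)-z_0$, so $T_{11}(z;x,y)\to T_{11}(z_0;x,y)$ for a.e.\ $(x,y)$, while $|T_{11}(z;x,y)-T_{11}(z_0;x,y)|\le 2M$. Since ${\Bbb T}^2$ has finite measure, the Lebesgue dominated convergence theorem yields $\|T_{11}(z)-T_{11}(z_0)\|_{\rm HS}\to 0$, hence norm convergence; together with the norm continuity of the corner entries this gives norm continuity of $T(\cdot)$ on $(-\infty,0]$, in particular left continuity at $z=0$.

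The only delicate point is the estimate near the origin. There $(w_2(x,y)-z)^{-1}$ behaves like $(x^2+y^2)^{-1}$ and each $\Delta(x\,;z)^{-1/2}$ may grow like $|x|^{-1}$; the product stays bounded only because $v_1$ vanishes to second order at $0$, so that $|v_1(x)|/\sqrt{\Delta(x\,;z)}\lesssim|x|$ and the numerator $|v_1(x)v_1(y)|\lesssim x^2y^2$ exactly absorbs $|x|^{-1}|y|^{-1}(x^2+y^2)^{-1}$. This is precisely where Lemma~\ref{decomp of Delta1} (that is, $\mu=1/\pi$) enters, and it is what separates Case~II from Case~I: in Case~I $v_1$ vanishes only to first order, $\Delta(x\,;0)$ only like $|x|$, and the analogous kernel fails to be square-integrable — consistent with the infiniteness of the discrete spectrum proved there.
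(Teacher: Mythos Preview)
Your proof is correct and follows essentially the same route as the paper's: both arguments bound the kernel of $T_{11}(z)$ by a function of the form $C\bigl(1+\chi_\delta(x)\chi_\delta(y)\,|x||y|/(x^2+y^2)\bigr)$, using $v_1(x)\sim x^2$ together with the quadratic lower bound $\Delta(x\,;0)\ge C_1x^2$ from Lemma~\ref{decomp of Delta1}, and then invoke dominated convergence for norm continuity. Your additional global inequality $\Delta(x\,;0)\ge 1-\cos x$ is a clean touch, but the overall strategy and the key estimates match the paper's proof.
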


\begin{proof}
Let $\mu=1/\pi.$ Denote by $Q(z;x,y)$ the kernel of the
integral operator $T_{11}(z),$ $z<0,$ that is,
$$
Q(z;x,y):=\frac{v_1(x)v_1(y)}{2\sqrt{\Delta(x\,; z)}(w_2(x,y)-z) \sqrt{\Delta(y\,; z)}}.
$$

By  virtue of decomposition
(\ref{decomp for sin and cos}) and Lemma~\ref{decomp of Delta1} the kernel $Q(z;x,y)$ is estimated by the
square-integrable function
$$
C_1 \left(1+\frac{\chi_{\delta}(x) \chi_{\delta}(y) |x| |y|}{x^2+y^2} \right),
$$
defined on ${\Bbb T}^2,$ where $\chi_\delta(\cdot)$ is the characteristic function of
$(-\delta, \delta).$ Hence for
any $z\leq 0$ the operator $T_{11}(z)$ is Hilbert-Schmidt.

The kernel function of $T_{11}(z),$ $z<0$ is
continuous in $x,y\in {\Bbb T}.$ Therefore the continuity of the
operator $T_{11}(z)$ from the left up to $z=0$ follows
from Lebesgue's dominated convergence theorem.
Since for all $z \leq 0$ the operators $T_{00}(z),$
$T_{01}(z)$ and $T_{01}^*(z)$ are of rank
1 and continuous from the left up to $z=0$ one concludes that
$T(z)$ is compact and continuous from the left up
to $z=0.$
\end{proof}

Using Lemma \ref{T(z) compact} we can now proceed analogously to the proof of Theorem~\ref{THM 3}
to show the finiteness of the negative discrete spectrum of $H.$

\section{Application}

In this section we investigate the spectrum of ${\cal A}_2,$
introduced in Section~1 applying the results for $H.$ We recall that
the operator ${\cal A}_2$ has a $3 \times 3$ tridiagonal block operator matrix representation
$$
{\cal A}_2:=\left( \begin{array}{ccc}
{\cal A}_{00} & {\cal A}_{01} & 0\\
{\cal A}_{01}^* & {\cal A}_{11} & {\cal A}_{12}\\
0 & {\cal A}_{12}^* & {\cal A}_{22}\\
\end{array}
\right),
$$
where matrix elements $A_{ij}$ are defined by
\begin{eqnarray*}
&& {\cal A}_{00}f_0^{(\sigma)}=\varepsilon \sigma f_0^{(\sigma)}, \quad {\cal A}_{01}f_1^{(\sigma)}=\alpha \int_{\Bbb T}
v(t)f_1^{(-\sigma)}(t)dt,\\
&& ({\cal A}_{11}f_1^{(\sigma)})(x)=(\varepsilon \sigma+w(x))f_1^{(\sigma)}(x),\quad
({\cal A}_{12}f_2^{(\sigma)})(x)= \alpha \int_{\Bbb T} v(t) f_2^{(-\sigma)}(x,t)dt,\\
&& ({\cal A}_{22}f_2^{(\sigma)})(x,y)=(\varepsilon \sigma+w(x)+w(y))f_2^{(\sigma)}(x,y), \quad
f=\{f_0^{(\sigma)},f_1^{(\sigma)},f_2^{(\sigma)}; \sigma=\pm\} \in {\mathcal L}_2.
\end{eqnarray*}

We make the following assumptions: $\varepsilon>0;$ the dispersion $w(\cdot)$ is an
analytic on ${\Bbb T}$ and has a unique zero minimum at the point $0 \in {\Bbb T};$
$v(\cdot)$ is a real-valued analytic function on ${\Bbb T};$ the coupling constant
$\alpha>0$ is an arbitrary.

Consider the following permutation operator
\begin{eqnarray*}
&&\Phi: {\mathcal L}_2 \to {\cal H} \oplus {\cal H}, \\
&&\Phi: (f_0^{(+)}, f_0^{(-)}, f_1^{(+)}, f_1^{(-)}, f_2^{(+)}, f_2^{(-)}) \to
(f_0^{(+)}, f_1^{(-)}, f_2^{(+)}, f_0^{(-)}, f_1^{(+)}, f_2^{(-)}).
\end{eqnarray*}

To investigate the spectral properties of ${\cal A}_2$ we introduce
the following two bounded self-adjoint operators ${\cal A}_2^{(\sigma)},$ $\sigma=\pm,$ which acts in ${\mathcal F}_{\rm s}^{(2)}(L_2({\Bbb T}))$ as
$$
{\cal A}_2^{(\sigma)}:=\left( \begin{array}{ccc}
\widehat{{\cal A}}_{00}^{(\sigma)} & \widehat{{\cal A}}_{01} & 0\\
\widehat{{\cal A}}_{01}^* & \widehat{{\cal A}}_{11}^{(\sigma)} & \widehat{{\cal A}}_{12}\\
0 & \widehat{{\cal A}}_{12}^* & \widehat{{\cal A}}_{22}^{(\sigma)}\\
\end{array}
\right)
$$
with the entries
\begin{eqnarray*}
&& \widehat{{\cal A}}_{00}^{(\sigma)}f_0=\varepsilon \sigma f_0, \quad \widehat{{\cal A}}_{01}f_1=\alpha \int_{\Bbb T}
v(t)f_1(t)dt,\\
&& (\widehat{{\cal A}}_{11}^{(\sigma)}f_1)(x)=(-\varepsilon \sigma+w(x))f_1(x),\quad
(\widehat{{\cal A}}_{12}f_2)(x)= \alpha \int_{\Bbb T} v(t) f_2(x,t)dt,\\
&& (\widehat{{\cal A}}_{22}^{(\sigma)}f_2)(x,y)=(\varepsilon \sigma+w(x)+w(y))f_2(x,y),\quad
(f_0,f_1,f_2) \in {\mathcal F}_{\rm s}^{(2)}(L_2({\Bbb T})).
\end{eqnarray*}

The definitions of the operators ${\cal A}_2,$ ${\cal A}_2^{(\sigma)}$ and $\Phi$ imply that
$$
\Phi {\cal A}_2 \Phi^{-1}={\rm diag} \{{\cal A}_2^{(+)}, {\cal A}_2^{(-)}\}.
$$

The following theorem describes the relation between spectra of ${\cal A}_2$
and ${\cal A}_2^{(\sigma)}.$

\begin{theorem}\label{spectra of N=2}
The equality $\sigma({\cal A}_2)=\sigma({\cal A}_2^{(+)}) \cup \sigma({\cal A}_2^{(-)})$
holds. Moreover,
\begin{eqnarray*}
\sigma_{\rm ess}({\cal A}_2)=\sigma_{\rm ess}({\cal A}_2^{(+)}) \cup \sigma_{\rm ess}({\cal A}_2^{(-)}),\quad
\sigma_{\rm p}({\cal A}_2)=\sigma_{\rm p}({\cal A}_2^{(+)}) \cup \sigma_{\rm p}({\cal A}_2^{(-)}).
\end{eqnarray*}
\end{theorem}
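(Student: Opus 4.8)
The plan is to deduce everything from the unitary equivalence recorded just above the statement, $\Phi \mathcal{A}_2 \Phi^{-1} = \mathrm{diag}\{\mathcal{A}_2^{(+)}, \mathcal{A}_2^{(-)}\}$, combined with two standard facts: unitarily equivalent operators share their spectrum, essential spectrum and point spectrum; and for an orthogonal direct sum $B = B_1 \oplus B_2$ of self-adjoint operators one has $\sigma(B) = \sigma(B_1) \cup \sigma(B_2)$, $\sigma_{\rm ess}(B) = \sigma_{\rm ess}(B_1) \cup \sigma_{\rm ess}(B_2)$ and $\sigma_{\rm p}(B) = \sigma_{\rm p}(B_1) \cup \sigma_{\rm p}(B_2)$. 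Granting these, the three asserted equalities follow immediately.

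First I would observe that $\Phi$ is unitary: writing each $\mathbb{C}^2$-valued component of an element of $\mathcal{L}_2$ as a pair indexed by $\sigma = \pm$ identifies $\mathcal{L}_2$ with $\mathcal{H}_0 \oplus \mathcal{H}_1 \oplus \mathcal{H}_2 \oplus \mathcal{H}_0 \oplus \mathcal{H}_1 \oplus \mathcal{H}_2$, and $\Phi$ merely permutes these six orthogonal summands, so it is a bijective isometry with $\Phi^{*} = \Phi^{-1}$ equal to the inverse permutation. Next I would verify the intertwining identity entrywise. The diagonal blocks $\mathcal{A}_{00}, \mathcal{A}_{11}, \mathcal{A}_{22}$ act as multiplication by $\varepsilon\sigma$ (plus the dispersion terms), so after the relabeling that routes the triple $(f_0^{(+)}, f_1^{(-)}, f_2^{(+)})$ into the first copy of $\mathcal{F}_{\rm s}^{(2)}$ and $(f_0^{(-)}, f_1^{(+)}, f_2^{(-)})$ into the second, they reproduce exactly the sign patterns $(\varepsilon, -\varepsilon, \varepsilon)$ and $(-\varepsilon, \varepsilon, -\varepsilon)$ built into $\widehat{\mathcal{A}}^{(+)}$ and $\widehat{\mathcal{A}}^{(-)}$; the off-diagonal couplings $\mathcal{A}_{01}$ and $\mathcal{A}_{12}$ flip $\sigma$, and that spin flip is precisely what $\Phi$ absorbs, leaving the spin-independent integral operators $\widehat{\mathcal{A}}_{01}$ and $\widehat{\mathcal{A}}_{12}$. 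This is the only computational step, and it is exactly the identity asserted in the line preceding the theorem; I would include the short check.

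Given the unitary equivalence, $\sigma(\mathcal{A}_2) = \sigma(\mathrm{diag}\{\mathcal{A}_2^{(+)}, \mathcal{A}_2^{(-)}\})$, and likewise for $\sigma_{\rm ess}$ and $\sigma_{\rm p}$. It then remains to invoke the direct-sum identities for $B = \mathcal{A}_2^{(+)} \oplus \mathcal{A}_2^{(-)}$. For the full spectrum, $z \notin \sigma(\mathcal{A}_2^{(+)}) \cup \sigma(\mathcal{A}_2^{(-)})$ iff both $(\mathcal{A}_2^{(\sigma)} - z)^{-1}$ exist and are bounded iff $(B-z)^{-1} = (\mathcal{A}_2^{(+)}-z)^{-1} \oplus (\mathcal{A}_2^{(-)}-z)^{-1}$ exists and is bounded. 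For the point spectrum, an eigenvector of $B$ decomposes into two components, at least one a nonzero eigenvector of the corresponding $\mathcal{A}_2^{(\sigma)}$, and conversely any eigenvector of a summand embeds as one for $B$. For the essential spectrum I would use Weyl's criterion: a singular sequence $(v_n,w_n)$ for $B$ at $\lambda$ satisfies $\|v_n\|^2+\|w_n\|^2=1$ and $\|(\mathcal{A}_2^{(+)}-\lambda)v_n\|^2+\|(\mathcal{A}_2^{(-)}-\lambda)w_n\|^2\to 0$, so after passing to a subsequence one of the norms stays bounded away from $0$, and normalizing that component yields a singular sequence for the corresponding $\mathcal{A}_2^{(\sigma)}$; the reverse inclusion is trivial by embedding. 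Combining these with the unitary equivalence gives $\sigma(\mathcal{A}_2)=\sigma(\mathcal{A}_2^{(+)})\cup\sigma(\mathcal{A}_2^{(-)})$ and the analogous equalities for $\sigma_{\rm ess}$ and $\sigma_{\rm p}$.

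I do not expect any genuine obstacle; the argument is routine once $\Phi$ is recognised as a unitary that block-diagonalises $\mathcal{A}_2$. The only point needing a little care is the bookkeeping in the intertwining identity — tracking which $\sigma$-component is sent to which slot of $\mathcal{A}_2^{(+)}$ versus $\mathcal{A}_2^{(-)}$ and confirming the induced signs of the $\varepsilon\sigma$ terms — but this is purely mechanical.
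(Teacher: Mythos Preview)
Your proposal is correct and matches the paper's approach: the paper states the unitary equivalence $\Phi\,{\cal A}_2\,\Phi^{-1}=\mathrm{diag}\{{\cal A}_2^{(+)},{\cal A}_2^{(-)}\}$ immediately before the theorem and then records the theorem without a written proof, treating it as an immediate consequence. Your write-up simply spells out the standard facts (unitary invariance of the three spectra, and the direct-sum identities) that the paper leaves implicit.
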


\begin{remark}
Since the part of $\sigma_{\rm disc}({\cal A}_2^{(\sigma)})$ can be located in $\sigma_{\rm ess}({\cal A}_2)$ we have the inclusion
\begin{eqnarray}
\sigma_{\rm disc}({\cal A}_2) \subseteq \sigma_{\rm disc}({\cal A}_2^{(+)}) \cup \sigma_{\rm disc}({\cal A}_2^{(-)}).
\label{inclusion}
\end{eqnarray}
\end{remark}

To describe the location of the essential spectrum of ${\cal A}_2$
we introduce the following two families of bounded self-adjoint operators  $h^{(\sigma)}(x),$ $x \in {\Bbb T},$
which acts in ${\mathcal F}_{\rm s}^{(1)}(L_2({\Bbb T}))$ as
$$
h^{(\sigma)}(x):=\left( \begin{array}{cc}
h_{00}^{(\sigma)}(x) & h_{01}\\
h_{01}^* & h_{11}^{(\sigma)}(x)\\
\end{array}
\right),
$$
where
\begin{eqnarray*}
&&h_{00}^{(\sigma)}(x)f_0=(-\sigma\varepsilon+w(x))f_0,\quad h_{01}f_1=\frac{\alpha}{\sqrt{2}}
\int_{\Bbb T} v(t) f_1(t)dt,\\
&&(h_{11}^{(\sigma)}(x)f_1)(y)=(\sigma\varepsilon+w(x)+w(y))f_1(y),\quad (f_0,f_1) \in {\mathcal F}_{\rm s}^{(1)}(L_2({\Bbb T})).
\end{eqnarray*}


By Theorem \ref{THM 1} for the essential spectrum of ${\cal A}_2^{(\sigma)}$
the following equality holds
$$
\sigma_{\rm ess}({\cal A}_2^{(\sigma)})=\bigcup\limits_{x \in {\Bbb T}} \sigma_{\rm disc}(h^{(\sigma)}(x))
\cup [\sigma\varepsilon, 2M_w+\sigma\varepsilon],\quad
M_w:= \max\limits_{x\in {\Bbb T}} w(x).
$$
Now taking into account last equality we obtain from Theorem~\ref{spectra of N=2} that
$$
\sigma_{\rm ess}({\cal A}_2)=\bigcup\limits_{\sigma=\pm}\bigcup\limits_{x \in {\Bbb T}} \sigma_{\rm disc}(h^{(\sigma)}(x))
\cup [-\varepsilon, 2M_w-\varepsilon] \cup [\varepsilon, 2M_w+\varepsilon].
$$

To estimate the lower bound of the essential spectrum of ${\cal A}_2,$
for any $x\in {\Bbb T}$ we define the Fredholm determinant $\Delta^{(\sigma)}(x\,; \cdot):$
$$
\Delta^{(\sigma)}(x\,; z):=-\sigma\varepsilon+w(x)-z-\frac{\alpha^2}{2} \int_{\Bbb T}
\frac{v^2(t)dt}{\sigma\varepsilon+w(x)+w(t)-z}
$$
in ${\Bbb C} \setminus [\sigma\varepsilon+w(x), M_w+\sigma\varepsilon+w(x)],$ associated with the
operator $h^{(\sigma)}(x).$
By Lemma~\ref{LEM 1} for the discrete spectrum of $h^{(\sigma)}(x)$ the equality
$$
\sigma_{\rm disc}(h^{(\sigma)}(x))=\{z \in {\Bbb C} \setminus [\sigma\varepsilon+w(x), M_w+\sigma\varepsilon+w(x)]:\,
\Delta^{(\sigma)}(x\,; z)=0 \}
$$
holds. By definition of $\Delta^{(\sigma)}(\cdot\,; \cdot)$ we have
$$
\min\limits_{x \in {\Bbb T}}\Delta^{(\sigma)}(x\,; z)=-\varepsilon\sigma-z-\frac{\alpha^2}{2} \int_{\Bbb T}
\frac{v^2(t)dt}{\sigma\varepsilon+w(t)-z}, \quad z \leq \sigma\varepsilon.
$$
If we set $E_{\rm min}^{(\sigma)}:=\min \sigma_{\rm ess}({\cal A}_2^{(\sigma)}),$
then Theorem~\ref{spectra of N=2} implies
$$
E_{\rm min}:=\min \sigma_{\rm ess}({\cal A}_2)=\min \{E_{\rm min}^{(-)}, E_{\rm min}^{(+)}\}.
$$
It is clear that $\min\limits_{x \in {\Bbb T}}\Delta^{(+)}(x\,; -\varepsilon)<0$ for all $\alpha>0$
and hence $E_{\min} \leq E_{\min}^{(+)}<-\varepsilon.$

Now we study the lower bound of the essential spectrum of ${\cal A}_2^{(-)}.$ If
$$
\int_{\Bbb T}\frac{v^2(t)dt}{w(t)}=\infty,
$$
then again $\min\limits_{x \in {\Bbb T}}\Delta^{(-)}(x\,; -\varepsilon)<0$ for all $\alpha>0,$ that is,
$E_{\min} \leq E_{\min}^{(-)}<-\varepsilon.$ If
$$
\int_{\Bbb T}\frac{v^2(t)dt}{w(t)}<\infty,
$$
then
\begin{eqnarray*}
&&\min\limits_{x \in {\Bbb T}}\Delta^{(-)}(x\,; -\varepsilon)<0 \Leftrightarrow \alpha>\alpha_0:=2\sqrt{\varepsilon}
\Bigl( \int_{\Bbb T}\frac{v^2(t)dt}{w(t)} \Bigr)^{-1/2};\\
&&\min\limits_{x \in {\Bbb T}}\Delta^{(-)}(x\,; -\varepsilon) \geq 0 \Leftrightarrow \alpha \leq \alpha_0.
\end{eqnarray*}
So, by Theorem~\ref{THM 2} we obtain $E_{\rm min}^{(-)}<-\varepsilon$ for all $\alpha>\alpha_0$
and $E_{\rm min}^{(-)}=-\varepsilon$ for all $\alpha \leq \alpha_0.$

The above analysis leads to $E_{\rm min}<-\varepsilon$ for all $\alpha>0.$

Since the parameter functions of ${\cal A}_2^{(+)}$ satisfy the conditions of Theorem~\ref{THM 3},
it has finitely many eigenvalues smaller than $E_{\rm min}^{(+)}$ for all $\alpha>0.$
Similarly for any $\alpha>\alpha_0$ the operator ${\cal A}_2^{(-)}$ has a finitely many eigenvalues smaller than $E_{\rm min}^{(-)}.$
For the case $\alpha \leq \alpha_0$ we have $E_{\rm min}^{(-)}=-\varepsilon$ and $E_{\rm min}<E_{\rm min}^{(-)}.$ Hence
the operator ${\cal A}_2^{(-)}$ has a finitely many eigenvalues smaller than $E_{\rm min}.$
Now by the inclusion (\ref{inclusion}) we conclude that
the operator ${\cal A}_2$ has a finitely many eigenvalues smaller than $E_{\rm min}.$

\section*{Acknowledgment}
This work was supported by the IMU Einstein Foundation Program.
T. H. Rasulov wishes to thank the Berlin Mathematical School and Weierstrass Institute
for Applied Analysis and Stochastics for the invitation and hospitality.

\end{document}